\newtheorem{theorem}{Theorem}[section]
\newtheorem{lemma}[theorem]{Lemma}
\newtheorem{fact}[theorem]{Fact}
\newtheorem{proposition}[theorem]{Proposition}
\newtheorem{corollary}[theorem]{Corollary}
\newtheorem{result}[theorem]{Result}
\theoremstyle{definition}
\theoremstyle{definition}
\newtheorem{construction}[theorem]{Construction}
\theoremstyle{remark}
\newtheorem{remark}[theorem]{Remark}
\numberwithin{figure}{section}
\numberwithin{table}{section}
\begin{document}

\title[Fatgraph Models of Proteins]
{Fatgraph models of proteins}

\author{R. C. Penner}
\address{Departments of Mathematics and Physics/Astronomy\\
University of Southern California\\
Los Angeles, CA 90089\\
USA\\
~{\rm and}~Center for the Topology and Quantization of Moduli Spaces\\
Department of Mathematics\\
Aarhus University\\
DK-8000 Aarhus C, Denmark\\}
\email{rpenner{\char'100}math.usc.edu}

\author{Michael Knudsen}
\address{Bioinformatics Research Center\\
Aarhus University\\
DK-8000 Aarhus C, Denmark\\}
\email{micknudsen{\char'100}gmail.com}

\author{Carsten Wiuf}
\address{Bioinformatics Research Center\\
and Centre for Membrane Pumps in Cells and Disease--PUMPKIN\\
Danish National Research Foundation\\
Aarhus University\\
DK-8000 Aarhus C, Denmark\\}
\email{wiuf{\char'100}birc.au.dk}

\author{J{\o}rgen Ellegaard Andersen}
\address{Center for the Topology and Quantization of Moduli Spaces\\
Department of Mathematics\\
Aarhus University\\
DK-8000 Aarhus C, Denmark\\}
\email{andersen{\char'100}imf.au.dk}

\keywords{protein descriptors, protein structure, fatgraphs}

\thanks{It is a pleasure for RCP to thank
the Preuss Foundation for funding an exploratory conference in 2005
at the University of Southern California,
immediately after which a precursor of the model described here was derived,
and to thank the participants of that conference, Ken Dill, Soren Istrail, Hubert Saleur,
Arieh Warshel, Michael Waterman, and especially Alexei Finkelstein, for
stimulating and provocative comments.}

\thanks{MK is supported by the Center for Theory in the Natural Sciences at Aarhus University.}

\thanks{CW is partially supported by the Danish Research Councils.}

\thanks{Excellent computer programming assistance was provided by  Piotr Karasinski
funded by the Center for Theory in Natural Sciences, Aarhus University, and the Danish Ministry
of Science, Technology, and Innovation.}

\thanks{The methods disclosed in this paper are protected by the U.S. provisional patent filing 61/077,277 (July 1 2008) and the Danish priority
application PA 2008 01009 (July 17 2008).}

\begin{abstract} We introduce a new model of proteins, which extends and enhances the
traditional graphical representation by associating a combinatorial object called a fatgraph to any protein based upon its intrinsic geometry.  Fatgraphs can 
easily be stored and manipulated as triples of permutations, and these methods are therefore amenable to fast computer implementation.  Applications include the 
refinement of structural protein classifications and the prediction of geometric and other properties of proteins from their chemical structures.
\end{abstract}

\maketitle

\section*{Introduction}

A ``fatgraph'' $G$ is a graph in the usual sense of the term together with cyclic orderings on the half-edges about each vertex (cf.\ 
Section~\ref{fatgraphsequiv} for the precise definition).  They arose in mathematics \cite{Penner88} as the combinatorial objects indexing orbi-cells in a 
certain decomposition of Riemann's moduli space \cite{Penner88,Strebel} and in physics \cite{BIZ,Hooft} as index sets for the large $N$ limit of certain matrix 
models.  A basic geometric point is that a fatgraph $G$ uniquely determines a corresponding surface $F(G)$ with boundary which contains  $G$ as a deformation 
retract.  Fatgraphs have already proved useful in geometry \cite{Harer-Zagier,Igusa,Mondello,Penner88}, in theoretical physics \cite{Brezinetal,Kontsevich92}, 
and in modeling RNA secondary structures \cite{Penner-Waterman}, for example.

A ``protein'' $P$ is a linear polymer of amino acids (cf.\ Section~\ref{polypeptides} for more precision), and their study is a central theme in contemporary 
biophysics
\cite{Folding,Finkelstein}.  Our main achievement in this paper is to introduce a model of proteins which naturally associates a fatgraph $G(P)$ to a protein 
$P$ based upon the spatial locations of its constituent atoms.  The idea is that
the protein is roughly described geometrically as the concatenation of a sequence of planar polygons called ``peptide units'' meeting at tetrahedral angles at 
pairs of vertices and twisted by pairs of dihedral angles
between the polygons. To each peptide unit, we associate a positively oriented orthonormal 3-frame and a fatgraph building block, and we concatenate these 
building blocks using these 3-frames in a manner naturally determined by the geometry of the Lie group $SO(3)$.  There are furthermore ``hydrogen bonds''  
between atoms contained in the peptide units, and these
are modeled by including further edges connecting the building blocks so as to determine a well-defined fatgraph $G(P)$ from $P$.  Thus, the fatgraph $G(P)$ 
derived from the protein $P$
captures the geometry of the protein ``backbone'' and the geometry and combinatorics of the hydrogen bonding along the backbone; elaborations of this basic 
model are also described
which capture further aspects of protein structure.

The key point is that topological or geometric properties of the fatgraph $G(P)$ can be taken as properties or ``decriptors'' of the protein $P$ itself.
A fundamental aspect not usually relevant in applying fatgraphs is that this construction of $G(P)$ is based on actual experimental data about $P$ in which
there are uncertainties and sometimes errors as well.  Furthermore, the notion that the protein $P$ is comprised of atoms at fixed relative spatial locations, 
which is the basic input to our model,
is itself a biological idealization of the reality that a given protein at equilibrium may have several closely related co-existing geometric incarnations.
In order that the protein descriptors arising from fatgraphs are meaningful characteristics of proteins in light of these remarks, we shall be forced to go 
beyond the usual situation and consider fatgraphs $G$ whose corresponding surfaces $F(G)$ are non-orientable.
This is easily achieved combinatorially by including in the definition of a fatgraph also a coloring of its edges by a set with two elements.

The desired result of ``robust'' protein descriptors, i.e., properties of $G(P)$ that do not change much under small changes in the relative spatial locations 
of the atoms constituting $P$, is a key attribute
of our construction; for example, the number of boundary components and the Euler characteristic of $F(G(P))$ are such robust invariants, and we give a plethora of further 
numerical and non-numerical examples.  Another key point of our construction rests on the fact that biophysicists {\sl already} often associate a graph to a 
protein $P$ based upon its hydrogen and chemical bonding, and our model succeeds in
reproducing this usual graphical depiction of a protein but now with its enhanced structure as a fatgraph $G(P)$, i.e., the graph underlying $G(P)$ is the one
usually associated to $P$ in biophysics.  Furthermore, an important practical point is that fatgraphs can be conveniently stored and manipulated on the computer 
as triples of permutations.

Since this is a math paper whose central purpose is to introduce fatgraph models of proteins, we shall not dwell on biophysical applications but nevertheless 
feel compelled to include here
several such applications as follows.  Certain proteins decompose naturally into ``domains'', roughly 115,000 of which have so far been determined 
experimentally and categorized into
several thousand classes, cf. \cite{CATH,SCOP, DALI,PFAM}.   Our most basic robust descriptors of a domain $P$ are given by the topological types of the surface 
$F(G(P))$ computed with various thresholds of potential energy imposed on the hydrogen bonds (see Section \ref{generalmodel} for details).
We show here that the topological types of $F(G(P))$ for several  such potential energy thresholds uniquely determine $P$ among all known protein globules.
Other such ``injectivity results'' for globules based on various robust protein descriptors are also presented.

This paper is organized as follows.  Section~\ref{polypeptides} introduces an abstract definition of  ``polypeptides'', which give a precise mathematical 
formulation of the biophysics of a protein required for our model; a more detailed discussion of proteins from first principles is given in the beautiful book 
\cite{Finkelstein}, which we heartily recommend.  Section~\ref{fatgraphs} introduces the notion of fatgraphs required here, whose corresponding surfaces may be 
non-orientable, and contains basic results about them.  In particular, a number of results, algorithms, and constructions are presented showing that our methods 
are amenable to fast computer implementation.

Section~\ref{model} is the heart of the paper and describes the fatgraph associated to a polypeptide structure in detail.  Background on $SO(3)$ graph 
connections
is given in Section~\ref{connections}, and this is applied in Section~\ref{backbonesec}, where we explain how the fatgraph building blocks associated with 
peptide units are
concatenated.  Section~\ref{hydrogen} discusses the addition of edges corresponding to hydrogen bonds, thus completing the basic construction of the fatgraph 
model
of a polypeptide structure.  Section~\ref{generalmodel} discusses this basic model and its natural generalizations and extensions for proteins and beyond.  An 
alternative description of this model, which is more physically transparent but less mathematically tractable, is given in Appendix~\ref{alternative}, and the 
standard structural motifs of ``alpha helices'' and ``beta strands'' are discussed in this alternative model.

Robust invariants of fatgraphs are defined and studied in Section~\ref{protinvariants} providing countless meaningful new protein descriptors.  
Section~\ref{globules} gives the injectivity results
mentioned above after first discussing certain practical aspects of implementing our methods.   Finally, Section~\ref{closing} contains closing remarks 
including several further
biophysical applications of our methods which will appear in companions and sequels to this paper.

\section{Polypeptides}\label{polypeptides}

There are 20 {\it amino acids}\footnote {Strictly speaking, these 20 molecules are the ``standard gene-encoded'' amino acids, i.e., those amino acids determined 
from RNA via the genetic code; in fact, there are a few other non-standard gene-encoded amino acids which are relatively rare in nature and which we shall
ignore here.}, 19 of which have the similar basic chemical structure illustrated in Figure~\ref{fig:amino}a, where $H,C,N,O$ respectively denote hydrogen,
carbon, nitrogen, oxygen atoms, and the {\it residue} $R$ is one of 19 specific possible sub-molecules; the one further amino acid called {Proline} has the 
related chemical structure containing a ring $CCCCN$ of atoms illustrated in Figure~\ref{fig:amino}b.  The residue ranges from a single hydrogen atom for the 
amino acid called Glycine to a sub-molecule comprised of 19 atoms for the amino acid called Tryptophan.  All 20 amino acids
are composed exclusively of $H,C,N,O$ atoms, except for the amino acids called Cysteine and Metionine each of which also contains a single sulfur atom.

\begin{figure}[!h]
\begin{center}
\epsffile{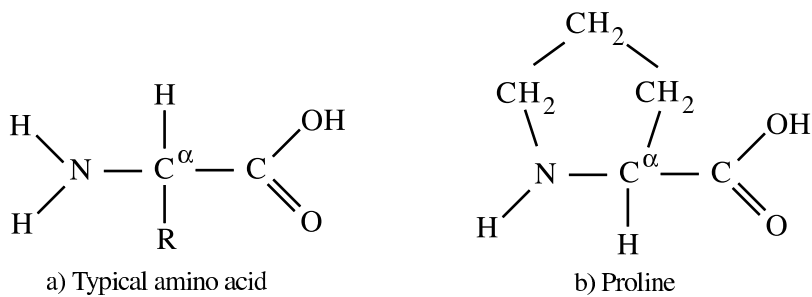}
\caption{Chemical structure of amino acids}
\label{fig:amino}
\end{center}
\end{figure}

In either case of Figure~\ref{fig:amino}, the sub-molecule $COOH$ depicted on the right-hand side is called the {\it carboxyl group}, and the $NH_2$ depicted on 
the left-hand side in Figure~\ref{fig:amino}a
or the $NHC$ on the left-hand side in Figure~\ref{fig:amino}b is called the {\it amine group}.  The carbon atom bonded to the carboxyl and amine groups is 
called the {\it alpha carbon atom} of the amino acid, and it is typically denoted $C^\alpha$.  The alpha carbon atom is bonded to exactly one further atom in 
the residue, either a hydrogen atom in Glycine or a carbon atom, called the {\it beta carbon atom}, in all other cases.

As illustrated in Figure~\ref{fig:polypeptide}, a sequence of $L$ amino acids can combine to form a {\it polypeptide}, where the carbon atom from the carboxyl 
group of $i$th amino acid forms a {\it peptide bond} with the nitrogen atom from the amine group of the $(i+1)$st amino acid  together with the resulting 
condensation of a water molecule comprised of an $OH$ from the carboxyl group of the former and an $H$ from the amine group of the latter, for $i=1, 2,\ldots 
,L-1$.  The nature of this peptide bond and the accuracy of the implied geometry of Figure~\ref{fig:polypeptide} will be discussed presently,
and the further notation in the figure will be explained later.

\begin{figure}[!h]
\begin{center}
\epsffile{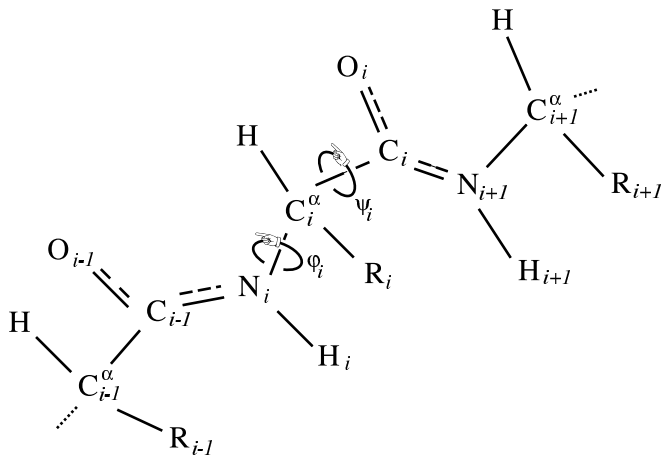}
\caption{A polypeptide}
\label{fig:polypeptide}
\end{center}
\end{figure}

The {\it primary structure} of a polypeptide is the ordered sequence $R_1, R_2,\ldots , R_L$ of residues or of amino acids occurring in this chain, i.e., a word 
in the 20-letter alphabet of amino acids of length $L$, which ranges in practice from $L=3$ to $L\approx 30,000$.    The carbon and nitrogen atoms which 
participate in the peptide bonds together with the alpha carbon atoms form the {\it backbone} of the polypeptide, which is described by 
$$N_1-C_1^\alpha-C_1-N_2-C_2^\alpha-C_2-\cdots -N_i-C_i^\alpha-C_i-\cdots-N_L-C_L^\alpha-C_L,$$ indicating the standard enumeration of atoms along the backbone.  
The first amine nitrogen atom and the the last carboxyl carbon atom, respectively, are called the $N$ and $C$ {\it termini} of the polypeptide.

The $i$th {\it peptide unit}, for $i=1,2,\ldots ,L-1$, is comprised of the consecutively bonded atoms $C_i^\alpha-C_i-N_{i+1}-C_{i+1}^\alpha$ in the backbone 
together with the oxygen atom $O_i$ from the carboxyl group bonded to $C_i$ and one further atom, namely, the remaining hydrogen atom $H_{i+1}$ of the amine 
group except for Proline, for which the further atom is the carbon preceding the nitrogen of the amine group in the Proline ring.

This describes the basic chemical structure of a polypeptide, where the further physico-chemical details about residues, for example, can be found in any 
standard text and will not concern us here.

There are several key geometrical facts about polypeptides as follows, where we refer to the center of mass of the Bohr model of a nucleus as the ``center'' of 
the atom and to the  line segment connecting the centers of two chemically bonded atoms as the ``bond axis''.

\begin{fact}\label{geometricfacts} For any polypeptide, there are the following geometric
constraints:

\smallskip

\noindent{\bf Fact A}~each peptide unit is planar, i.e., the centers of the six constituent atoms of the peptide unit lie in a plane, and furthermore, the 
angles between the bond axes in a peptide unit are always fixed at 120 degrees;

\smallskip

\noindent{\bf Fact B}~at each alpha carbon atom $C^\alpha_i$, the four bond axes (to hydrogen, $C_i, N_i$ and to the residue, i.e., to the hydrogen atom of Glycine or 
to the beta carbon atom in all other cases) are tetrahedral\footnote{Another geometric constraint on any gene-encoded protein is that when viewed along the bond axis from hydrogen to $C_i^\alpha$, the bond axes 
occur in the cycle ordering corresponding to $C_i$, residue, $N_i$.  This imposes various chiral constraints on proteins but plays no role in our basic fatgraph
model.};

\smallskip

\noindent{\bf Fact C}~in the plane of each peptide unit, the centers of the two alpha carbons occur on opposite sides of the line determined by the bond axis of 
the peptide bond, except occasionally for the peptide unit preceding Proline.

\end{fact}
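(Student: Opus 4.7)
The plan is to treat Fact~\ref{geometricfacts} not as a statement to be deduced from prior mathematical axioms but as an empirical/quantum-chemical assertion whose justification proceeds in three independent parts, A, B, C, each combining a structural argument from bonding theory with a citation to the extensive X-ray crystallographic database (e.g., the PDB) that bears the idealization out. I would therefore first state precisely the idealization being made: the nucleus of each atom is approximated by a point at its Bohr center of mass, and ``fixed'' bond angles and planarity are understood as describing equilibrium configurations up to small (fractions of an \AA ngstr\"om and a few degrees) thermal fluctuations. Under this reading, the three assertions are to be argued from valence-bond / molecular-orbital considerations and verified against experimental structures.

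For Fact~A, I would invoke the classical Pauling resonance argument: the peptide bond $C_i$--$N_{i+1}$ has substantial double-bond character owing to delocalization of the nitrogen lone pair into the carbonyl $\pi$ system, so the carbonyl carbon and the amine nitrogen are both $sp^2$-hybridized. This simultaneously forces coplanarity of the three substituents at each of $C_i$ and $N_{i+1}$ (hence of the six atoms $C_i^\alpha,\,C_i,\,O_i,\,N_{i+1},\,H_{i+1},\,C_{i+1}^\alpha$, which collectively include all three substituents at each $sp^2$ center) and pins the in-plane bond angles at the ideal $sp^2$ value of $120^\circ$. The empirical part is then a citation to the distribution of measured $\omega$ dihedrals clustered tightly near $0^\circ$ or $180^\circ$ and of measured bond angles near $120^\circ$.

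For Fact~B, the alpha carbon has four single bonds and no lone pairs, so it is $sp^3$-hybridized, and VSEPR/hybridization theory immediately gives the four bond axes as the tetrahedral directions; again this is confirmed by crystallographic measurement of $C^\alpha$ valence angles near $109.5^\circ$. For Fact~C, the qualitative argument is steric: the {\it trans} arrangement of the two $C^\alpha$'s across the peptide bond places the bulky residues on opposite sides, minimizing van der Waals clash, and the {\it cis} conformer is energetically disfavored by several kcal/mol in all cases except preceding Proline, where the substitution of the amide hydrogen by a ring carbon makes the two conformers closer in energy; the standard reference here is the statistics of $\omega$ dihedrals conditioned on the identity of the following residue.

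The main obstacle to making this rigorous in any mathematical sense is exactly the tension highlighted in the paper's introduction: in reality these equalities hold only on average and a given crystal structure will exhibit deviations large enough that, for example, the question ``do the six atoms of a peptide unit lie in a plane?'' has no literal yes/no answer. I would therefore end the argument by explicitly recording the tolerances (say, angle deviations $\lesssim 5^\circ$ and out-of-plane deviations $\lesssim 0.05\,$\AA\ at equilibrium), and note that the robustness of the fatgraph descriptors developed later in the paper is precisely what allows Fact~\ref{geometricfacts} to be used as if it were exact.
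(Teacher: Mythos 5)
Your proposal is correct and follows essentially the same route as the paper, which likewise treats Fact~\ref{geometricfacts} as a physico-chemical assertion justified informally: planarity and the $120^\circ$ angles from the sp$^2$ hybridization and partial double-bond character of the peptide bond (your Pauling resonance argument), the tetrahedral geometry at $C^\alpha$ from the valence structure of carbon, and the trans-preference of Fact~C from steric constraints with the cis-Proline exception, all with the explicit caveat that these hold only ``effectively'' up to small fluctuations. The only cosmetic difference is that the paper attributes Fact~B to the valence of carbon in the Bohr model ``absent any quantum mechanical hybridization'' rather than to sp$^3$ hybridization, which does not change the substance of the argument.
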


We must remark immediately that these geometric facts are only effectively true, that is, the peptide unit is {\sl almost} planar and the angles 
between bond axes in a peptide unit are {\sl nearly} 120 degrees for example in Fact A; thus, the depiction in Figure~\ref{fig:polypeptide} of the peptide unit 
is nearly geometrically accurate.  In nature, thermal and other fluctuations do slightly affect the geometric absolutes stated in Fact~\ref{geometricfacts}, but 
we shall nevertheless take these facts as geometric absolutes in constructing our model.

Fact A is fundamental to our constructions, and it arises from purely quantum effects: the planar character is provided by the ``sp$^2$ hybridization'' of 
electrons in the $C_i$ and $N_{i+1}$ atoms in the $i$th peptide unit, and the peptide unit is rigid because of additional bonding with $O_i$ of the two
p-electrons from $C_i,N_{i+1}$ not involved in the sp$^2$ hybridization.  This complexity of shared electrons is why the peptide bond and the bond between $C_i$ 
and $O_i$ are often drawn as ``partial double bonds'' as in Figure~\ref{fig:polypeptide}.  In contrast, Fact B is a standard consequence of the valence of 
carbon atoms in the Bohr model absent any quantum mechanical hybridization of electrons.

As a point of terminology, Fact C expresses that except for Proline, the peptide unit occurs in what is called the ``trans-conformation'', and the complementary 
possibility (with the centers of the alpha carbon atoms in a peptide unit on the same side of the line determined by the axis of the peptide bond) is called the 
``cis-conformation''.  This geometric constraint follows from the simple fact that in the cis-conformation, the two ``large'' alpha carbon atoms in the peptide 
unit would be so close together as to be energetically unfavorable.  In contrast for cis-Proline, the two conformations are comparable since in either case, two 
carbons (either the two alpha carbons or one alpha and the beta carbon in the Proline ring) must be close together; nevertheless, cis-Proline, as opposed to 
trans-Proline, occurs only about ten percent of the time in nature since the latter is still somewhat energetically favorable.  This exemplifies a general 
trend: somewhat energetically unfavorable conformations do occur but more rarely than favorable ones, and very energetically unfavorable conformations do not 
occur at all.

The mechanism underlying Fact C is that atoms cannot ``bump into each other'', or more precisely, their centers cannot be closer than their van der Waals radii 
allow, and this is called a  {\it steric constraint},
which will be pertinent to subsequent discussions.

Facts A and B together indicate the basic geometric structure of a polypeptide: a sequence of planar peptide units meeting at tetrahedral angles at the alpha 
carbon atoms; these planes can rotate rather freely about the axes of these tetrahedral bond axes, and this accounts for the relative flexibility of 
polypeptides.
For a polypeptide at equilibrium in some environment, the dihedral angle along the bond axis of $N_i-C_i^\alpha$  (and $C_i^\alpha-C_i$) between the bond axis 
of $C_{i-1}=N_i$ (and $N_i-C_i^\alpha$) and the bond axis of $C_i^\alpha-C_i$ (and $C_i=N_{i+1}$) is called the {\it conformational angle} $\varphi _i$ (and 
$\psi _i$ respectively); see Figure~\ref{fig:polypeptide}.  Illustrating the physically possible pairs $(\varphi _i,\psi _i)\in S^1\times S^1$, steric 
constraints for each amino acid can be plotted in what is called a Ramachandran plot; in particular, for any polypeptide at equilibrium in any environment, 
$\varphi_i$ is bounded away from zero because of steric constraints involving $C_{i-1}$ and $C_i$.

This completes our discussion of the intrinsic physico-chemical and geometric aspects of polypeptides underlying our model.  The remaining such aspect of 
importance to us depends critically upon the ambient environment in which the polypeptide occurs.

An {\it electronegative} atom is one that tends to attract electrons, and examples of such atoms include $C,N,O$ in this order of increasing such tendency.  
When an electronegative atom approaches another electronegative atom which is chemically bonded to a hydrogen atom, the two electronegative atoms can share the 
electron envelope of the hydrogen atom and attract one another through a {\it hydrogen bond}.  A hydrogen bond has a well-defined potential energy
determined on the basis of electrostatics which can be computed\footnote{For instance in the standard method called DSSP \cite{DSSP} where $r_{XY}$ denotes the 
distance between the centers of atoms
$X,Y\in\{ H,N,O\}$ in Angstroms and the location of $H$ is determined from idealized geometry and bond lengths in practice, the assignment of potential energy
to the hydrogen bond  between $O$ and $NH$  in a water environment is given by $q_1q_2\{ r_{ON}^{-1}+r_{CH}^{-1}-r_{OH}^{-1}-r_{CN}^{-1}\} \times 332 ~{\rm
kcal/mole}$, where $q_1=0.42$ and $q_2=0.20$ based on the respective assignment of partial charges $-0.42$e and $+0.20$e
to the carboxyl carbon and amine nitrogen with e representing the election charge.} from the spatial locations of its constituent atoms and the physical
properties of its environment.

For example, the $O_i$ or $N_{i+1}-H_{i+1}$ in one peptide unit can form a hydrogen bond with
the $N_{j+1}-H_{j+1}$ or $O_j$ in another peptide unit, respectively, where $i\neq j$ owing to rigidity and fixed lengths of 1.3-1.6 Angstroms of bond axes.  
For another example, many of the remarkable properties of water arise from the occurrence of hydrogen bonds among $HOH$ and $OH_2$ molecules.  The absolute 
potential energy of hydrogen bonds is rather large, so a polypeptide in a given environment seeks to saturate as many hydrogen bonds as possible subject to 
steric and other
physico-chemical and geometric constraints.  For example in an aqueous environment, the oxygen and nitrogen atoms in the peptide units of a polypeptide might 
form hydrogen bonds with one another or with the ambient water molecules of their environment, and there may also occur hydrogen bonding involving atoms 
comprising the residues or the alpha carbons.

Suppose that a polypeptide is at equilibrium, i.e., at rest, in some environment.  Its {\it tertiary structure} in that environment is the specification of the 
spatial coordinates of the centers of all of its constituent atoms.  Furthermore, fix some {\it energy cutoff} and regard a pair  $O_i$ and $N_{j}$ of backbone 
atoms as
being hydrogen bonded if the potential energy discussed above is less than this energy cutoff; a standard convention is to take the energy cutoff to be -0.5 
kcal.mole\footnote{Other methods
\cite{Kortemmeetal,Lindaueretal} of determining hydrogen bonds are also employed.}.  The {\it secondary structure} of the polypeptide\footnote{This is a slight
abuse of terminology as biologists might call this rather ``super secondary structure''; we shall explain this distinction further when it is appropriate.} at
equilibrium in an environment is the specification of hydrogen bonding as determined by an energy cutoff among its constituent backbone atoms $O_i$ and $N_j$, 
for $i,j=1,2,\ldots ,L$.

Certain polypeptides occur as the ``proteins'' which regulate and effectively define life as we know it.
The collective knowledge of protein primary structures is deposited in the manually curated SWISS-PROT data bank \cite{Swiss-Prot}, which contains about 400,000 
distinct entries, and the computer curated UNI-PROT data bank \cite{Uni-Prot}, which contains about 6,000,000 entries.   These data are readily accessible at 
www.ebi.ac.uk/swissprot and www.uniprot.org respectively.  The collective knowledge of protein tertiary structure is deposited in the Protein Data Bank (PBD) 
\cite{PDB}, which contains roughly 55,000 proteins at this moment, where the atomic locations of each of the constituent atoms of each of these proteins is 
recorded; each entry in the PDB, i.e., each protein, thus comprises a vast amount of data.  Atomic locations in PDB should be taken with an experimental 
uncertainty of 0.2 Angstroms, and  the conformational angles $\varphi,\psi$ computed from them should be taken with an experimental uncertainty of 15-20 
degrees; however, the unit displacement vectors of bond axes along the backbone, upon which our model is based, are substantially better determined  \cite{Finkelstein-personal}.

Upon postulating definitions of the various secondary structure elements in terms of properties of the atomic locations, protein secondary structure can be 
calculated from tertiary structure.  A standard such method is called  the Dictionary of Secondary Structures for Proteins (DSSP) \cite{DSSP}, and proprietary 
software for these calculations and DSSP files for each PDB entry can be found at swift.cmbi.ru.nl/gv/dssp.  Hydrogen bond strengths and various conformational 
angles are also output as part of the calculations of DSSP.

~~\vfill\eject

\section{Fatgraphs}\label{fatgraphs}

\vskip .2in

\subsection{Surfaces}\label{surfaces}

According to the classification of surfaces \cite{Massey}, a compact and connected surface $F$ is uniquely determined up to homeomorphism by the specification 
of whether it is orientable together with its genus $g=g(F)$ and number $r=r(F)$ of boundary components, or equivalently, by either $g$ or $r$ and its Euler 
characteristic $$\chi = \chi (F)=\begin{cases} 2-2g-r,& {\rm if}~{\it F}~{\rm  is ~orientable};\cr 2-g-r,& {\rm if}~{\it F}~{\rm is~non-orientable}.\end{cases}$$
It is useful to define the {\it modified genus} of a connected surface $F$ to be
$$g^*=g^*(F)=\begin{cases}
g,&~{\rm if}~{\it F}~{\rm is~orientable};\\
g/2,&~{\rm if}~{\it F}~{\rm is~non-orientable},\\
\end{cases}$$
so the formula $\chi=2-2g^*-r$ holds in either case.

Recall \cite{Massey} that the {\it orientation double cover} of a surface $F$ is the oriented surface $\tilde F$ together with the continuous map $p:\tilde F\to 
F$
so that for every point $x\in F$ there is a disk neighborhood $U$ of $x$ in $F$, where $p^{-1}(U)$ consists of two components on each of which
$p$ restricts to a homeomorphism and
where the further restrictions of $p$ to the boundary circles of these two components
give both possible orientations of the boundary circle of $U$.
Such a covering
$p:\tilde F\to F$ always exists, and its properties uniquely determine $\tilde F$ up to homeomorphism and $p$ up to its natural equivalence.  In particular, if 
$F$ is connected and orientable, then $\tilde F$ has two components with opposite orientations, each of which is identified with $F$ by $p$.
Furthermore provided $F$ is connected, $F$ is non-orientable if and only if $\tilde F$ is connected, and a closed curve in $F$ lifts to a closed curve in 
$\tilde F$ if and only if a neighborhood of it in $F$ is homeomorphic to an annulus.

\subsection{Fatgraphs and their associated surfaces}\label{fatgraphsequiv}

Consider a finite graph $G$ in the usual sense of the term comprised of vertices $V=V(G)$ and edges $E=E(G)$, which do not contain their endpoints and where an 
edge is not necessarily uniquely determined by its endpoints, or in other words, $G$ is a finite one-dimensional CW complex.  Our standard notation will be 
$v=v(G)=\# V$ and $e=e(G)=\# E$, where $\# X$ denotes the cardinality of a set $X$.  To avoid cumbersome cases in what follows, we shall assume that no 
component of $G$ consists of a single vertex or a single edge with distinct endpoints.  Removing a single point from each edge produces a
subspace of $G$, each component of which is called a {\it half-edge}.  A half-edge which contains $u\in V$ in its closure is said to be {\it incident} on $u$, 
and the number of distinct half-edges incident on $u$ is the {\it valence} of $u$.

A {\it fattening} on $G$ is the specification of a cyclic ordering on the half-edges incident on $u$ for each $u\in V$, and an
{\it $X$-coloring} on $G$ is a function $E\to X$, for any set $X$.

A {\it fatgraph} $G$ is a graph endowed with a fattening together with a coloring by a set with two elements, where we shall refer to the two colors on edges as 
``twisted'' and ``untwisted''.  A fatgraph $G$ uniquely determines a surface $F(G)$ with boundary as follows.

\begin{figure}[!h]
\begin{center}
\epsffile{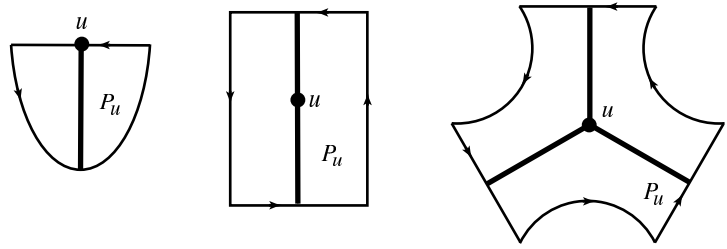}
\caption{The polygon $P_u$ associated with a vertex $u$}
\label{fig:polygonforfatgraph}
\end{center}
\end{figure}

\begin{construction}\label{constructsurface}
For each vertex $u\in V$ of $G$ of valence $k\geq 2$, consider an oriented surface diffeomorphic to a polygon $P_u$ of $2k$ sides containing in its interior a 
single vertex of valence $k$ each of whose incident edges are also incident on a univalent vertex contained in alternating sides of $P_u$, which are identified 
with the half-edges of $G$ incident on $u$
so that the induced counter-clockwise cyclic ordering on the boundary of $P_u$ agrees with the fattening of $G$ about $u$; for a vertex $u$ of valence $k=1$, 
the corresponding surface $P_u$ contains $u$ in its boundary; see Figure~\ref{fig:polygonforfatgraph}.
The surface $F(G)$ is the quotient of the disjoint union $\sqcup_{u\in V} P_u$, where the frontier edges,
which are oriented with the polygons on their left, are identified by a homeomorphism if the corresponding half-edges lie in a common edge of $G$, and this 
identification of oriented segments is orientation-preserving if and only if the edge is twisted.   The graphs in the polygons $P_u$, for $u\in V$, combine to 
give
a fatgraph embedded in $F(G)$ with its univalent vertices in the boundary, which is identified with $G$ in the natural way so that we regard $G\subseteq F(G)$.
\end{construction}

Our standard notation will be to set
$$\aligned
r(G)&=r(F(G))=~{\rm the~number~of~boundary~components~of}~F(G),\\
g^*(G)&=g^*(F(G))=~{\rm the~modified~genus~of}~F(G).\\
\endaligned$$

It is often convenient to regard a fatgraph more pictorially by considering the planar projection of a graph embedded in three-space, where the cyclic ordering 
is given near each vertex by the counter-clockwise ordering in the plane of projection and edges can be drawn with arbitrary under/over crossings; we also 
depict untwisted edges as ordinary edges and indicate twisted edges with an icon $\times$, or more generally, take this as defined modulo two so that an even 
number of icons $\times$ represents an untwisted edge and an odd number represents a twisted edge. Several examples of fatgraphs and their corresponding 
surfaces are illustrated in Figure~\ref{fig:skinnysurfaces}, where the bold lines indicate the planar projection of the fatgraph, the dotted lines indicate the 
gluing along edges of polygons, and the further notation in the figure will be explained later.

\begin{figure}[!h]
\begin{center}
\epsffile{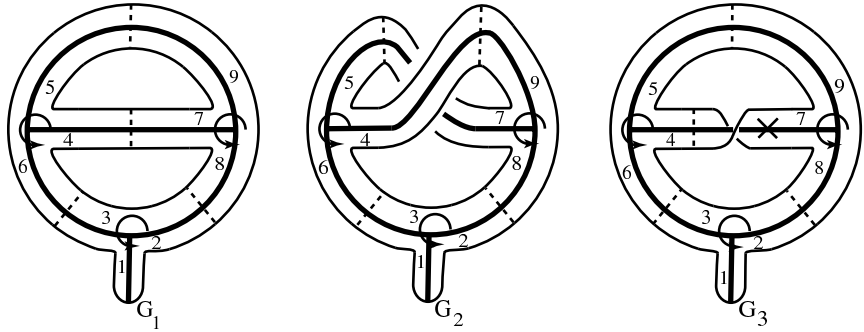}
\caption{The surface associated to a fatgraph}
\label{fig:skinnysurfaces}
\end{center}
\end{figure}

The graph $G$ is evidently a strong deformation retract of  $F(G)$, so the Euler characteristic is $\chi(F(G))=\chi(G)=v(G)-e(G)$, and the boundary components of $F(G)$ 
are composed of the frontier edges of $\sqcup_{u\in V} P_u$ which do not correspond to half-edges of $G$.

\begin{proposition}\label{changetwists}
Suppose that $G$ is a fatgraph and $X,Y\subseteq E(G)$ are disjoint collections of edges.
Change the color, twisted or untwisted, of the edges in $X$ and delete from $G$
the edges in $Y$ to produce another fatgraph $G'$, whose cyclic orderings on half-edges are induced from those on $G$ in the natural way.
Then $|r(G)-r(G')|~\leq ~\#X+\#Y$.
\end{proposition}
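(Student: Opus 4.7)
The plan is to induct on $n := \#X + \#Y$, with the base case $n = 0$ trivial since then $G = G'$. For the inductive step, pick any $e \in X \cup Y$ and let $G_1$ be the fatgraph obtained from $G$ by the corresponding single operation (flipping the twist of $e$ if $e \in X$, or deleting $e$ if $e \in Y$). Then $G'$ arises from $G_1$ by the remaining $n - 1$ operations, and the inductive hypothesis gives $|r(G_1) - r(G')| \leq n - 1$, so by the triangle inequality it suffices to prove the single-edge claim that a single twist flip or a single deletion changes $r$ by at most $1$.

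To prove this single-edge claim I analyze the modification of the boundary of $F(G)$ directly via Construction~\ref{constructsurface}. Each boundary component of $F(G)$ is traced by a canonical boundary walk along boundary sides of the polygons $P_v$, which at each corner crosses the adjacent frontier-side identification into a neighboring polygon. A single operation on edge $e$ modifies this walk only locally, at the four corners adjacent to the two frontier sides of $e$, leaving the rest intact. For a twist flip, reversing the orientation of the $e$-identification exchanges the pairing of these four corners in the walk; a direct case analysis (depending on whether the affected corners lie in a common cycle or across two cycles of the walk) gives $|r(G) - r(G_1)| \in \{0, 1\}$. For a deletion, $e$'s image in $F(G)$ is a properly embedded arc (or a closed curve when $e$ is a loop) whose endpoints lie on $\partial F(G)$ at the corners where boundary sides of $P_{v_1}$ and $P_{v_2}$ meet after identification; $F(G - e)$ is obtained from $F(G)$ by cutting along this arc and smoothing, and the standard proper-arc surgery (splitting a single boundary circle into two when the endpoints lie on a common circle, or merging two into one when they lie on different circles, with the 1-sided case for twisted loops leaving $r$ unchanged) yields $|r(G) - r(G_1)| \in \{0, 1\}$ as well. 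In either case, combined with the triangle-inequality induction, this establishes the proposition.

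The main obstacle is the careful bookkeeping of corners and cycles during the local modifications, especially in the non-orientable setting (where $F(G)$ may lack a global orientation, so the boundary walk requires handling on directed sides, or equivalently passing to the orientation double cover) and for loop edges (where the arc of $e$ is actually a closed curve, requiring a separate treatment of the $1$-sided versus $2$-sided neighborhood; here the Euler-characteristic identity $\Delta r = -2\Delta g^*$ for twist flips and $\Delta r = -1-2\Delta g^*$ for deletions serves as a useful cross-check, with $|\Delta g^*| \leq \tfrac{1}{2}$ being the key nontrivial input). In every subcase, the local surgery alters the boundary-cycle count by at most one, which is exactly what the induction needs.
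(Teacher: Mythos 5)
Your argument is essentially the paper's: reduce to a single edge by the triangle inequality, then show one twist flip or one deletion changes $r$ by at most one by surgery along a properly embedded arc dual to the edge (the paper phrases the twist case as cutting along that transverse arc and regluing with reversed orientation, which is the same local re-pairing of boundary walks you describe). One minor slip: for a deletion the relevant cut is always along the transverse (co-core) arc of the band, which is an arc with endpoints on $\partial F(G)$ even when $e$ is a loop, so your ``closed curve'' subcase is not actually needed; this does not affect the correctness of the bound.
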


\begin{proof}
By the triangle inequality, it suffices to treat the case that $X\cup Y=\{ f\}$, and we set $r=r(G)$. If $f\in E(G)$ is incident on a univalent vertex, then 
neither changing the color of nor deleting $f$ alters $r$, so we may assume that this is not the case.  Consider an arc $a$ properly embedded in $F(G)$ meeting 
$f$ in a single transverse
intersection and otherwise disjoint from $G$.   Rather than changing the color on $f$ to produce $G'$, let us instead cut $F(G)$ along $a$ and then re-glue 
along the two resulting copies
of $a$ reversing orientation to produce a surface homeomorphic to $F(G')$.  If the endpoints of $a$
occur in a common boundary component of $F(G)$, then the change of color on
$f$ either leaves $r$ invariant or increases it by one, and if they occur in different boundary components, then the change of color on $f$ necessarily 
decreases
$r$ by one.  For the remaining case, rather than removing the edge $f$ to produce $G'$, let us instead consider cutting $F(G)$ along $a$ to produce a surface 
homeomorphic to $F(G')$.  If the endpoints of $a$ occur in the same boundary component of $F(G)$, then cutting on $a$ either leaves $r$ invariant or increases 
it by one, and if they occur in different boundary components, then the cut on $a$ decreases $r$ by one.
\end{proof}

We say that a fatgraph $G$ is {\it untwisted} if all of its edges are untwisted, and  this is evidently a sufficient but not a necessary condition for $F(G)$ to 
be orientable.

\begin{remark} Suppose that $G$ is an untwisted fatgraph.  Let us emphasize that the genus of $F(G)$ is {\sl not} the classical genus of the underlying graph, 
i.e., the least genus orientable surface in which the underlying graph can be embedded.  Rather, the classical genus of the underlying graph is the least genus 
of an orientable surface $F(G)$ arising from all possible fattenings on the underlying graph.
\end{remark}

We say that two fatgraphs $G_1$ and $G_2$ are {\it strongly equivalent} if there is an isomorphism of the graphs underlying $G_1$ and $G_2$ that respects the 
cyclic orderings and preserves the coloring and that they are {\it equivalent} if there is a homeomorphism from $F(G_1)$ to $F(G_2)$ that maps $G_1\subseteq 
F(G_1)$ to $G_2\subseteq F(G_2)$.   It is clear that strong equivalence implies equivalence and that equivalence implies that the corresponding surfaces are
homeomorphic; neither converse holds in general.

Given a vertex $u$ of $G$, define the {\it vertex flip} of $G$ at $u$ by reversing the cyclic ordering on the half-edges incident on $u$ and adding another icon 
$\times$ to each half-edge incident on $u$.   In particular, a vertex flip on a univalent vertex simply adds an icon $\times$ to the edge incident upon it.

\begin{proposition}\label{strong}  Two untwisted fatgraphs are equivalent if and only if they are strongly equivalent.
Two arbitrary fatgraphs $G_1$ and $G_2$ are equivalent if and only if there is a third fatgraph $G$ which arises from $G_1$ by a finite sequence of vertex flips 
so that $G$ and $G_2$ are strongly equivalent.   In particular, if $G$ arises from $G_1$ by a vertex flip, then
$G$ and $G_1$ are equivalent.
\end{proposition}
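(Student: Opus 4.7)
My plan is to first handle the ``in particular'' clause about a single vertex flip, then use it to reduce the general biconditional to a local orientation analysis at each vertex, and finally specialize to the untwisted case. For the vertex flip: reversing the cyclic ordering at $u$ in Construction~\ref{constructsurface} amounts to using the polygon $P_u$ with the opposite orientation, which reverses the induced orientations of the frontier edges incident on $u$. To recover the same identifications of those frontier edges (and hence the same surface $F(G_1)$), the gluing across each edge incident on $u$ must be orientation-reversed, i.e., its twisted/untwisted status must be toggled; this toggle is exactly the effect of adding an icon $\times$ to each such edge under the vertex flip. Hence $F(G_1)$ and $F(G)$ are canonically homeomorphic under a map carrying $G_1$ onto $G$, so $G_1$ and $G$ are equivalent.

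For the general biconditional, the direction from ``vertex flips followed by strong equivalence'' to equivalence is immediate: vertex flips preserve equivalence by the previous paragraph, and any strong equivalence extends to a homeomorphism of surfaces by matching polygons pairwise. Conversely, suppose $\phi : F(G_1) \to F(G_2)$ is a homeomorphism with $\phi(G_1) = G_2$; it induces an isomorphism $f$ of underlying graphs. Near each vertex $u$ of $G_1$, the restriction of $\phi$ is either orientation-preserving or orientation-reversing relative to the fixed orientations of $P_u$ and $P_{f(u)}$; let $V_{\mathrm{rev}}$ be the set of vertices where $\phi$ reverses. Apply vertex flips at every $u \in V_{\mathrm{rev}}$ to produce $G$. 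Then $f$ respects cyclic orderings at every vertex, since the flip at $u \in V_{\mathrm{rev}}$ exactly cancels the local reversal imposed by $\phi$. For the coloring, the twist bit of an edge $e = \{u,v\}$ in $G_1$ records whether the local orientations of $P_u$ and $P_v$ agree or disagree across their shared frontier edge, and this bit toggles once for each endpoint of $e$ lying in $V_{\mathrm{rev}}$; comparing with the twist bit of $f(e)$ in $G_2$ shows that $G$ is strongly equivalent to $G_2$.

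For untwisted $G_1$ and $G_2$, both surfaces inherit canonical orientations from the polygons in Construction~\ref{constructsurface}, so (working componentwise) $\phi$ is either globally orientation-preserving or globally orientation-reversing. In the former case $V_{\mathrm{rev}} = \emptyset$ and no flips are needed, so $G_1$ itself is strongly equivalent to $G_2$; in the latter $V_{\mathrm{rev}} = V(G_1)$, every edge has both endpoints flipped, each twist bit toggles twice, and the resulting $G$ is again untwisted and strongly equivalent to $G_2$. This yields the strong equivalence of $G_1$ and $G_2$ claimed in the first assertion.

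The main obstacle I anticipate is the coloring step in the general converse direction: one must verify carefully that the twist bit of an edge really equals the XOR of orientation-reversal at its two endpoints, which requires unpacking how the coloring controls the orientations of the gluings in Construction~\ref{constructsurface}. Once that local-to-global dictionary is in place, the vertex-flip bookkeeping is routine and the untwisted case falls out as a dichotomy between the two global orientation choices of $\phi$.
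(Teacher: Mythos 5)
Your handling of the single vertex flip is essentially the paper's (reverse the orientation of $P_u$ and compensate by toggling the gluings across the incident edges), but your proof of the main biconditional takes a genuinely different route. The paper fixes a maximal tree $T$ in a connected $G_1$, shows by a counting argument (there are $2^{v-1}$ effectively distinct compositions of vertex flips acting on the $2^e$ colorings, and $1-v+e$ non-tree edges) that flips can realize any prescribed twisting on $T$, matches the tree colors with those of $\phi(T)$, and then pins down the non-tree colors by the annulus-versus-M\"obius character of the cycle each such edge creates; the matching of cyclic orderings is left implicit there. You instead read off, at each vertex $u$, whether $\phi$ is orientation-preserving or orientation-reversing relative to the orientations of $P_u$ and $P_{f(u)}$, flip exactly at the reversing vertices, and then verify both the cyclic orderings and the colorings directly through the dictionary ``twist $=$ disagreement of polygon orientations across the shared frontier edge,'' which is precisely how Construction~\ref{constructsurface} encodes the colors. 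Your route is more local and more economical: it needs no connectivity hypothesis and no choice of tree, and it makes explicit the cyclic-ordering verification that the paper's argument glosses; the price is that you must check that the local orientation behaviour of $\phi$ at a vertex is well defined (it is, being locally constant on the disk, or half-disk, neighborhood of the vertex) and that the XOR dictionary also covers loops and edges meeting univalent vertices --- routine checks that you correctly flag as the crux.

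There is, however, a genuine gap in your untwisted case. In the orientation-reversing branch you conclude from ``$G$, the all-vertex flip (mirror) of $G_1$, is strongly equivalent to $G_2$'' that ``$G_1$ is strongly equivalent to $G_2$''; this requires $G_1$ to be strongly equivalent to its own mirror, which is not automatic, since reversing every cyclic ordering changes the strong equivalence class whenever the underlying graph has trivial automorphism group and a vertex of valence at least three. As written, your argument establishes the first assertion only when $\phi$ can be taken compatible with the canonical orientations determined by the polygons --- which is exactly the reading under which the paper's own one-line proof (``a homeomorphism \dots restricts to a strong equivalence'') is valid, because an orientation-reversing $\phi$ reverses all cyclic orderings and does not literally restrict to a strong equivalence either. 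So you have isolated, rather than resolved, the delicate point: if equivalence of untwisted fatgraphs is understood via orientation-preserving homeomorphisms, both your argument and the paper's are complete, while with arbitrary homeomorphisms a chiral untwisted fatgraph and its mirror show that the missing step cannot be supplied.
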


\begin{proof}
In case $G_1$ and $G_2$ are untwisted, a homeomorphism from $F(G_1)$ to $F(G_2)$ mapping $G_1$ to $G_2$ restricts to a strong equivalence of $G_1$ and $G_2$, 
and the converse follows by construction in any case, as already observed, thus proving the first assertion.

The third assertion follows since a flip on a vertex $u$ of $G_1$ corresponds to simply
reversing the orientation of the polygon $P_u$ in the construction of $F(G)$, i.e., in our graphical
depiction, removing the neighborhood of $u$ from the plane of projection, turning it upside down in three-space, and then replacing it in the plane of 
projection at the expense of twisting one further time each incident half-edge of $G$;
this evidently extends to a homeomorphism of $F(G_1)$ to $F(G)$ which maps $G_1$ to $G$ but does not preserve coloring.

Since strong equivalence implies
equivalence by construction and equivalence of fatgraphs is clearly a transitive relation, if there is such a fatgraph $G$ as in the statement of the 
proposition, then $G_1$ and $G_2$ are indeed equivalent.  For the converse, we may and shall assume that $G_1$ and $G_2$ are connected.

Consider a fatgraph $G$ with $v$ vertices and $e$ edges, and choose a maximal tree $T$ of $G$.  There are  $1-\chi(G)=1-v+e$ edges in $G-T$ since we may 
collapse $T$ to a point without changing $v-e$, which is therefore the Euler characteristic of the collapsed graph comprised of a single vertex and one edge for 
each edge of $G-T$.

We claim that there is a composition of flips of vertices in $G$ that results in a fatgraph with any specified twisting on the edges in $T$.  To see this, 
consider the collection of all functions from the set of edges of $G$ to ${\mathbb Z}/2$, a set with cardinality $2^e$.  Vertex flips act on this set of 
functions in the natural way, and there are evidently $2^v$ possible compositions of vertex flips.  The simultaneous flip of all vertices of $G$ acts trivially 
on this set of functions and corresponds to reversing the cyclic orderings at all vertices, so only $2^{v-1}$ such compositions may act non-trivially.  Insofar 
as $2^e/2^{v-1}=2^{1-v+e}$ and there are $1-v+e$ edges of $G-T$ by the previous paragraph, the claim follows.

Finally, suppose that $G_1$ and $G_2 $ are equivalent and let $\phi:F(G_1)\to F(G_2)$ be a homeomorphism of surfaces that restricts to a homeomorphism of $G_1$ 
to $G_2$.
Performing a vertex flip on $G_1$ and identifying edges before and after in the natural way produces a fatgraph in which $T$ is still a maximal tree and
which is again equivalent to $G_2$, according to previous remarks, by a homeomorphism still denoted
$\phi$, which maps $T$ to the maximal tree $\phi(T)\subset G_2$.
By the previous paragraph, we may apply a composition of vertex flips to $G_1$ to produce a fatgraph $G$ so that an edge of the maximal tree $T\subset G$ is 
twisted if and only if its image under $\phi$
is twisted.

Adding an edge of $G-T$ to $T$ produces a unique cycle in $G$, and a neighborhood of this cycle in $F(G)$ is either an annulus or a M\"obius band with a similar 
remark for edges of $G_2-\phi(T)$.  Since $\phi$ restricts to a
homeomorphism of the corresponding annuli or M\"obius bands in $F(G)$ and $F(G_2)$, an
edge of $G-T$ is twisted if and only if its image under $\phi$ is twisted.  It follows that $G$ and $G_2$ are strongly equivalent as desired.
\end{proof}

\subsection{Fatgraphs and permutations}\label{fatgraphsperms}

We shall adopt the standard notation for a permutation on a set $S$ writing $(s_1,\ldots ,s_k)$ for the cyclic permutation $s_1\mapsto s_2\mapsto\cdots\mapsto 
s_k\mapsto s_1$ on distinct elements $s_1,\ldots , s_k\in S$, called a {\it transposition} if $k=2$, and shall
compose permutations $\sigma,\tau$ on $S$ from right to left, so that $\sigma\circ\tau(s)=\sigma(\tau(s))$.  An {\it involution} is a permutation $\tau$ so that 
$\tau\circ\tau={1_S}$, where $1_S$ denotes the identity map on $S$.  Two permutations are {\it disjoint} if they have disjoint supports, so disjoint 
permutations necessarily commute.

Fix a fatgraph $G$.  A {\it stub} of $G$ is a half-edge which is not incident on a univalent vertex of $G$. There are exactly two non-empty connected fatgraphs 
with no stubs, namely, the two we have proscribed consisting of a single vertex with no incident half-edges and a single edge with distinct endpoints.

A fatgraph $G$ determines a triple $(\sigma (G),\tau_u(G),\tau_t(G)) $ of permutations on its set $S=S(G)$ of stubs as follows.

\begin{construction}\label{constructperms}
For each vertex $u$ of $G$ of valence $k\geq 2$ with incident stubs $s_1,\ldots ,s_{k(u)}$ in a linear ordering compatible with the cyclic ordering given by the 
fattening on $G$, consider the cyclic permutation $(s_1,\ldots ,s_{k(u)})$.  By construction, the cyclic permutations corresponding to distinct vertices of $G$ 
are disjoint.  The composition $$\sigma (G)=\prod_{{\{ {\rm vertices}~u\in V:}\atop{~u~{\rm has~valence}\geq 2\}}}(s_1,\ldots ,s_{k(u)})$$ is thus well-defined 
independent of the order in which the product is taken and likewise for the compositions of transpositions
$$\aligned
\tau _u(G)&=\prod _{{\{
{\rm pairs~of~distinct~stubs}~h,h'~{\rm contained}\atop{~{\rm in~some~untwisted~edge~of}~G
 \}}}}
  (h,h'),\cr
\tau _t(G)&=\prod _{{\{
{\rm pairs~of~distinct~stubs}~h,h'~{\rm  contained}\atop{~{\rm in~some~twisted~edge~of}~G}
\}}}
(h,h').\cr
\endaligned$$
\end{construction}

Notice that $\sigma(G)$ has no fixed points because we have taken the product over vertices of valence at least two,
and $\tau_u(G)$ and $\tau_t(G)$ are disjoint involutions whose fixed points are the stubs corresponding to the univalent vertices of $G$.

For example, enumerating the stubs of the fatgraphs $G_1,G_2,G_3$ as illustrated in Figure~\ref{fig:skinnysurfaces}, we have:
$$\aligned
\sigma(G_1)&=\sigma(G_2)=\sigma(G_3)=(1,2,3)(4,5,6)(7,8,9),\cr
\tau_u(G_1)&=(2,8)(3,6)(4,7)(5,9), \tau_t(G_1)={1_S},\cr
\tau_u(G_2)&=(2,8)(3,6)(4,9)(5,7), \tau_t(G_2)={1_S},\cr
\tau_u(G_3)&=(2,8)(3,6)(5,9), \tau_t(G_3)=(4,7).\cr
\endaligned$$

\begin{remark}
There is another treatment of fatgraphs as triples of permutations on the set of all half-edges instead of stubs, where the univalent vertices are expressed as 
fixed points of the analogue of $\sigma$.   Moreover, there is a transposition in the analogue of $\tau_u\circ\tau_t$ corresponding to each half-edge, but the 
formulation we have given here,
which treats univalent vertices as ``endpoints of half-edges rather than endpoints of edges'',  does not require these additional transpositions.  Since our 
model will have a plethora of univalent vertices, we prefer
the more ``efficient'' version described above, which is just a notational convention for permutations.
\end{remark}

Define a {\it labeling} on a fatgraph $G$ with $N$ stubs to be a linear ordering on its stubs, i.e., a bijection from the set of stubs of $G$ to the set $\{ 
1,2,\ldots ,N\}$.

\begin{proposition}\label{strongclasses} Fix some natural number $N\geq 2$.
The map $G\mapsto (\sigma(G),\tau_u(G),\tau_t(G))$ of Construction~\ref{constructperms}
induces a bijection between
the set of strong equivalence classes of fatgraphs with $N$ stubs and the set of all conjugacy classes
of triples $(\sigma, \tau_u,\tau_t)$ of permutations on $N$ letters, where $\sigma$ is fixed point free and $\tau_u$ and $\tau_t$
are disjoint involutions.
\end{proposition}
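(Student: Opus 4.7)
The plan is to construct an explicit inverse at the level of labeled fatgraphs and then verify that the natural $S_N$-relabeling action corresponds exactly to conjugation of triples. First I fix a labeling of the $N$ stubs of each fatgraph by $\{1,\ldots,N\}$, so that the forward map becomes a map from labeled fatgraphs modulo label-preserving strong equivalence to triples of permutations on $\{1,\ldots,N\}$; the full statement then follows by showing this labeled version is a bijection and quotienting by the symmetric group $S_N$.

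For the inverse construction, given a triple $(\sigma,\tau_u,\tau_t)$ with $\sigma$ fixed-point free and $\tau_u,\tau_t$ disjoint involutions, I create one vertex for each cycle of $\sigma$, with incident stubs listed in the cyclic order supplied by the cycle; I declare each $2$-cycle of $\tau_u$ to be an untwisted edge identifying its two stubs, and each $2$-cycle of $\tau_t$ a twisted edge; and for every $s\in\{1,\ldots,N\}$ that is a common fixed point of both involutions, I attach a univalent vertex to $s$ by a single edge, colored untwisted by convention. The hypotheses are exactly what is needed for this to define a fatgraph: fixed-point freeness of $\sigma$ ensures every vertex arising from a $\sigma$-cycle has valence at least two, while disjointness of the involutions ensures every stub lies in a unique edge. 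Applying Construction~\ref{constructperms} to this fatgraph then recovers $\sigma$ from the vertex cycles, $\tau_u$ and $\tau_t$ from the colored edge pairings, and the common fixed points from the stubs of edges running to univalent vertices (which contribute to neither involution). Conversely, starting from a labeled fatgraph $G$ and applying forward-then-inverse reproduces the underlying graph, fattening, and coloring of $G$ up to the univalent-edge convention, yielding a strong equivalence. A direct verification that relabeling stubs by $\pi\in S_N$ conjugates each of $\sigma,\tau_u,\tau_t$ by $\pi$ then completes the passage from labeled to unlabeled fatgraphs.

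The main subtlety, which I expect to be the principal organizational obstacle, is the treatment of edges incident on univalent vertices: such an edge contains only one stub and therefore contributes a fixed point to both $\tau_u$ and $\tau_t$ regardless of whether it is twisted or untwisted, so the inverse construction must adopt a fixed coloring convention for these edges in order that the two composites reproduce each fatgraph up to strong equivalence. Once this convention is fixed, everything else is routine bookkeeping: the cycle decomposition of $\sigma$ reconstructs the non-univalent vertex combinatorics, the $2$-cycles of $\tau_u$ and $\tau_t$ reconstruct the interior edges with their colors, and the common fixed points reconstruct the univalent vertices.
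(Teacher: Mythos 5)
Your proposal is correct and follows essentially the same route as the paper: pass to labeled fatgraphs, exhibit the explicit two-sided inverse built from the cycles of $\sigma$ and the $2$-cycles of $\tau_u,\tau_t$ (with univalent vertices attached at common fixed points), and observe that relabeling stubs by an element of $S_N$ corresponds precisely to conjugating the triple. You are in fact more explicit than the paper, which merely asserts an ``obvious two-sided inverse''; in particular your remark that the color of an edge incident on a univalent vertex is invisible to the triple, so that a coloring convention must be fixed for such edges, makes precise a point the paper leaves tacit.
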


\begin{proof}
The assignment $G\mapsto (\sigma(G),\tau_u(G),\tau_t(G))$ induces a mapping from the set of labeled fatgraphs with $N$ stubs to the set of triples of 
permutations on $\{ 1,2,\ldots ,N\}$ in the natural way.  This induced mapping has an obvious two-sided inverse, where the labeled fatgraph is constructed 
directly from the triple of permutations; we are here using our convention that no component of $G$ is a single vertex or a single edge with distinct univalent 
endpoints.  A strong equivalence of labeled fatgraphs induces a bijection of $ \{ 1,2,\ldots ,N\}$ which conjugates their corresponding triples of permutations 
to one another and conversely, so the result follows.
\end{proof}

\begin{construction}\label{constructdoublecover} Suppose that $G$ is a fatgraph with triple
$(\sigma,\tau_u,\tau_t)$ of permutations on its set $S$ of stubs determined by Construction~\ref{constructperms}.
Construct a new set $\bar S=\{ \bar s:s\in S\}$ and a new permutation $\bar\sigma$ on $\bar S$ where there is
one $k$-cycle $(\bar s_k,\ldots ,\bar s_1)$ in $\bar\sigma$ for each $k$-cycle $(s_1,\ldots ,s_k)$ in $\sigma$.
Construct from $\tau_u$ a new permutation $\bar\tau_u$ on $\bar S$, where there is one transposition $(\bar s_1,\bar s_2)$ in $\bar\tau_u$ for each 
transposition $(s_1,s_2)$ in $\tau _u$, and construct yet another new permutation $\bar\tau_t$ on $S\sqcup \bar S$ from $\tau_t$, where there are two 
transpositions
$(\bar s_1,s_2)$ and $(s_1,\bar s_2)$ in $\tilde\tau _t$ for each transposition $(s_1,s_2)$ in $\tau_t$.
Finally, define
permutations on $S\sqcup \bar S$ by
$$\aligned
\sigma'&=\sigma\circ\bar\sigma,\\
\tau'&=\tau_u\circ\bar\tau_u\circ \bar\tau_t,\\
\endaligned$$
where the order of composition on the right-hand side is immaterial because
the permutations are disjoint in each case.\end{construction}

\begin{proposition}\label{doublecover}  Suppose that Construction~\ref{constructperms} assigns the triple
$(\sigma ,\tau_u,\tau_t)$ of permutations to the fatgraph $G$ with set $S$ of stubs, let $\sigma ',\tau '$ be determined
from them according to Construction~\ref{constructdoublecover}, and consider the untwisted fatgraph $G'$
determined by Construction~\ref{constructperms} from the triple $(\sigma ',\tau ',1_{S\sqcup\bar S})$.  Then $F(G')$ is the orientation double cover of $F(G)$, and the covering transformation is 
described by $s\leftrightarrow \bar s$.  
In particular provided $F(G)$ is connected, $F(G')$ is connected if and only if
$F(G)$ is non-orientable.  Furthermore, there is a one-to-one correspondence between the boundary components
of $F(G')$ and the orientations on the boundary components of $F(G)$, i.e., $F(G')$ has twice as many
boundary components as $F(G)$.
\end{proposition}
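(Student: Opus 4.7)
The plan is to build the covering map $p \colon F(G') \to F(G)$ directly from Construction~\ref{constructsurface} and verify the local, global, and equivariant properties required of an orientation double cover, then invoke uniqueness of that cover from Section~\ref{surfaces}. For each vertex $u$ of $G$ of valence $k \geq 2$, the factorization $\sigma' = \sigma \circ \bar\sigma$ produces two disjoint $k$-cycles: one equal to the original cycle of $\sigma$, and one of reversed cyclic order coming from $\bar\sigma$. View the two associated polygons of $G'$ as two copies $P_u^+$ and $P_u^-$ of $P_u$, the latter carrying the opposite orientation, and declare $p$ on each of $P_u^\pm$ to be the tautological identification with $P_u$.

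First I would check that $p$ respects the frontier-edge identifications in Construction~\ref{constructsurface}. Since $G'$ is untwisted, each frontier-edge pair in $F(G')$ is glued by an orientation-reversing homeomorphism. For an untwisted edge of $G$ carrying a transposition $(s_1,s_2)$ of $\tau_u$, the corresponding contribution to $\tau' = \tau_u \circ \bar\tau_u \circ \bar\tau_t$ is the pair of same-sheet transpositions $(s_1,s_2)$ and $(\bar s_1,\bar s_2)$, and both project to the orientation-reversing (i.e., untwisted) gluing in $F(G)$. For a twisted edge of $G$ carrying $(s_1,s_2)$ in $\tau_t$, the contribution from $\bar\tau_t$ is the two cross-sheet transpositions $(s_1,\bar s_2)$ and $(\bar s_1,s_2)$; because $P_u^-$ carries the reversed orientation, an orientation-reversing identification in $F(G')$ projects to an orientation-preserving identification in $F(G)$, which is exactly the twisted gluing. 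Hence $p$ is well-defined and continuous.

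Next I would verify the covering-space axioms. For any interior point $x \in F(G)$, the preimage $p^{-1}(x)$ consists of two points, one on each sheet, and a sufficiently small disk neighborhood of $x$ lifts to two disks on which $p$ restricts to homeomorphisms inducing the two opposite local orientations of the boundary circle; orientability of $F(G')$ is automatic since every edge of $G'$ is untwisted and the polygon orientations are consistent. The permutation $s \leftrightarrow \bar s$ induces a fixed-point-free involution $\iota$ on $F(G')$ with $p \circ \iota = p$, which is the nontrivial deck transformation. These are precisely the defining properties of the orientation double cover recalled in Section~\ref{surfaces}, so uniqueness identifies $p \colon F(G') \to F(G)$ with that cover.

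The last two assertions are then immediate consequences of the theory recalled in Section~\ref{surfaces}: when $F(G)$ is connected, $F(G')$ is connected iff $F(G)$ is non-orientable by the result cited from \cite{Massey}; and each of the boundary circles of $F(G)$ admits two orientations, each of which lifts to a distinct boundary circle of $F(G')$, giving $r(F(G')) = 2\, r(F(G))$. The main obstacle is the sign bookkeeping in the second paragraph: one must fix consistent orientations on the two sheets and check one representative each of an untwisted and a twisted edge to confirm that the cross-sheet pattern of $\bar\tau_t$ (as opposed to a diagonal pattern) is precisely what converts orientation-reversing gluings upstairs into orientation-preserving gluings downstairs; after this the rest is formal.
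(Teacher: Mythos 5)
Your proposal is correct and follows essentially the same route as the paper's (much terser) proof: verify directly from Construction~\ref{constructsurface} that $F(G')$ has the defining properties of the orientation double cover with deck transformation $s\leftrightarrow\bar s$, then invoke the uniqueness and general facts recalled in Section~\ref{surfaces} for the connectivity and boundary statements. The only detail worth making explicit in your last step is the paper's observation that each boundary component of $F(G)$ has an annular neighborhood, which is why it lifts to two closed curves (one per orientation) rather than being doubly covered by a single circle.
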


\begin{proof}
The surface $F(G')$ has the required properties of the orientation double cover by construction, so the first two claims follow from the general principles 
articulated in Section~\ref{surfaces}.  Since each boundary component of $F(G)$ evidently has a neighborhood in $F(G)$ homeomorphic to an annulus, the final 
assertion follows as well.
\end{proof}

\begin{proposition} \label{boundaryandconn}  Adopt the hypotheses and notation of Proposition~\ref{doublecover}
and consider the composition $\rho'=\sigma'\circ\tau'$.
\smallskip

\noindent {\rm ~i)}~ The orientations on the boundary components of $F(G)$ are in one-to-one correspondence with the cycles of  $\rho'$.
More explicitly, suppose that $s_1^1s_1^2s_2^1s_2^2\cdots s_n^1s_n^2$ is the ordered sequence of stubs traversed
by an oriented edge-path in $G$ representing a boundary component of $F(G)$ with some orientation, where $s_j^1,s_j^2$ are contained in a common edge of $G$ and 
perhaps $s_j^1=s_j^2$ if they are contained in an edge incident on a univalent vertex, for $j=1,2,\ldots ,n$.  Erasing the bars on elements from the 
corresponding cycle of $\rho '$ produces the sequence
$(s_1^2,s_2^2,\ldots ,s_n^2)$ of stubs of $G$ serially traversed by the corresponding oriented
boundary component of $F(G)$, called a {\rm reduced cycle} of $\rho'$.

\smallskip

\noindent {\rm ii)}~There is the following algorithm to determine whether $G$ is connected in terms of the associated
triple $(\sigma,\tau_u,\tau_t)$ of permutations.  For any linear ordering on $S$, let
$X$ be the subset of $S$ in the reduced cycle of $\rho '$ containing the first stub.
(*) If $X=S$, then $G$ is connected, and the algorithm terminates.    If $X\neq S$, then consider the existence of a least stub $s\in X-S$ so that $\tau(s)\in 
X$.
If there is no such stub $s$, then $G$ is not connected, and the algorithm terminates. If there is such a stub $s$, then update $X$ by adding to it the subset 
of $S$ in the
reduced cycle of $\rho'$ containing $s$.   Go to (*).
\end{proposition}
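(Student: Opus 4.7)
For part (i), my plan is to reduce to the oriented case via the orientation double cover of Proposition~\ref{doublecover}. The key combinatorial input is that for any untwisted fatgraph $H$, the cycles of $\sigma(H)\circ\tau_u(H)$ on stubs are in natural bijection with the boundary components of $F(H)$; this follows by a direct trace of the boundary walk through the polygon gluings of Construction~\ref{constructsurface}, since applying $\tau_u$ corresponds to crossing an edge and $\sigma$ to moving to the cyclic successor at the next vertex. Applying this standard fact to the untwisted fatgraph $G'$, the cycles of $\rho'=\sigma'\circ\tau'$ biject with the boundary components of $F(G')$, which by Proposition~\ref{doublecover} correspond to oriented boundary components of $F(G)$, establishing the first assertion.

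For the explicit reduced cycle description, I would lift an oriented boundary walk of $F(G)$ starting at $s_1^1$ to a walk in $F(G')$ beginning on sheet $S$. Construction~\ref{constructdoublecover} is engineered so that: crossing an untwisted edge keeps the lifted walk on the current sheet (via $\bar\tau_u$), crossing a twisted edge toggles between sheets $S$ and $\bar S$ (via $\bar\tau_t$), and $\bar\sigma$ reverses the cyclic ordering on sheet $\bar S$; the last property precisely encodes the local orientation-reversal of $F(G)$ incurred by traversing a twisted edge. Consequently, the lift of the oriented boundary walk is a cycle of $\rho'$, and erasing bars on its elements recovers the sequence of stubs of $G$ traversed by the walk.

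For part (ii), I would use that $G$ is connected iff $F(G)$ is connected, since $G\subseteq F(G)$ is a strong deformation retract. The algorithm inductively maintains the invariant that $X$ is a union of reduced cycles of $\rho'$ contained in a single connected component of $F(G)$: the initial reduced cycle is one boundary component of $F(G)$ (by part (i)) and hence lies in one component, and each added reduced cycle is joined to $X$ via an edge $\{s,\tau(s)\}$ of $G$ with $\tau=\tau_u\circ\tau_t$. (I read the printed ``$s\in X-S$'' as the evidently intended ``$s\in S-X$''.)  Conversely, if the algorithm terminates with $X\subsetneq S$, then $\tau(X)=X$, so no edge of $G$ has exactly one stub in $X$; together with the fact that the vertices and edges lying on reduced cycles in $X$ all belong to their common connected component of $F(G)$, this exhibits a separation of $G$ into two non-empty sub-fatgraphs, so $G$ is disconnected.

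The main obstacle is the explicit reduced cycle description in part (i), which requires careful bookkeeping to verify that the sheet toggling induced by twisted edges aligns exactly with the reversed cyclic ordering $\bar\sigma$ on the $\bar S$ sheet. Once part (i) is secure, the algorithm's correctness in part (ii) follows by routine verification using the edge-adjacency involution $\tau$.
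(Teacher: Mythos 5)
Your proposal is correct and takes essentially the same route as the paper's proof: both rest on the standard fact that for an untwisted fatgraph the cycles of $\sigma\circ\tau$ trace the boundary walks of the associated surface, combined with the orientation double cover of Proposition~\ref{doublecover} (whose induced projection on stubs simply erases bars) to handle twisted edges, and both verify the connectivity algorithm by noting that the stubs accumulated in $X$ stay in one component while negative termination forces $X$ to exhaust a component. Your reading of the printed ``$s\in X-S$'' as the intended ``$s\in S-X$'' is also correct.
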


\begin{proof} Let us first consider the case that $\tau_t=1_{S\sqcup \bar S}$, i.e., $G$ is untwisted, and set $\tau=\tau_u$.

For the first part, consider a stub $s$ of $G$ and the effect of $\sigma\circ\tau$ on $s$.  The stub $s$ is contained in an edge incident on a univalent vertex 
if and only if $s$ is a fixed point of $\tau$ by construction, and
$\sigma(s)=\sigma(\tau(s))$ in this case is the stub following $s$ in the cyclic ordering at the non-univalent endpoint of this edge.  In the contrary case that 
$s$ is not a fixed point of $\tau$,  the stubs $s$ and $\tau (s)$ are half-edges contained in a common edge of $G$, and $s,\tau(s),\rho(s)=\sigma(\tau(s))$ is 
likewise a consecutive triple of
stubs occurring in an edge-path of $G$ corresponding to a boundary component of $F(G)$ oriented with $F(G)$ on its left.
It follows that a cycle of $\sigma\circ\tau$ is comprised of every other stub traversed by an edge-path in $G$ which corresponds to a boundary component of 
$F(G)$ oriented in this way, proving the first part.

For the second part, the collection of stubs in $X$ always lies in a single component of $G$ in light of the previous remarks, so if at some stage of the 
algorithm $X=S$, then $G$ is indeed connected.  If at some stage of the algorithm there is no stub $s$ with $\tau(s)\in X$, then $X$ is comprised of all the 
stubs in some component of $G$ in light of the previous discussion, so $X\neq S$ in this case implies that $G$ has at least two components.

Turning now to the general case, $F(G')$ is the orientation double cover of $F(G)$,
and the induced projection map on stubs 
just erases the bars by Proposition~\ref{doublecover}.  The proof in this case is therefore entirely analogous.
\end{proof}

To exemplify these constructions and results for the fatgraphs illustrated in Figure~\ref{fig:skinnysurfaces}, we find
$$\aligned
\sigma(G_1)\circ\tau_u(G_1)&=(5,7)(3,4,8)(1,2,9,6),\cr
\sigma(G_2)\circ\tau_u(G_2)&=(1,2,9,5,8,3,4,7,6).\cr
\endaligned$$
Thus, $r(G_1)=3$ and $r(G_2)=1$, and since
$\chi(G_1)=\chi(G_2)=-1$, the (modified) genera are
$g^*(G_1)=0$ and  $g^*(G_2)=1$.

\begin{figure}[!h]
\begin{center}
\epsffile{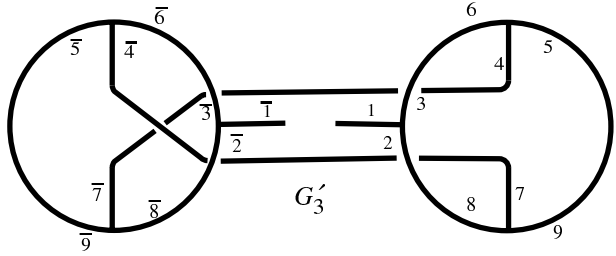}
\caption{Example of the orientation double cover}
\label{fig:doublecover}
\end{center}
\end{figure}

As to $G_3$, according to Construction~\ref{constructdoublecover} and Proposition~\ref{doublecover}, the permutations for the orientation double cover are given 
by
$$\aligned
\sigma ' &=(1,2,3)(4,5,6)(7,8,9)(\bar 3,\bar 2,\bar 1)(\bar 6,\bar 5,\bar 4)(\bar 9,\bar 8,\bar 7),\\
\tau ' &=(2,8)(3,6)(5,9)(\bar 2,\bar 8)(\bar 3,\bar 6)(\bar 5,\bar 9)(4,\bar 7)(\bar 4,7).\\
\endaligned$$
The untwisted fatgraph $G_3'$ corresponding to $(\sigma',\tau',1_{S(G_3)\sqcup \bar S(G_3)})$ is illustrated
in Figure~\ref{fig:doublecover},
and it is connected reflecting the fact that $F(G_3)$ is non-orientable.
The cycles of $\rho'=\sigma'\circ\tau'$ are given by
$$(1,2,9,6), (\bar 1,\bar 3,\bar 5,\bar 8), ~{\rm and}~(\bar 2,\bar 7,5,7,\bar 6),  (3,4,\bar 9,\bar 4,8)$$
corresponding to the oriented boundary cycles of $G_3'$,
and the reduced cycles of $\rho'$ are therefore
$$(1,2,9,6), (1,3,5,8), ~{\rm and}~(2,7,5,7,6),  (3,4,9,4,8),$$
each pair corresponding to the two orientations of a single boundary component of $F(G_3)$.
It follows that $r(G_3)=2$ and thus $g^*(G_3)=1/2$ since again $\chi(G_3)=-1$.

\subsection{Fatgraphs on the computer}\label{fatgraphscomputer}

Given a linear ordering on the vertices of a fatgraph, we may choose an {\sl a priori} labeling on it that is especially convenient, where the stubs about a 
fixed vertex are consecutive and the stubs about each vertex precede those of each succeeding vertex as in Figure~\ref{fig:skinnysurfaces}.
Owing to Proposition~\ref{strongclasses}, the strong equivalence class of a fatgraph $G$  with set $S$ of stubs can conveniently be stored on the computer as a 
triple $(\sigma, \tau_u,\tau_t)$ of permutations on the labels
$\{ 1,2,\ldots ,\#S\}$ of stubs.  The number of non-univalent vertices of $G$ is the number of disjoint cycles in $\sigma$, the number of edges of $G$ which are 
not incident on a univalent vertex is the number of disjoint transpositions in $\tau_u\circ\tau_t$, and the Euler characteristic of $G$ or $F(G)$ is
given by the former minus the latter.  Construction~\ref{constructdoublecover} provides an algorithm, which is easily implemented on the computer, to produce a 
triple $(\sigma ',\tau ', 1_{S\sqcup \bar S})$ from $(\sigma, \tau_u,\tau_t)$ which determines an untwisted fatgraph $G'$ whose corresponding surface $F(G')$ is 
the orientation double cover of $F(G)$ according to Proposition~\ref{doublecover}.  Proposition~\ref{boundaryandconn}i provides an algorithm to determine the 
compatibly oriented boundary components of $F(G')$ and hence the boundary components of $F(G)$ itself, and Proposition~\ref{boundaryandconn}ii then gives an 
algorithm to determine whether $G'$ is connected from this data, where both of these methods are again easily implemented on the computer.

In our applications of these techniques, the fatgraph $G$ will typically be connected as we now assume.  The orientability of $F(G)$ can thus be ascertained 
from the connectivity of $F(G')$.  The boundary components of $F(G)$, and their number $r$ in particular, can be determined, as above, and hence the modified 
genus $g^*=(2-r-\chi)/2$ is likewise easily computed.  Thus, the topological type of $F(G)$ can be determined algorithmically on the computer from the triple 
$(\sigma, \tau_u,\tau_t)$ of permutations for a connected fatgraph $G$, and the particular edge-paths in $G$ corresponding to boundary components of 
$F(G)$ can be ascertained from the cycles of $\sigma'\circ\tau '$.

\section{The model}\label{model}

We take as input to the method the specification for a polypeptide at equilibrium in some environment the following data:

\smallskip

\leftskip .4cm

\noindent {\bf Input i)} the primary structure given as a sequence $R_i$ of letters in the 20-letter alphabet of
amino acids, for $i=1,\ldots, L$;

\smallskip

\noindent {\bf Input ii)} the specification of hydrogen bonding among the various
nitrogen and oxygen atoms $\{ N_i,O_i:i=1,\ldots ,L\}$ described as a collection ${\mathcal B}$ of
pairs $(i,j)$ indicating that $N_i-H_i$ is hydrogen bonded to $O_j$, where $i,j\in\{ 1,\ldots ,L\}$;

\smallskip

\noindent {\bf Input iii)} the displacement vectors $\vec x_i$ from $C_{i}$ to $N_{i+1}$,
$\vec y_i$ from $C_{i}^\alpha$ to $C_i$, and
$\vec z_i$ from $N_{i+1}$ to $C^\alpha_{i+1}$ in each peptide unit, for $i=1,\ldots ,L-1$.

\smallskip

\leftskip=0ex

\noindent These data, which we shall term a {\it polypeptide structure} $P$, are either immediately given in or readily derived from the PDB and DSSP files for 
a folded protein.  Practical and other details concerning the determination of these inputs will be discussed in Section~\ref{implementation}.

A fatgraph is constructed from a polypeptide structure in two basic steps: modeling the backbone using the planarity of the peptide units and the conformational 
geometry along the backbone based on inputs i) and iii), and then adding edges to the model of the backbone for the hydrogen bonds based on inputs ii) and 
iii).

We shall assume that input ii) is consistently based upon fixed energy thresholds with each nitrogen or oxygen atom involved in at most one hydrogen bond (so-called ``simple'' hydrogen bonding)
and relegate the discussion of more general models (with so-called ``bifurcated'' hydrogen bonding)
to Section~\ref{generalmodel}.  The assumption thereby imposed on
${\mathcal B}$ in input ii) is that if $(i,j),(i',j')\in{\mathcal B}$, then $i=i'$ if and only if $j=j'$.

To each peptide unit is associated a fatgraph building block as illustrated in Figure~\ref{fig:buildingblock}.  These building blocks are concatenated to
produce a model of the backbone as illustrated in Figure~\ref{fig:concatenate},
where the determination of whether the edge connecting the two building blocks is twisted is based on input iii).
Specifically,
we shall associate to each peptide unit a positively oriented orthonormal 3-frame determined from input iii).  A pair of consecutive peptide units thus gives a 
pair of
such 3-frames, and there is a unique element of the Lie group $SO(3)$ mapping one to the other.  Using this, we may assign an element of $SO(3)$ to each 
oriented edge of the graph underlying
the fatgraph model and thereby determine an ``$SO(3)$ graph connection'' (cf.\ the next section) on the underlying graph, which
is a fundamental and independently interesting aspect of our constructions.  This assignment is discretized using the bi-invariant metric on $SO(3)$ to 
determine twisting and define the fatgraph model of the backbone, where there are special considerations to handle the case of cis-Proline, which can be 
detected using inputs i) and iii).

Edges are finally added to this model of the backbone in the natural way, one edge for each hydrogen bond in input ii); see Figure~\ref{fig:addhydrogen}.   
These added edges for hydrogen bonds may be twisted or untwisted, and this determination is again made by considering the $SO(3)$ graph connection.

Section~\ref{connections} discusses generalities about 3-frames and $SO(3)$ graph connections.
Section~\ref{backbonesec} details the concatenation of fatgraph building blocks to construct the model of the backbone, and Section~\ref{hydrogen} explains the 
addition of edges corresponding to hydrogen bonds thus completing the description of the basic model.   The final Section~\ref{generalmodel} discusses the 
general model with bifurcated hydrogen bonds  plus other innovations and extensions of the method.
An alternative to the basic model, which gives
an equivalent but not strongly equivalent fatgraph that is arguably more natural than the basic model, is discussed in Appendix~\ref{alternative},
and the standard motifs of polypeptide secondary structure are described in the alternative model.

\subsection{$SO(3)$ graph connections and 3-frames}\label{connections}

The Lie group $SO(3)$ is the group of three-by-three matrices $A$ whose
entries are real numbers satisfying $A A^t=I$ and $\det(A) = 1$,
where $A^t$ denotes the transpose of $A$
and $I$ denotes the identity matrix.
A {metric}
$d: SO(3)\times SO(3)\to {\mathbb R}$
on $SO(3)$
is said to be {\it bi-invariant} provided
$d(CAD,CBD)=d(A,B)$ for any $A,B,C,D\in SO(3)$.
The Lie group $SO(3)$ supports the unique (up to scale) bi-invariant metric
$$d(A,B)=-{1\over 2}~{\rm trace}\bigl ( {\rm log}(AB^t)\bigr )^2,$$
where the trace of a matrix is the sum of its diagonal entries and the logarithm
is the matrix logarithm
\cite{Liegroups}.

\begin{proposition} \label{traces}
For any $A_1,A_2\in SO(3)$, we have $d(A_1,I)<d(A_2,I)$ if and only if
${\rm trace} (A_2)<~{\rm trace}(A_1)$, where $d$ is the unique bi-invariant metric on $SO(3)$.
\end{proposition}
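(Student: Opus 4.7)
The plan is to reduce both quantities $d(A,I)$ and $\mathrm{trace}(A)$ to simple functions of a single parameter, namely the rotation angle $\theta\in[0,\pi]$ of $A$, and then check that these two functions are comonotone.

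First I would recall the standard spectral description of $SO(3)$: every $A\in SO(3)$ has characteristic polynomial with eigenvalues $1,\, e^{i\theta},\, e^{-i\theta}$ for a uniquely determined $\theta=\theta(A)\in[0,\pi]$. This gives immediately that
\[
\mathrm{trace}(A)=1+2\cos\theta.
\]
Next, for the distance, I would diagonalize $A$ over $\mathbb{C}$ and take the principal branch of the matrix logarithm. Since $A$ is orthogonal with the three eigenvalues above, $\log A$ is conjugate to $\mathrm{diag}(0,i\theta,-i\theta)$, so $(\log A)^2$ is conjugate to $\mathrm{diag}(0,-\theta^2,-\theta^2)$. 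The trace being conjugation-invariant, we obtain
\[
\mathrm{trace}\bigl((\log A)^2\bigr)=-2\theta^2,
\]
and therefore, from the defining formula for the bi-invariant metric,
\[
d(A,I)=-\tfrac{1}{2}\,\mathrm{trace}\bigl((\log A)^2\bigr)=\theta^2.
\]

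Applying this to $A=A_1$ and $A=A_2$ with associated angles $\theta_1,\theta_2\in[0,\pi]$, the claim becomes: $\theta_1^2<\theta_2^2$ if and only if $1+2\cos\theta_2<1+2\cos\theta_1$. Both sides reduce, on the interval $[0,\pi]$, to the statement $\theta_1<\theta_2$, since $\theta\mapsto\theta^2$ is strictly increasing on $[0,\pi]$ and $\theta\mapsto\cos\theta$ is strictly decreasing there. This yields the equivalence.

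The only technical subtlety I anticipate is justifying the use of the principal matrix logarithm, i.e.\ that the eigenvalues $e^{\pm i\theta}$ with $\theta\in[0,\pi]$ give logarithms $\pm i\theta$ consistent with $\log A$ being a real skew-symmetric matrix and with $d$ being well-defined on all of $SO(3)$ (the case $\theta=\pi$ needing a brief remark, since $-I$ sits on the cut locus; here one can choose either of the two branches and both give $\theta^2=\pi^2$, so $d$ extends continuously). Apart from this bookkeeping, the proof is essentially the short computation above.
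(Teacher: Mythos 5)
Your proof is correct and follows essentially the same route as the paper: both reduce $A$ to its rotation angle $\theta\in[0,\pi]$, observe that $\mathrm{trace}(A)=1+2\cos\theta$ is strictly decreasing while $d(A,I)$ is strictly increasing in $\theta$, and conclude by comonotonicity. The only cosmetic difference is that you diagonalize over $\mathbb{C}$ and compute $d(A,I)=\theta^2$ explicitly, whereas the paper conjugates to the real rotation block form and invokes bi-invariance without evaluating the distance.
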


\begin{proof}
For any $A\in SO(3)$, there is $B\in SO(3)$ so that
$$BAB^t=\left (
\begin{array}{ccc}
\hskip2ex{\rm cos}~\theta&{\rm sin}~\theta&0\\
-{\rm sin}~\theta&{\rm cos}~\theta&0\\
0&0&1\\
\end{array}
\right)$$ for some angle $0\leq\theta\leq\pi$, cf.\ \cite{Liegroups}.
It follows from bi-invariance that
$$d(A,I)=d(BAB^t,BIB^t)=d(BAB^t,I)=d(BAB^{-1},I),$$
 i.e., distance to $I$ is a conjugacy invariant, and from the
formula for $d$ that $d(A,I)$ is a monotone increasing function of $\theta$.
On the other hand, ${\rm trace}(A)={\rm trace}(BAB^{-1})={\rm trace}(BAB^{t})=1+2~{\rm cos}~\theta$ is a monotone
decreasing function of $\theta$ which is also a conjugacy invariant, and the result follows.
\end{proof}

A (positively oriented) {\it 3-frame} is an ordered triple ${\mathcal F}=(\vec u_1,\vec u_2,\vec u_3)$ of three mutually perpendicular unit vectors in ${\mathbb 
R}^3$ so that $\vec u_3=\vec u_1\times\vec u_2$.  For example, the standard unit basis vectors $(\vec i,\vec j,\vec k)$ give a standard 3-frame.

\begin{proposition}\label{3-frames}
An ordered pair ${\mathcal F}=(\vec u_1,\vec u_2,\vec u_3)$ and ${\mathcal G}=(\vec v_1,\vec v_2,\vec v_3)$ of 3-frames uniquely determines an element $D\in 
SO(3)$, where $D\vec u_i=\vec v_i$, for $i=1,2,3$.  Furthermore, the trace of $D$ is given by
$\vec u_1\cdot \vec v_1+\vec u_2\cdot \vec v_2+\vec u_3\cdot \vec v_3$,
where $\cdot$ is the usual dot product of vectors in ${\mathbb R}^3$.
\end{proposition}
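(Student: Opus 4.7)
The plan is to prove the two assertions in sequence by direct linear algebra, using the fact that a positively oriented orthonormal 3-frame is the same data as a matrix in $SO(3)$, read column by column.

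For existence and uniqueness of $D$, I would form the matrices $U$ and $V$ whose $i$-th columns are $\vec u_i$ and $\vec v_i$, respectively. Each is orthogonal because its columns are orthonormal, and each has determinant $+1$ because the frame is positively oriented, so $U,V\in SO(3)$. Setting $D=VU^t$ gives $D\in SO(3)$ and $DU=V$, which says $D\vec u_i=\vec v_i$ for $i=1,2,3$. Uniqueness is immediate: any linear map is determined by its values on a basis, and $\mathcal F$ is a basis of $\mathbb R^3$.

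For the trace formula, the key observation is that the trace of a linear endomorphism is independent of the basis used to represent it by a matrix, so I may compute $\operatorname{trace}(D)$ in the basis $\mathcal F$ rather than the standard basis. Since $\mathcal F$ is orthonormal, the matrix entries of $D$ in that basis are $[D]_{ij}=\vec u_i\cdot D\vec u_j=\vec u_i\cdot\vec v_j$. Summing the diagonal yields $\operatorname{trace}(D)=\vec u_1\cdot\vec v_1+\vec u_2\cdot\vec v_2+\vec u_3\cdot\vec v_3$, as desired. Alternatively, one can argue more matrix-theoretically: $\operatorname{trace}(D)=\operatorname{trace}(VU^t)=\operatorname{trace}(U^tV)$, and the $(i,i)$ entry of $U^tV$ is precisely $\vec u_i\cdot\vec v_i$.

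The argument is essentially routine linear algebra, so there is no serious obstacle; the only thing to be careful about is the bookkeeping of whether frames are encoded as columns or rows, and consistently using the orthonormality of $\mathcal F$ to compute the matrix of $D$ in that basis. I would present the change-of-basis version as the main argument since it makes the symmetry between $\mathcal F$ and $\mathcal G$ explicit and dovetails with Proposition~\ref{traces}, where only the conjugacy class of $D$, not the specific matrix, matters.
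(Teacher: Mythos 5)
Your proof is correct and follows essentially the same route as the paper: the paper also encodes the frames as matrices $A,B\in SO(3)$ sending the standard basis to $\mathcal F,\mathcal G$, sets $D=BA^{-1}=BA^t$ (your $VU^t$), proves uniqueness by noting $D^{-1}D'$ fixes a basis, and obtains the trace formula from ${\rm trace}(BA^{-1})={\rm trace}(A^tB)=\sum_i\vec u_i\cdot\vec v_i$, which is exactly your matrix-theoretic alternative. Your change-of-basis phrasing is only a cosmetic variant of the same computation.
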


\begin{proof}
Express
$$\aligned
\vec u_i &= a_{1i}\vec i + a_{2i}\vec j + a_{3i}\vec k,\\
\vec v_i &= b_{1i}\vec i + b_{2i}\vec j + b_{3i}\vec k,\\
\endaligned$$
for $i=1,2,3$, as linear combinations of $\vec i,\vec j,\vec k$.  The matrices $A=(a_{ij})$ and $B=(b_{ij})$ thus map $\vec i,\vec j,\vec k$ to
$\vec u_1,\vec u_2,\vec u_3$ and
$\vec v_1,\vec v_2,\vec v_3$, respectively.
It follows that the matrix $D=BA^{-1}$ indeed has the desired properties.  If $D'$ is another such matrix, then $D^{-1}D'$ must fix each vector $\vec u_1,\vec 
u_2,\vec u_3$, and hence must agree with the identity proving the first part.  For the second part since trace is a conjugacy invariant,
we have
$${\rm trace}(BA^{-1}) = ~{\rm trace} (A^{-1}B) = ~{\rm trace} (A^t B)=\sum_{i=1}^3 \vec u_i\cdot \vec v_i$$ as was claimed.
\end{proof}

Suppose that $\Gamma$ is a graph.  An {\it $SO(3)$ graph connection} on $\Gamma$ is the assignment of an element
$A_f\in SO(3)$ to each oriented edge $f$ of $\Gamma$ so that
the matrix associated to the reverse of $f$ is the transpose of $A_f$.
Two such assignments $A_f$ and $B_f$ are regarded as equivalent if there is an assignment
$C_u\in SO(3)$ to each vertex $u$ of $\Gamma$ so that
$A_f= C_u B_f C_w^{-1}$, for each oriented edge $f$ of $\Gamma$ with initial point $u$ and terminal point $w$.
An $SO(3)$ graph connection on $\Gamma$ determines an isomorphism class of  flat
principal $SO(3)$ bundles over $\Gamma$, cf.\ \cite{Connections}.  Given an oriented edge-path $\gamma$ in $\Gamma$ described by consecutive oriented edges 
$f_0-f_1-\cdots -f_{k+1}$, where
the terminal point of $f_i$ is the initial point of $f_{i+1}$, for $i=0,\ldots ,k$, the {\it parallel transport operator} of the $SO(3)$ graph connection along 
$\gamma$ is given by the matrix product
$\rho (\gamma )=A_{f_0} A_{f_1}\cdots A_{f_{k}}\in SO(3)$.  In particular, if the terminal point of $f_{k}$ agrees with the initial point of $f_0$ so that 
$\gamma$ is a closed oriented edge-path, then ${\rm trace}(\rho(\gamma))$ is the {\it holonomy} of the graph connection along $\gamma$ and is well-defined on 
the equivalence class of graph connections.

\subsection{Modeling the backbone}\label{backbonesec}

In this section, we shall define our model $T(P)$ for the backbone of a polypeptide structure $P$.
To this end, consider the fatgraph building block depicted in Figure~\ref{fig:buildingblock}, which consists of a
horizontal segment and two vertical segments joined to distinct interior points of the horizontal segment, the vertical segment  on the left lying above and
on the right below the horizontal segment.  Each such building block represents a peptide unit, as is also indicated in the figure, where the left and right 
endpoints of the horizontal segment represent $C_i^\alpha$ and $C_{i+1}^\alpha$ and are labeled by the corresponding residue $R_i$ and $R_{i+1}$, respectively, 
the left and right trivalent vertices represent $C_i$ and $N_{i+1}$, respectively, and the endpoints of the vertical segments above and below the horizontal 
segment represent  $O_i$ and $H_{i+1}$, respectively, or in the case that $R_{i+1}$ is Proline, the endpoint of the vertical segment below the horizontal 
segment
instead represents the non-alpha carbon atom bonded to $N_{i+1}$ in the Proline ring.  In the case of cis-Proline, a more geometrically accurate building block 
would
have the vertical segment on the right also lying above the horizontal segment as indicated by the skinny line in the figure, but we nevertheless use a single 
building block in all cases for convenience.

\begin{figure}[!h]
\begin{center}
\epsffile{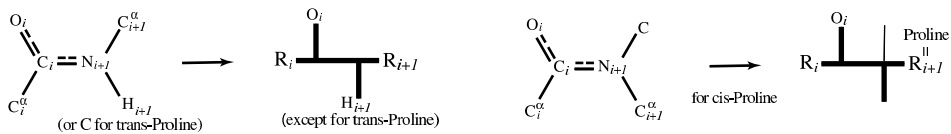}
\caption{Fatgraph building block}
\label{fig:buildingblock}
\end{center}
\end{figure}

Fix a polypeptide structure $P$ and start by defining a fatgraph $T'(P)$ as the concatenation of $L-1$ copies of this fatgraph building block, where the two 
univalent vertices representing
$C_{i+1}^\alpha$ are identified so that the two incident edges are combined to form a single horizontal edge of $T'$ called the $(i+1)$st {\it alpha carbon 
linkage}, for $i=1,\ldots ,L-2$ as illustrated in Figure~\ref{fig:concatenate}; let us also refer to the horizontal edges incident on the vertex corresponding
to $C^\alpha_1$ and $C_L^\alpha$ as the first and $L$th {\it alpha carbon linkages}, respectively, so the $i$th alpha carbon linkage is naturally
labeled by the amino acid $R_i$, for $i=1,\ldots ,L$.
Thus, $T'(P)$ consists of a long horizontal segment composed of $2L-1$ horizontal edges, $L$ of which are alpha carbon linkages and $L-1$ of which correspond to 
peptide bonds, with $2L-2$ short vertical edges attached to it alternately lying above and below the long horizontal segment.  We shall define the fatgraph 
$T(P)$ by specifying twisting on the alpha carbon linkages of $T'(P)$.

\begin{figure}[!h]
\begin{center}
\epsffile{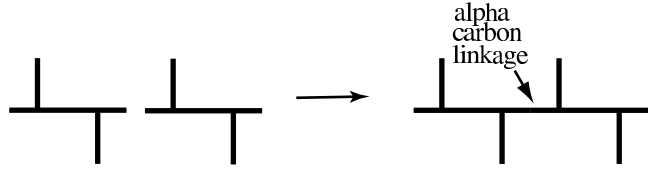}
\caption{Concatenating fatgraph building blocks}
\label{fig:concatenate}
\end{center}
\end{figure}

\begin{construction}\label{frames}
Associate a 3-frame ${\mathcal F}_i=(\vec u_i,\vec v_i,\vec w_i)$ to each peptide unit using input iii) by setting
$$\aligned
\vec u_i&={1\over {|\vec x_i|}}~~\vec x_i,\\\\
\vec v_i&={1\over{|\vec y_i -{{(\vec u_i\cdot \vec y_i}})~\vec u_i|}}~~\biggl (\vec y_i -{{(\vec u_i\cdot \vec y_i}})~\vec u_i\biggr ),\\\\
\vec w_i&=\vec u_i\times \vec v_i,\\
\endaligned$$
for $i=1,\ldots, L-1$, where $|\vec t|$ denotes the norm of the vector $\vec t$.
\end{construction}

Thus, $\vec u_i$ is the unit displacement vector from $C_i$ to $N_{i+1}$,
$\vec v_i$ is the projection of $\vec y_i$ onto the specified perpendicular of $\vec u_i$ in the plane of the peptide unit,
and $\vec w_i$ is the specified normal vector to this plane.

According to Proposition~\ref{3-frames}, there is a unique element $A_i\in SO(3)$ mapping ${\mathcal F}_i$ to ${\mathcal F}_{i+1}$, for $i=1,\ldots ,L-2$.  
Define the {\it backbone graph connection}
on the graph underlying $T'(P)$ to take value $I$ on all oriented edges except on the $i$th alpha carbon linkage oriented from its endpoint representing $N_{i}$ 
to its endpoint representing $C_{i}$, it takes value $A_{i-1}$, for $i=2,\ldots , L-1$.

We shall discretize the backbone graph connection to finally define the backbone fatgraph model $T(P)$.  To this end, in addition to the 3-frames
${\mathcal F}_i=(\vec u_i,\vec v_i,\vec w_i)$ of Construction~\ref{frames}, we consider also the 3-frames
${\mathcal G}_i=(\vec u_i,-\vec v_i,-\vec w_i)$, which correspond to simply turning ${\mathcal F}_i$ upside down by rotating through 180 degrees in three-space 
about the line containing $C_i$ and $N_{i+1}$, for $i=1,\ldots ,L-1$.  Again, by the first part of Proposition \ref{3-frames}, there is a unique element  
$B_i\in SO(3)$ taking ${\mathcal F}_i$
to ${\mathcal G}_{i+1}$.  By construction, $A_{i}$ also takes ${\mathcal G}_{i}$ to ${\mathcal G}_{i+1}$, and $B_{i}$ takes ${\mathcal G_{i}}$ to ${\mathcal 
F}_{i+1}$.

\begin{construction}\label{backbone}
For any polypeptide structure $P$, define the fatgraph $T(P)$ derived from
$T'(P)$ by taking twisting only on certain of the alpha carbon linkages, where the $(i+1)$st alpha carbon   linkage is twisted if and only if
$$\begin{cases}
d(I,B_i)\leq d(I,A_i),& {\rm if}~R_{i+1}~{\rm is~not~cis}\hskip -.5ex-\hskip -.6ex{\rm Proline};\cr
d(I,B_i)\geq d(I,A_i),& {\rm if}~R_{i+1}~{\rm is~cis}\hskip -.5ex-\hskip -.6ex{\rm Proline},\cr
\end{cases}$$
for $i=1,\ldots ,L-2$, where $d$ is the unique bi-invariant metric on $SO(3)$.
\end{construction}

\begin{corollary}The $(i+1)$st alpha carbon linkage of the backbone model $T(P)$ is twisted if and only if
$$\begin{cases}
\vec v_{i}\cdot\vec v_{i+1}~+~\vec w_{i}\cdot \vec w_{i+1}\leq0,& {\rm if}~R_{i+1}~{\rm is~not~Proline~or}~\vec y_i\cdot \vec z_i\geq 0;\cr
\vec v_{i}\cdot\vec v_{i+1}~+~\vec w_{i}\cdot \vec w_{i+1}\geq0,& {\rm if}~R_{i+1}~{\rm is~Proline~and}~\vec y_{i}\cdot \vec z_{i}<0,\cr
\end{cases}$$
for $i=1,\ldots ,L-2$.
\end{corollary}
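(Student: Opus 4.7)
The plan is to combine three earlier results: Proposition~\ref{traces} (which trades the bi-invariant metric for the trace), Proposition~\ref{3-frames} (which computes traces as sums of dot products of the frame vectors), and the planar 120-degree geometry of a peptide unit from Fact A together with Fact C, which together convert the cis/trans distinction into a sign condition on $\vec y_i\cdot\vec z_i$.

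First I would apply Proposition~\ref{traces} to rewrite the condition $d(I,B_i)\leq d(I,A_i)$ of Construction~\ref{backbone} as $\mathrm{trace}(A_i)\leq\mathrm{trace}(B_i)$, and symmetrically for the reverse inequality. Since $A_i$ sends $\mathcal{F}_i=(\vec u_i,\vec v_i,\vec w_i)$ to $\mathcal{F}_{i+1}=(\vec u_{i+1},\vec v_{i+1},\vec w_{i+1})$, Proposition~\ref{3-frames} gives
$$\mathrm{trace}(A_i)=\vec u_i\cdot\vec u_{i+1}+\vec v_i\cdot\vec v_{i+1}+\vec w_i\cdot\vec w_{i+1};$$
and since $B_i$ sends $\mathcal{F}_i$ to $\mathcal{G}_{i+1}=(\vec u_{i+1},-\vec v_{i+1},-\vec w_{i+1})$, the same proposition gives
$$\mathrm{trace}(B_i)=\vec u_i\cdot\vec u_{i+1}-\vec v_i\cdot\vec v_{i+1}-\vec w_i\cdot\vec w_{i+1}.$$
The common $\vec u$-term cancels in the comparison, so the two trace inequalities become precisely $\vec v_i\cdot\vec v_{i+1}+\vec w_i\cdot\vec w_{i+1}\leq 0$ and $\geq 0$ respectively, matching the right-hand sides in the corollary.

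The remaining step, and the main obstacle, is to recognize cis-Proline from the input data: namely, to show that when $R_{i+1}$ is Proline, the cis-conformation is detected by $\vec y_i\cdot\vec z_i<0$. By Fact A the peptide unit is planar with all bond angles equal to 120 degrees, so at the vertex $C_i$ the angle between $-\vec y_i$ and $\vec x_i$ is 120 degrees, whence the angle between $\vec y_i$ and $\vec u_i$ is 60 degrees, and analogously at $N_{i+1}$ the angle between $\vec z_i$ and $\vec u_i$ is 60 degrees. Decomposing $\vec y_i=(|\vec y_i|/2)\vec u_i+\vec y_i^{\,\perp}$ and $\vec z_i=(|\vec z_i|/2)\vec u_i+\vec z_i^{\,\perp}$ with $|\vec y_i^{\,\perp}|=|\vec y_i|\sqrt{3}/2$ and $|\vec z_i^{\,\perp}|=|\vec z_i|\sqrt{3}/2$ yields
$$\vec y_i\cdot\vec z_i=\tfrac{1}{4}|\vec y_i||\vec z_i|+\vec y_i^{\,\perp}\cdot\vec z_i^{\,\perp}.$$
By Fact C the two alpha carbons lie on opposite sides of the peptide bond axis in the trans-conformation, forcing $\vec y_i^{\,\perp}$ and $\vec z_i^{\,\perp}$ to point in the same direction in the plane of the peptide unit, so $\vec y_i^{\,\perp}\cdot\vec z_i^{\,\perp}=\tfrac{3}{4}|\vec y_i||\vec z_i|$ and $\vec y_i\cdot\vec z_i>0$; in the cis-conformation they point in opposite directions, so $\vec y_i^{\,\perp}\cdot\vec z_i^{\,\perp}=-\tfrac{3}{4}|\vec y_i||\vec z_i|$ and $\vec y_i\cdot\vec z_i<0$. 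Combining with the trace reduction identifies the two cases of Construction~\ref{backbone} with the two cases of the corollary. The trace computation is essentially bookkeeping; the geometric content sits in this cis/trans identification, where the 60-degree angles forced by Fact A make the sign of $\vec y_i\cdot\vec z_i$ genuinely determined by the conformation.
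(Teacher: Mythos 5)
Your proposal is correct and follows essentially the same route as the paper: Proposition~\ref{traces} converts the metric comparison in Construction~\ref{backbone} into a trace comparison, Proposition~\ref{3-frames} turns the traces of $A_i$ and $B_i$ into sums of dot products of the frame vectors, and the common $\vec u_i\cdot\vec u_{i+1}$ term cancels to leave the stated sign condition. The only difference is that where the paper simply asserts that a Proline residue $R_{i+1}$ is in the cis-conformation if and only if $\vec y_i\cdot\vec z_i<0$, you verify this from the idealized planar $120$-degree geometry of Facts A and C, which is a sound and welcome filling-in of that step.
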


\begin{proof}
According to Proposition \ref{traces}, $d(A_i,I)\leq d(B_i,I)$ if and only if
${\rm trace}(B_i) \leq ~{\rm trace}(A_i)$.   According to the second part of Proposition \ref{3-frames}, we have
$$\aligned
{\rm trace}(A_i)&=\vec u_{i}\cdot \vec u_{i+1}~+~\vec v_{i}\cdot \vec v_{i+1} ~+~\vec w_{i}\cdot \vec w_{i+1},\\
{\rm trace}(B_i)&=\vec u_{i}\cdot \vec u_{i+1}~-~\vec v_{i}\cdot \vec v_{i+1}~-~\vec w_{i}\cdot \vec w_{i+1},\\
\endaligned$$
so that ${\rm trace}(A_i)-{\rm trace}(B_i)=2(\vec v_{i}\cdot \vec v_{i+1} +\vec w_{i}\cdot \vec w_{i+1})$.

Thus, if $R_{i+1}$ is not Proline, then we twist the $(i+1)$st alpha carbon linkage if and only if ${\mathcal F}_i$ is closer to ${\mathcal G}_{i+1}$ than it is 
to
${\mathcal F}_{i+1}$ in the sense that $d(I,A_i)\geq d(I,B_i)$, and this is our natural discretization of the backbone graph connection in Construction 
\ref{backbone} in this case.
If $R_{i+1}$ is Proline, then it is in the cis-conformation if and only if $\vec y_i\cdot \vec z_i<0$,
so we twist the $(i+1)$st alpha carbon linkage for cis-Proline only if $d(I,A_i)\leq d(I,B_i)$.  To see that this is the natural discretization of the backbone 
graph connection in this case,
notice that the 3-frame ${\mathcal F}_i$ in Construction~\ref{frames} is determined using the displacement vectors $\vec x_i$ from $C_i$ to $N_{i+1}$ and
$\vec y_i$ from $C_{i}^\alpha$ to $C_i$, which are insensitive to whether $R_{i+1}$ is in the cis-conformation.  It is therefore only upon exiting a cis-Proline 
along the backbone
that the earlier determination should be modified since the latter displacement vector should be replaced by its antipode.
\end{proof}

\begin{figure}[!h]
\begin{center}
\epsffile{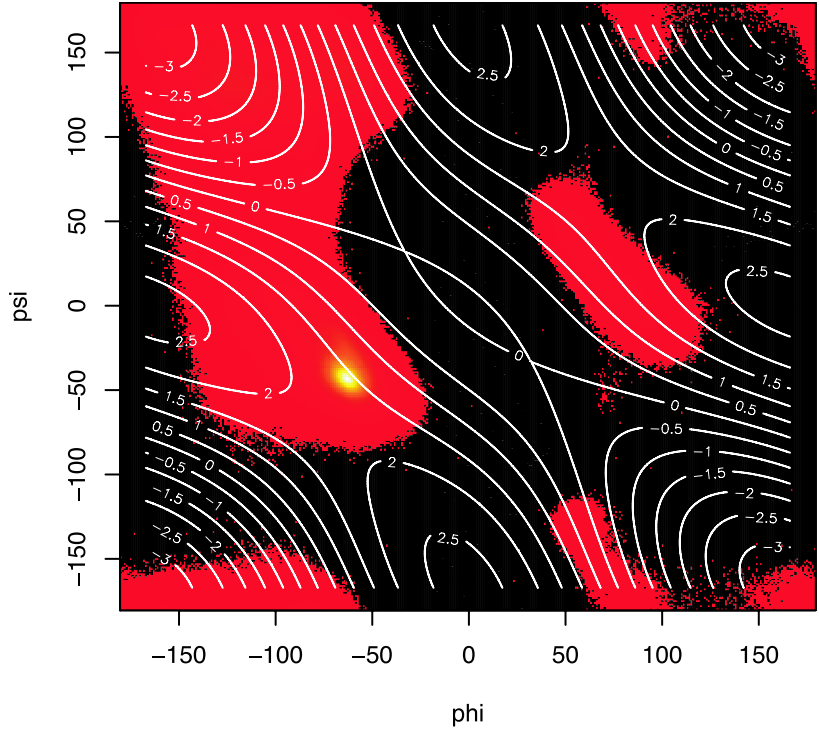}
\caption{Level sets of ${\rm trace}(A)-{\rm trace}(B)$ on a Ramachandran plot}
\label{fig:rama}
\end{center}
\end{figure}

Define the {\it flip sequence} of $G(P)$ to be the
word in the alphabet $\{ {\rm F,N}\}$ whose $i$th letter is N if and only if the $(i+1)$st alpha carbon linkage is untwisted, for $i=1,\ldots ,L(G)-2$.  The 
flip sequence thus gives a discrete invariant  assigned to each alpha carbon linkage derived from the conformational geometry along the backbone.  The flip 
sequence can be determined directly from the conformational angles along the backbone using the following result.

\begin{proposition}\label{AandB}
Under the idealized geometric assumptions of tetrahedral angles among bonds at each alpha carbon atom and $120$-degree angles between bonds within a peptide unit, the matrix $A=A_i$ in Construction \ref{backbone} can be calculated in terms of the conformational angles $\varphi=\varphi_i$ and $\psi=\psi_i$ as follows:

$$A=B_3(\varphi) B_2(\varphi+\psi)
 \begin{pmatrix}
 -{1\over 2}&{\sqrt{3}\over 2}&~~0\\
 {\sqrt{3}\over 2}&{1\over 2}&~~~0\\
 0&0&-1\\
\end{pmatrix},
$$
where
\begin{align*}
B_3(\varphi)=
\begin{pmatrix}
{2\over 3}-{{C^2}\over 3}+{{S^2}\over 6}
&-2\bigl [{{\sqrt 2 C}\over 3}+{{S^2}\over{4\sqrt{3}}}\bigl ]
&2\bigl [{{CS}\over{2\sqrt{3}}}-{S\over{3\sqrt{2}}}\bigl ]\\
2\bigl [{{\sqrt{2}C}\over 3}-{{S^2}\over{4\sqrt{3}}}\bigl ]
&{2\over 3}-{{C^2}\over 3}-{{S^2}\over 6}
&-2\bigl [{{CS}\over 6}+{S\over\sqrt{6}}\bigl ]\\
2\bigl [{{CS}\over{2\sqrt{3}}}+{S\over{3\sqrt{2}}}\bigl ]
&2\bigl [{S\over\sqrt{6}}-{{CS}\over 6}\bigl ]
&{2\over 3}+{{C^2}\over 3}-{{S^2}\over 3}
\end{pmatrix},~{\rm for}~~
\aligned
C&={\rm cos}~\varphi,\\
S&={\rm sin}~\varphi,\\
\endaligned
\end{align*}

\begin{align*}
B_2(\varphi+\psi)=
\begin{pmatrix}
1-\frac32 S^2
&\frac{\sqrt{3}}{2} S^2
&\sqrt{3}CS\\
\frac{\sqrt{3}}{2}S^2
&1-\frac12S^2
&-CS\\
-\sqrt{3}CS
&CS
&1-2S^2\\
\end{pmatrix},~{\rm for}~~
\aligned
C&={\rm cos}~{{\varphi+\psi}\over 2},\\
S&={\rm sin}~{{\varphi+\psi}\over 2}.\\
\endaligned
\end{align*}
Explicitly, this is the representative $A=A_i$ in its conjugacy class
for which the 3-frame vectors $\vec u_i=\vec i, \vec v_i= \vec j, \vec w_i=\vec k$
in Construction \ref{frames}
are given by the standard unit basis vectors.
\end{proposition}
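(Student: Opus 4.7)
The plan is to verify the formula by a direct geometric computation that traces the backbone from $\mathcal{F}_i$ to $\mathcal{F}_{i+1}$ and expresses the transition as a product of explicit rotations. Using the conjugacy freedom, I will first place the $i$th peptide unit in standard position: $C_i$ at the origin, $\vec u_i = \vec i$ along the positive $x$-axis, the peptide plane as the $xy$-plane, and $C^\alpha_i$ chosen so that $\vec v_i = \vec j$ and $\vec w_i = \vec k$. By Fact A (planarity and 120-degree angles within each peptide unit) together with Fact C (trans-conformation), the positions of $N_{i+1}$ and $C^\alpha_{i+1}$ in these coordinates are then completely determined; in particular $C^\alpha_{i+1}$ lies in the $xy$-plane on the opposite side of the line $C_i N_{i+1}$ from $C^\alpha_i$, in the direction $(1/2,\sqrt{3}/2,0)$ from $N_{i+1}$.

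Second, I will trace the backbone from $C^\alpha_{i+1}$ into the $(i+1)$st peptide unit bond by bond. The direction of $C^\alpha_{i+1}-C_{i+1}$ is determined by the tetrahedral geometry at $C^\alpha_{i+1}$ (Fact B) and the dihedral angle $\varphi$ about the axis $N_{i+1}-C^\alpha_{i+1}$; the direction of $C_{i+1}-N_{i+2}$ is then determined by the 120-degree angle at $C_{i+1}$ and the dihedral $\psi$ about the axis $C^\alpha_{i+1}-C_{i+1}$; and the planarity of the $(i+1)$st peptide unit (Fact A) pins down the frame $\mathcal{F}_{i+1}$ uniquely. This yields an explicit factorization of $A_i$ as a composition alternating rigid rotations (coming from the in-plane 120-degree pivots and the tetrahedral pivot at $C^\alpha_{i+1}$) with the two dihedral rotations by $\varphi$ and $\psi$.

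Third, I will regroup this factorization into the stated form $B_3(\varphi) B_2(\varphi+\psi) M$. A direct check reveals that $B_2(\theta)$ is exactly the rotation by $\theta$ around the axis $(1/2,\sqrt{3}/2,0)$, which in our coordinates is the bond axis $N_{i+1}-C^\alpha_{i+1}$; the combined argument $\varphi+\psi$ arises naturally by conjugating the $\psi$-rotation through the intervening fixed rotations, so that its angle additively absorbs the earlier $\varphi$ contribution. The trailing matrix $M$ is then the fixed rigid rotation needed to reconcile the frame at the end of the traversal with $\mathcal{F}_{i+1}$ as defined by Construction \ref{frames}, and the remaining factor $B_3(\varphi)$ is the composition of the tetrahedral-angle rotation at $C^\alpha_{i+1}$ with the leftover dihedral rotation by $\varphi$. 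The constants $1/3$, $\sqrt{2}/3$, and $2\sqrt{2}/3$ throughout $B_3$ come from the tetrahedral angle $\arccos(-1/3)$ with sine $2\sqrt{2}/3$, and the quadratic dependence on $\cos\varphi,\sin\varphi$ arises because $B_3$ is itself a composition of two $\varphi$-dependent rotations.

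The final verification matches both sides entry by entry, equivalently by computing the images of $\vec i,\vec j,\vec k$ under the claimed product and comparing with the direct geometric determination of $\vec u_{i+1},\vec v_{i+1},\vec w_{i+1}$. The main obstacle is bookkeeping rather than conceptual: the axes of each rotation must be consistently expressed in the $\mathcal{F}_i$ basis, the sign conventions for dihedral angles and for the normal $\vec w_i=\vec u_i\times\vec v_i$ must be tracked throughout, and standard double- and half-angle identities must be applied to recognize the quadratic entries as the specific closed-form expressions of $B_3(\varphi)$ and $B_2(\varphi+\psi)$.
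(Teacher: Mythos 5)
Your proposal follows essentially the same route as the paper's proof: both trace the standard 3-frame along the backbone one bond at a time under the idealized geometry, then use conjugation to slide the $\psi$-rotation onto the $N_{i+1}$--$C^\alpha_{i+1}$ axis so that it merges with $\varphi$ into the single factor $B_2(\varphi+\psi)$ (your identification of $B_2(\theta)$ as the rotation by $\theta$ about $(1/2,\sqrt{3}/2,0)$ is exactly the paper's $(\theta, B_1(\pi/3)\vec i)$), leaving the constant trailing rotation and a final entrywise verification. The only slight imprecision is that $B_3(\varphi)$ is the conjugate of the fixed rotation by $\pi-\theta$ (with $\cos\theta=-1/3$) by the $\varphi$-dihedral rotation, i.e.\ a single rotation about a $\varphi$-dependent axis, rather than literally a two-factor product, but this does not affect the argument.
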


\begin{proof}
Let $\xi$ be an angle and $\vec v$ be a non-zero vector in $\mathbb{R}^3$. We denote by $(\xi,\vec v)$ the linear transformation $\mathbb{R}^3\rightarrow\mathbb{R}^3$ which rotates $\mathbb{R}^3$ through the angle $\xi$ around the line spanned by $\vec v$ in the right-handed sense in the direction of $\vec v$.  By following the standard 3-frame along the backbone in the natural way one bond at a time, we find
\begin{align*}
A=B_6(\varphi,\psi)B_5(\varphi,\psi)B_4(\varphi,\psi)B_3(\varphi)B_2(\varphi)B_1(\pi/3)
\end{align*}
where
\begin{align*}
B_1(\xi)=(\xi,\vec k),\quad B_2(\varphi)=(\varphi,B_1(\pi/3)\vec i ),\quad B_3(\phi)=(\pi-\theta,B_2(\phi)\vec k),
\end{align*}
\begin{align*}
B_4(\varphi,\psi)=(\psi,B_3(\varphi)B_1(\pi/3)\vec i),\quad B_5(\varphi,\psi)=(2\pi/3,-B_4(\varphi,\psi)B_3(\varphi)B_2(\varphi)\vec k),
\end{align*}
\begin{align*}
B_6(\varphi,\psi)=(\pi,B_5(\varphi,\psi)B_4(\varphi,\psi)B_3(\varphi)B_2(\varphi)B_1(\pi/3)\vec j),
\end{align*}
and where $\theta=2\arctan(\sqrt{2})$ is the tetrahedral angle $\approx$ 109.5 degrees, for which ${\rm cos}~\theta =-{1\over 3}$.

We observe that
$$B_4(\varphi,\psi)B_3(\varphi)=B_3(\varphi)B_2(\psi)$$ whence
$$B_4(\varphi,\psi)B_3(\varphi)B_2(\varphi)=B_3(\varphi)B_2(\varphi+\psi),$$
and therefore
$$A=B_6(\varphi,\psi)~B_3(\phi)~B_2(\varphi+\psi)~B_1(-\pi/3).$$
Setting $B_0=(\pi  ,\vec j)$, we conclude
\begin{align*}
A=B_3(\varphi)B_2(\varphi+\psi)B_1(-\pi/3)B_0,.
\end{align*}
which devolves after some computation to the given expression.
\end{proof}

\begin{remark}\label{ramaplot}
It is interesting to graph the level sets of ${\rm trace}(A)-{\rm trace}(B)$ on the Ramachandran
plot, i.e., the plot of pairs of conformational angles $(\varphi_i,\psi_i)$,  for the entire CATH database
\cite{CATH}
using Proposition \ref{AandB} as depicted in Figure~\ref{fig:rama},
where the matrix $B=B_i$ of Construction~\ref{frames} is obtained from $A=A_i$ in Proposition \ref{AandB} 
by pre-composing it with rotation by $\pi$ about $\vec i$.
In particular, the zero level set fairly well avoids highly populated regions, so the case 
of near equality
in Construction \ref{backbone} is a relatively rare phenomenon for proteins.\footnote{Indeed, further scrutiny of detail in Figure \ref{fig:rama}, which is not depicted, shows that the zero level set does penetrate into conformations of ``beta turns of types II and VI'', cf. the discussion of Figure \ref{fig:alphanbeta}.} 
\end{remark}

\subsection{Modeling hydrogen bonds}\label{hydrogen}

The fatgraph model $T(P)$ of the backbone of a polypeptide structure $P$ defined in the previous section is here completed to our fatgraph model $G(P)$.  Just 
as in the previous section,
we shall first define another fatgraph $G'(P)$ from which $G(P)$ is derived by further twisting certain of its edges.  As described in the previous section, 
$T(P)$
consists of a long horizontal segment, certain of whose alpha carbon linkages are twisted, together with small vertical segments alternately lying above and 
below the long
horizontal segment, where the $(i+1)$st alpha carbon linkage is labeled by its corresponding amino acid $R_{i+1}$, for $i=1,\ldots, L$.  The endpoints of the 
vertical segments above and below the horizontal segment respectively represent the atoms $O_i$ and $H_{i+1}$ except for the vertical segments below the 
horizontal segment preceding an alpha carbon linkage labeled by Proline, whose endpoint represents the non-alpha carbon atom bonded to $N_{i+1}$ in the 
corresponding Proline ring, for $i=1,\ldots ,L-1$.

\begin{figure}[!h]
\begin{center}
\epsffile{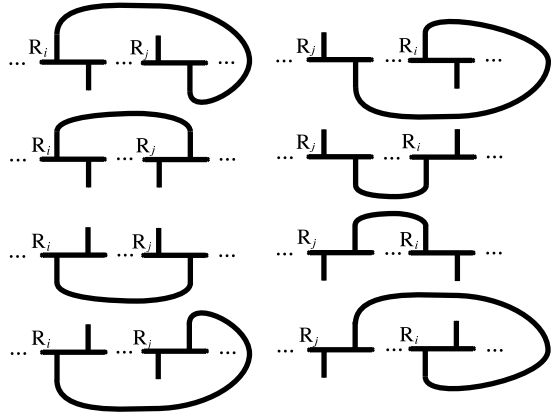}
\caption{Adding edges to $T(P)$ for hydrogen bonds}
\label{fig:addhydrogen}
\end{center}
\end{figure}

\begin{construction}\label{constructfatgraph'}
For each $(i,j)\in{\mathcal B}$ in input ii), adjoin an edge to $T(P)$ without introducing new vertices connecting the endpoints of short vertical segments 
corresponding to $H_i$ and $O_j$
to produce a fatgraph denoted $G'(P)$.
\end{construction}

\noindent See Figure~\ref{fig:addhydrogen}.  It is important to emphasize that the relative positions of these added edges corresponding to hydrogen bonds other 
than their endpoints are completely immaterial to the strong equivalence class of $G'(P)$.  The edges of $T(P)$ corresponding to the non alpha carbon atoms in a 
Proline rings are never hydrogen bonded in our model.

To complete the construction of $G(P)$, it remains only to determine which edges of the fatgraph $G'(P)$ are twisted.
To this end, suppose that  $(i,j)\in{\mathcal B}$ in input ii).  According to our enumeration of peptide units, $H_i$ occurs in peptide unit $i-1$ and $O_j$ 
occurs in peptide unit $j$, and there are corresponding 3-frames
$$\aligned
{\mathcal F}_{i-1}&=(\vec u_{i-1},\vec v_{i-1},\vec w_{i-1}),\\
{\mathcal F}_{j}&=(\vec u_j,\vec v_j,\vec w_j),\\
{\mathcal G}_{j}&=(\vec u_j,-\vec v_j,-\vec w_j),\\
\endaligned$$
from Construction \ref{frames}.

\begin{construction}\label{constructfatgraph}
As before by the first part of Proposition \ref{3-frames}, there are unique $D_{i,j},E_{i,j}\in SO(3)$ taking ${\mathcal F}_{i-1}$ to ${\mathcal F}_j,{\mathcal 
G}_j$ respectively.
An edge of $G'(P)$ corresponding to the hydrogen bond $(i,j)\in{\mathcal B}$ is
twisted in $G(P)$ if and only if
$$d(I,E_{i,j})\leq d(I,D_{i,j}),$$
where $d$ is the unique bi-invariant metric on $SO(3)$.
\end{construction}

As before, a short computation gives:

\begin{corollary} The edge of $G(P)$ corresponding to the hydrogen bond $(i,j)\in{\mathcal B}$ is
twisted if and only if $\vec v_{i-1}\cdot \vec v_i +\vec w_{i-1}\cdot \vec w_j\leq 0$.
\end{corollary}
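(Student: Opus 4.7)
The plan is to mirror the computation that was carried out for the analogous backbone statement, substituting the pair of 3-frames $({\mathcal F}_{i-1}, {\mathcal F}_j)$ and $({\mathcal F}_{i-1}, {\mathcal G}_j)$ for the consecutive-peptide-unit pair considered before. The key inputs are Proposition \ref{traces}, which converts the metric comparison $d(I,E_{i,j}) \leq d(I,D_{i,j})$ into the reverse-direction trace comparison $\mathrm{trace}(D_{i,j}) \leq \mathrm{trace}(E_{i,j})$, and the second part of Proposition \ref{3-frames}, which evaluates the trace of the unique rotation sending one 3-frame to another as the sum of the pairwise dot products of corresponding basis vectors.

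First I would apply Proposition \ref{3-frames} to the rotation $D_{i,j}$ sending ${\mathcal F}_{i-1}$ to ${\mathcal F}_j$ to obtain
\begin{equation*}
\mathrm{trace}(D_{i,j}) = \vec u_{i-1}\cdot \vec u_j + \vec v_{i-1}\cdot \vec v_j + \vec w_{i-1}\cdot \vec w_j,
\end{equation*}
and likewise to the rotation $E_{i,j}$ sending ${\mathcal F}_{i-1}$ to ${\mathcal G}_j = (\vec u_j,-\vec v_j,-\vec w_j)$ to obtain
\begin{equation*}
\mathrm{trace}(E_{i,j}) = \vec u_{i-1}\cdot \vec u_j - \vec v_{i-1}\cdot \vec v_j - \vec w_{i-1}\cdot \vec w_j.
\end{equation*}
Subtracting yields $\mathrm{trace}(D_{i,j}) - \mathrm{trace}(E_{i,j}) = 2\bigl(\vec v_{i-1}\cdot \vec v_j + \vec w_{i-1}\cdot \vec w_j\bigr)$, so the trace comparison $\mathrm{trace}(D_{i,j}) \leq \mathrm{trace}(E_{i,j})$ is equivalent to the claimed sign condition on the sum of dot products.

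Finally I would combine this with Proposition \ref{traces} and the defining condition of Construction \ref{constructfatgraph}: the edge is twisted precisely when $d(I,E_{i,j}) \leq d(I,D_{i,j})$, which by the proposition is equivalent to $\mathrm{trace}(D_{i,j}) \leq \mathrm{trace}(E_{i,j})$, which by the previous paragraph is equivalent to $\vec v_{i-1}\cdot \vec v_j + \vec w_{i-1}\cdot \vec w_j \leq 0$. Since every step is a direct quotation of a previously established proposition and there are no auxiliary cases to track (in particular no Proline exception enters here, unlike in the backbone construction), there is no real obstacle; the only thing to be careful about is correctly identifying which 3-frames host $H_i$ and $O_j$, namely ${\mathcal F}_{i-1}$ and ${\mathcal F}_j$ respectively, as already specified at the start of Section~\ref{hydrogen}.
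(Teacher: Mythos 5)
Your proposal is correct and is exactly the "short computation" the paper alludes to with "As before": convert the metric comparison of Construction \ref{constructfatgraph} to a trace comparison via Proposition \ref{traces}, evaluate both traces with the second part of Proposition \ref{3-frames} applied to the pairs $({\mathcal F}_{i-1},{\mathcal F}_j)$ and $({\mathcal F}_{i-1},{\mathcal G}_j)$, and subtract. You also correctly arrive at $\vec v_{i-1}\cdot\vec v_j+\vec w_{i-1}\cdot\vec w_j\leq 0$, which is the intended condition (the statement's $\vec v_i$ is a typographical slip for $\vec v_j$), and you are right that no Proline case intervenes here.
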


\begin{remark}\label{holonomy}
The backbone graph connection on the graph underlying $T(P)$ clearly has trivial holonomy since $T(P)$ is contractible.  It extends naturally to an $SO(3)$ 
graph connection on the graph
underlying $G(P)$, where to the oriented edge corresponding to the hydrogen bond connecting $N_i-H_i$ and $O_j$, we assign the unique element of $SO(3)$, whose 
existence is guaranteed by
Proposition~\ref{3-frames}, which maps ${\mathcal F}_{i-1}$ to ${\mathcal F}_j$, for $i=2,\ldots ,L-2$.  This graph connection on $G(P)$ also has trivial 
holonomy by construction.  Our fatgraph model $G(P)$ arises from a discretization of this $SO(3)$ graph connection giving a ${\mathbb Z}/2$ graph connection, 
where the oriented edges with non-trivial holonomy are the twisted ones, and this ${\mathbb Z}/2$ graph connection on the graph underlying $G(P)$ typically does 
not have trivial holonomy.
\end{remark}

\subsection{The basic model and its extensions}\label{generalmodel}

The previous section completed the
definition of our basic fatgraph model $G(P)$ of a polypeptide structure $P$.
Notice that hydrogen bonds and alpha carbon linkages are treated in precisely the same manner in this construction.

A crucial point in practice is that the polypeptide structure itself depends upon data
which must be considered as idealized for various reasons: proteins actually occur in several
closely related conformations, varying under thermal fluctuations for example, whose sampling is corrupted by experimental uncertainties as well as errors.
The fatgraph $G(P)$
must therefore not be taken as defined absolutely, but rather as defined only in some statistical sense as a family of fatgraphs $\{ G(P):P\in{\mathcal P}\}$ 
based on
a collection ${\mathcal P}$ of polypeptide structures which differ from one another by a small number of such idealizations, uncertainties, or errors.  
Properties of the fatgraph $G(P)$ that we can meaningfully assign to the polypeptide structure $P$ must be nearly constant on ${\mathcal P}$ leading to the 
notion of ``robustness'' of invariants of $G(P)$ as descriptors of $P$, which is
discussed  in Section~\ref{protinvariants}.  Nevertheless, the construction of our model has been given based on the inputs  above regarded as exact and 
error-free.

In particular, there is the tacit assumption that there is never equality in the determination of whether to twist in Constructions~\ref{backbone}.
In practice,
$\vec v_{i}\cdot\vec v_{i+1}+\vec w_{i}\cdot \vec w_{i+1} =0$
never occurs exactly, but there is the real possibility that this condition {\sl nearly holds}, that is, we cannot
reliably determine whether to twist if $|\vec v_{i}\cdot\vec v_{i+1}+\vec w_{i}\cdot \vec w_{i+1}|$ is below some small threshold because of experimental 
uncertainty, cf.
Remark \ref{ramaplot}.
There are similar issues in the specification of which hydrogen bonds exist in input ii)
based upon the possibly problematic exact atomic locations from which the electrostatic potentials are inferred as well as whether to twist in 
Construction~\ref{constructfatgraph}.

However, there is the following control over the topological type of $F(G(P))$, which will be the basis for several  of the robust invariants of
fatgraphs and resulting meaningful descriptors of polypeptides studied in Section~\ref{protinvariants}.

\begin{corollary}\label{risrobust}
Let $P,P'$ be polypeptide structures with the same inputs i) but differing in inputs ii-iii) in the determinations of the existence
of $m$ hydrogen bonds and of the twisting of $n$ alpha carbon linkages or hydrogen bonds.  Then
$|r(G(P))-r(G(P'))|\leq m+n$.
\end{corollary}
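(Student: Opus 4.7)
The plan is to interpolate between $G(P)$ and $G(P')$ by a short sequence of fatgraphs, each obtained from the previous by either deleting/adding a single edge or flipping the color of a single edge, and then apply Proposition~\ref{changetwists} together with the triangle inequality.

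First, I would partition the discrepancies into three groups. Since the two polypeptide structures share input i), the set of alpha carbon linkages is identical in the two models, and the primary structure agrees, so the backbone graphs $T'(P)$ and $T'(P')$ coincide as underlying graphs with fattening. Let $\mathcal{B}$ and $\mathcal{B}'$ denote the hydrogen-bond sets; set $m_1 = \#(\mathcal{B} \setminus \mathcal{B}')$ and $m_2 = \#(\mathcal{B}' \setminus \mathcal{B})$, so $m = m_1 + m_2$. The remaining $n$ discrepancies lie on edges common to both fatgraphs — either alpha carbon linkages (always common) or hydrogen-bond edges corresponding to pairs in $\mathcal{B}\cap\mathcal{B}'$ — and consist of edges whose twist color differs between $G(P)$ and $G(P')$.

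Next, define $G_0$ to be the fatgraph obtained from $G(P)$ by deleting the $m_1$ edges corresponding to $\mathcal{B}\setminus\mathcal{B}'$, and $G_0'$ the fatgraph obtained from $G(P')$ by deleting the $m_2$ edges corresponding to $\mathcal{B}'\setminus\mathcal{B}$. Proposition~\ref{changetwists}, applied with $X=\emptyset$ in each case, yields $|r(G(P))-r(G_0)|\leq m_1$ and $|r(G(P'))-r(G_0')|\leq m_2$. By construction $G_0$ and $G_0'$ have identical underlying graphs and identical fattenings; they can differ only in the coloring of edges, and any such difference arises from one of the $n$ recorded twist discrepancies. A second application of Proposition~\ref{changetwists}, now with $Y=\emptyset$, gives $|r(G_0)-r(G_0')|\leq n$. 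The triangle inequality then yields
\[
|r(G(P))-r(G(P'))|\leq m_1 + n + m_2 = m+n,
\]
as claimed.

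There is no real obstacle here, since Proposition~\ref{changetwists} already packages the hard geometric content. The only point requiring care is the bookkeeping: verifying that the partition into the $m_1$, $m_2$, and $n$ discrepancies is exhaustive and disjoint, and that $G_0$ and $G_0'$ genuinely share underlying graph and fattening so that passing between them is purely a recoloring operation — this is where the hypothesis of identical input i) and the fact that the fattening around each vertex of $T'$ is determined by the primary structure alone are used.
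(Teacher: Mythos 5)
Your proof is correct and is essentially the paper's argument: the paper proves this corollary by citing Proposition~\ref{changetwists} (together with the triangle inequality implicit in its proof) exactly as you do, and your interpolation through a common intermediate fatgraph is just the bookkeeping the paper leaves to the reader. Nothing further is needed.
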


\begin{proof}
This is an immediate consequence of Proposition~\ref{changetwists}.
\end{proof}

There are several generalizations of the basic fatgraph model $G(P)$ of a polypeptide structure.  As already mentioned, we might specify energy thresholds 
$E_{-}<E_{+}<0$ and
demand that the potential energy of a hydrogen bond lie in the range between $E_{-}$ and $E_{+}$ in order that it be regarded as a hydrogen bond to include in 
input ii) so as to produce a
fatgraph denoted $G_{E_-,E_+}(P)$.  We shall
describe in Section~\ref{globules} certain experiments with proteins using various such energy thresholds.

One may also model bifurcated hydrogen bonds and allow
hydrogen or oxygen atoms in the peptide units to participate in at most $\beta\geq 1$ hydrogen bonds by altering
the fagraph building block in Figure~\ref{fig:buildingblock} by replacing the univalent vertices
representing hydrogen and oxygen atoms by vertices of valence $\beta +1$.  Different  valencies less than $\beta +1$ for oxygen and hydrogen can be implemented
with this single building block by appropriately imposing different constraints in input ii).  Natural fattenings on these new vertices representing hydrogen or 
oxygen
atoms are determined as follows: project centers of partners in bonding into the plane of the peptide unit with origin at the center of the corresponding 
nitrogen or carbon atom, respectively, where the positive $x$-axis contains the bond axis of the incident peptide bond, and take these projections in the 
ordering of increasing argument.

Our definition of polypeptide structure assumes that there are no atoms missing along the backbone, and this is actually somewhat problematic in practice.  A 
useful aspect of the methods in Section~\ref{backbonesec} is that such gaps present no essential difficulty since an edge connecting fatgraph building blocks 
can just as well be taken to represent a gap between peptide units as to represent an alpha carbon linkage as in our
model articulated before.  The determination of twisting on these new gap edges is just as in Construction~\ref{backbone}, but now the 3-frames in this 
construction do not correspond to consecutive peptide units.

A more profound extension of the method is to use the bi-invariant metric on $SO(3)$ to give finer discretizations of the $SO(3)$ graph connection on $G(P)$ 
discussed in
Remark~\ref{holonomy}.  For example, rather than our ${\mathbb Z}/2$ graph connection modeled by fatgraphs, one can easily implement the analogous construction 
of an ${\mathbb Z}/n$ graph connection based on the natural extensions of Constructions~\ref{backbone} and \ref{constructfatgraph} modeled by graphs with 
fattenings and ${\mathbb Z}/n$-colorings.  These ``rotamer fatgraphs'' capture the ``protein rotamers'' which are highly studied in the biophysics literature.

A still more profound innovation rests on the observation that
our techniques are of greater utility and can be adapted to model essentially any molecule since 3-frames can analogously be associated to any bond axis.  One 
might thus
model entire amino acids themselves as rotamer fatgraphs to give a truly realistic model of a polypeptide.

Furthermore, the discussion thus far has concentrated on molecules at equilibrium, and one might instead regard the fatgraph or rotamer fatgraph as a dynamic 
model by taking time or temperature dependent inputs i-iii).

\section{Robust polypeptide descriptors}\label{protinvariants}

We have described in the previous sections the fatgraph $G(P)$ of a polypeptide structure $P$ with simple hydrogen bonding determined by inputs i-iii) based 
upon
specified energy thresholds.  With the understanding that the input data can be problematic due to errors and experimental indeterminacies, we must consider the 
fatgraph as defined only in a statistical sense, where a family of fatgraphs arises from a collection ${\mathcal P}\ni P$ of polypeptide structures which differ 
from $P$ by a small number of such errors or indeterminacies.  As such, only certain properties of the fatgraph  $G(P)$ can meaningfully be assigned as 
descriptors of $P$, namely, those properties which do not vary significantly over the various polypeptide structures in ${\mathcal P}$.
In this section, we shall first formalize this notion of meaningful properties of fatgraphs, and then describe and discuss a myriad of such polypeptide 
descriptors.

Let ${\mathcal G}$ denote the collection of all strong equivalence classes of fatgraphs $G(P)$ arising from non-empty polypeptide structures $P$.  We may 
perform the following
modifications to any $G\in{\mathcal G}$ leaving all other data unchanged:

\smallskip

\noindent {\bf Mutation i)}~change the color of one alpha carbon linkage of $G$;

\smallskip

\noindent {\bf Mutation ii)}~change the color of
one edge of $G$ corresponding to a hydrogen bond;

\smallskip

\noindent {\bf Mutation iii)}~add or delete an untwisted edge of $G$ corresponding to a hydrogen bond;

\smallskip

\noindent{\bf Mutation  iv)}~replace a fatgraph building block of $G$ by two building blocks
connected by an untwisted alpha carbon linkage, where any edges corresponding to hydrogen bonds incident on the
original building block are connected to the replacement building block that occurs first along the
backbone from $N$ to $C$ termini, and the reverse of this operation.

\smallskip

Suppose that $X$ is some set with metric
$\rho$.  We say that a function $\nu:{\mathcal G}\to X$ is  {\it $\kappa$-robust of radius $Q$ on ${\mathcal H}\subseteq{\mathcal G}$}, where $\kappa\geq 0$ is 
real and
$Q\geq 0$ is an integer,
if $\rho(\nu(G),\nu(G'))\leq q\kappa$ whenever $G'$ arises from $G\in{\mathcal H}$ by
a sequence
$$G=G_0-G_1-\cdots -G_q=G',~{\rm with}~q\leq Q,$$
where $G_{j+1}$ arises from $G_j$ by a single mutation of type i-iv), for $j=0,\ldots ,q-1$.
If $\nu$ is $\kappa$-robust of infinite radius on all of ${\mathcal G}$, then we say simply that $\nu$ is $\kappa$-robust.

By definition if $X$ supports operations of addition and scalar multiplication and if $\nu$ is $\kappa$-robust of radius $Q$ on ${\mathcal H}$, then for any $\alpha\in{\mathbb R}$, $\alpha\nu$ is $\alpha \kappa$-robust of radius 
$Q$ on ${\mathcal H}$, and furthermore, if
$\nu'$ is $\kappa'$-robust of radius $Q'$ on ${\mathcal H}'$, then $\nu\pm\nu'$ is $(\kappa+\kappa')$-robust of radius ${\rm min}(Q,Q')$ on ${\mathcal 
H}\cap{\mathcal H}'$.

It is only the $\kappa$-robust functions $\nu$ of reasonably large radius $Q$ and sufficiently small
value of $\kappa$ on ${\mathcal H}\subseteq{\mathcal G}$ which are significant characteristics of polypeptide structures whose fatgraphs $G$ lie in ${\mathcal 
H}$.  This is because
a combination of mutations arising from $q\leq Q$ errors or indeterminacies of the input data
then affects the value of $\nu(G)$ by an amount bounded by $q\kappa$, which must be small compared
to the value of $\nu(G)$.

It is clear that any two fatgraphs arising from a non-empty polypeptide structure are related by a finite sequence of mutations i-iv).  By assigning a penalty 
of some non-zero magnitude to each type of mutation, the {\it mutation distance} between two such fatgraphs can be defined as the minimum sum of penalties 
corresponding to sequences of mutations relating them.  This gives a metric, albeit seemingly difficult to compute, on ${\mathcal G}$ itself, and we may regard 
two polypeptide structures as being similar if the mutation distance between their corresponding fatgraphs is small.  The assignment of fatgraph $G(P)$ to 
polypeptide structure $P$ is $\kappa$-robust by definition with this metric, where the parameter $\kappa$ is the largest penalty.

For several obvious numerical examples, the numbers $L(G)$ of residues and $B(G)$ of
hydrogen bonds of $G$ are $1$-robust, and the Euler characteristic
$\chi(G)$ of $G$ or $F(G)$ is likewise $1$-robust since
$\chi (G)=1-B(G)$. The numbers $v(G)=2L(G)-2$ of vertices and $e(G)=B(G)+2L(G)-3$ of edges of $G$
are therefore $2$- and $3$-robust respectively.  The number of twisted edges
corresponding to hydrogen bonds and the number of twisted alpha carbon linkages of $G$ are each also clearly $1$-robust.

With $X$ the set of all words of finite length in the alphabet $\{ F,N\}$ given the edit distance with unit operation cost \cite{Gusfield},
the flip sequence of $G$ is $1$-robust by definition.
In contrast, the plus/minus sequence of  the alternative model $K(P)$ in Appendix \ref{alternative} as a word in the alphabet $\{ +,-\}$ with the same metric
is not $\kappa$-robust  of radius greater than zero on
${\mathcal G}$ for any $\kappa$
since a single modification of type i) to $G$ can change
all the entries of the plus/minus sequence.

For another negative example with $X={\mathbb R}$, the genus $g(G)$ of $F(G)$ is not $\kappa$-robust of any radius greater than zero for any $\kappa$ on 
${\mathcal G}$
since a single modification of type  ii) on an untwisted $G$ can produce a fatgraph $G'$ with $F(G')$ non-orientable, and $|g(G)-g(G')|=[1+B(G)-r(G)]/2$.
In contrast, the modified genus is robust of infinite radius according to the following result.

\begin{proposition}\label{countem}
The number $r(G)$ of boundary components and the modified genus $g^*(G)$ of $F(G)$ are $1$-robust.
Moreover, the number of appearances in the flip sequence of $G$ of any fixed word of length $k$ in the alphabet $\{ 0,1\}$ is $k$-robust.
\end{proposition}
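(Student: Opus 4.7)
The proof splits by mutation type. Mutations (i) and (ii) each recolor a single edge, and Proposition~\ref{changetwists} (with $\#X=1,\#Y=0$) yields $|r(G)-r(G')|\leq 1$ without changing the underlying graph, so $\chi$ is preserved. Mutation (iii) adds or removes a single untwisted edge, and Proposition~\ref{changetwists} (with $\#X=0,\#Y=1$) again gives $|r(G)-r(G')|\leq 1$, while $\chi$ changes by exactly $\mp 1$. For mutation (iv), I would argue directly that inserting the new building block together with its connecting untwisted alpha carbon linkage attaches a contractible subgraph (a tree, since by hypothesis no hydrogen bond edges are attached to the second block) to $F(G)$ along a single edge; because the fattening of a tree is a disk, this insertion can only thicken an existing boundary arc and therefore does not change the homeomorphism type of $F(G)$, leaving $r$ and $\chi$ both fixed. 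Applying $g^*=(2-r-\chi)/2$ from Section~\ref{surfaces},
$$|g^*(G)-g^*(G')|\leq \frac{|r(G)-r(G')|+|\chi(G)-\chi(G')|}{2}\leq 1$$
in each of the four cases, establishing $1$-robustness of $r$ and $g^*$.

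For the second claim, first observe that mutations (ii) and (iii) leave all alpha carbon linkages untouched, hence preserve the flip sequence, and mutation (i) alters at most one symbol of the flip sequence (and nothing at all if applied to the first or last alpha carbon linkage, which do not contribute). Mutation (iv), by construction, increases the length of the flip sequence by one by inserting an $\mathrm{N}$ at the position corresponding to the new untwisted linkage. Thus the plan reduces to the combinatorial claim that for any fixed word $w$ of length $k$: (a) a single symbol change in a finite binary word alters the number of occurrences of $w$ by at most $k$, and (b) the insertion of a single symbol into a finite binary word alters this count by at most $k$. For (a), only the (at most) $k$ length-$k$ windows containing the modified position can gain or lose an occurrence of $w$; for (b), at most $k-1$ windows in the old word which bridge the insertion site can lose an occurrence, and at most $k$ new windows containing the inserted symbol can gain one, so the net change lies in $[-(k-1),k]$ and has absolute value at most $k$.

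The main obstacle is the topological analysis for mutation (iv) in the first claim: whereas (i)--(iii) follow directly from Proposition~\ref{changetwists}, verifying that the insertion of a hydrogen-bond-free building block together with one untwisted alpha carbon linkage genuinely preserves the homeomorphism type of $F(G)$ requires unwinding Construction~\ref{constructsurface} to see that the added polygons $P_u$ glue into a disk whose frontier with the rest of the surface is a single arc on the boundary. Once this is verified, the remainder of the argument is the elementary bookkeeping sketched above, and the combinatorial bounds in the second part of the proposition are straightforward window counts.
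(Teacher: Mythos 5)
Your proof is correct and follows essentially the same route as the paper's, which reduces the first claim to Corollary~\ref{risrobust} (itself an immediate consequence of Proposition~\ref{changetwists}) together with the formula $g^*=(1+B-r)/2$, and dismisses the second claim as following essentially by definition. Your explicit topological treatment of mutation~(iv) and the window-counting bound for occurrences of a length-$k$ word are simply the details the paper leaves implicit, and they check out.
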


\begin{proof}
The function $r$ satisfies the required properties by Corollary~\ref{risrobust}, hence so too does
$g^*=(1+B-r)/2$.  The remaining assertion follows essentially by definition.
\end{proof}

Given a closed edge-path $\gamma$ on $G\in{\mathcal G}$, define the {\it peptide-length} of $\gamma$ to be the number of pairs of distinct peptide units visited 
by $\gamma$ and
define the {\it edge-length} of $\gamma$ to be the number of edges of $G$ traversed by $\gamma$, each counted with multiplicity.  For example, the dotted 
boundary components in Figure~\ref{fig:alphanbeta} that are characteristic of alpha helices and beta strands all have peptide-length 4 and various edge-lengths 
4,6,8.
Define the {\it peptide-length spectrum} ${\mathbb P}(G)$
and the {\it edge-length spectrum} ${\mathbb E}(G)$ of $G\in{\mathcal G}$ , respectively,
to be the unordered set of peptide-lengths and edge-lengths of boundary components of $F(G)$.  Let  $\bar{\mathbb P}(G)$
and  $\bar{\mathbb E}(G)$ denote their respective means.
It is worth pointing out that the preponderance of alpha helices and beta strands in practice heavily biases $\bar{\mathbb P}(G)$ towards 4.

Let $X$ denote the collection of all finite unordered collections of natural numbers.  The elements of
a member of $X$ may be ordered by increasing magnitude.
The distance between two such ordered
finite collections of natural numbers may then be defined by standard methods
\cite{Gusfield}, and this induces a metric on $X$ itself.  We may thus regard ${\mathbb P}$ and ${\mathbb E}$ as functions on ${\mathcal G}$ with values in the 
metric space $X$.  As in the proof of Corollary~\ref{risrobust}, these functions are $\kappa$-robust where the parameter $\kappa$ depends on the choice of 
metric.

\begin{lemma}\label{2robust}
Suppose that $\mu:{\mathcal G}\to {\mathbb Z}$ is $k$-robust of radius at least $Q$ on ${\mathcal G}$ and that $\nu:{\mathcal G}\to{\mathbb R}$ is 
$\kappa$-robust of radius $Q$ on
$${\mathcal H}=\{ G\in{\mathcal G}: \mu(G)>kQ~{\rm and}~\nu(G)+Q\kappa \leq[\mu(G)-kQ]^2\}.$$
Then $\nu(G)/\mu(G):{\mathcal G}\to{\mathbb R}$ is $(\kappa+k)$-robust of radius $Q$ on ${\mathcal H}$.
\end{lemma}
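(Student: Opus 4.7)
The plan is to expand the difference $\nu(G')/\mu(G')-\nu(G)/\mu(G)$ by a standard telescope, bound the two resulting pieces using the given robustness of $\mu$ and $\nu$, and then exploit the specific quadratic shape of the inequality defining $\mathcal{H}$ to dominate the denominator. To set this up I would fix $G\in\mathcal{H}$ together with a sequence $G=G_0-G_1-\cdots-G_q=G'$ of $q\le Q$ mutations, and abbreviate $\mu=\mu(G)$, $\mu'=\mu(G')$, $\nu=\nu(G)$, $\nu'=\nu(G')$. The $k$-robustness of $\mu$ on all of $\mathcal{G}$ yields $|\mu-\mu'|\le qk$, and the $\kappa$-robustness of $\nu$ on $\mathcal{H}$, applied at $G\in\mathcal{H}$, yields $|\nu-\nu'|\le q\kappa$. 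Since $\mu\in\mathbb{Z}$ with $\mu>kQ\ge 0$ we have $\mu\ge 1$, and $q\le Q$ gives $\mu'\ge\mu-qk\ge\mu-kQ>0$, so all quotients below are well defined.

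Next I would invoke the identity
\begin{equation*}
\frac{\nu'}{\mu'}-\frac{\nu}{\mu}=\frac{\nu'-\nu}{\mu'}+\nu\cdot\frac{\mu-\mu'}{\mu\mu'},
\end{equation*}
take absolute values (tacitly using $\nu\ge 0$, which is the natural sign in the lemma's intended applications and is all that the one-sided bound defining $\mathcal{H}$ controls), and substitute the two robustness bounds to obtain
\begin{equation*}
\Bigl|\frac{\nu'}{\mu'}-\frac{\nu}{\mu}\Bigr|\le\frac{q\kappa}{\mu-qk}+\frac{qk\,\nu}{\mu(\mu-qk)}.
\end{equation*}
It then remains to verify that this upper bound is at most $q(\kappa+k)$.

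For the final algebraic step I would weaken $\mu-qk\ge\mu-kQ>0$ and clear denominators, reducing the claim to the single inequality $\kappa\mu+k\nu\le(\kappa+k)\mu(\mu-kQ)$. Substituting the hypothesis $\nu\le[\mu-kQ]^2-Q\kappa$ of $\mathcal{H}$ into the left-hand side yields
\begin{equation*}
\kappa\mu+k\nu\le\kappa\mu+k[\mu-kQ]^2-kQ\kappa=(\mu-kQ)\bigl(\kappa+k(\mu-kQ)\bigr),
\end{equation*}
so after dividing the target through by $\mu-kQ>0$ it collapses to $\kappa+k(\mu-kQ)\le(\kappa+k)\mu$, equivalently $\kappa(1-\mu)\le k^2 Q$, which is immediate from $\mu\ge 1$ and $k^2 Q\ge 0$. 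The only delicate point I expect is this last bookkeeping: the quadratic $[\mu-kQ]^2$ in the definition of $\mathcal{H}$ is exactly tuned to absorb the two factors of $\mu$ in the denominator, and the subtractive $-Q\kappa$ is what permits the clean factorization of $\mu-kQ$ out of $\kappa\mu+k\nu$; once these pieces line up, the rest is routine.
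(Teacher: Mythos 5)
Your argument is correct, but it takes a genuinely different route from the paper's. The paper works its way along the mutation sequence: it first shows that $\nu(G_i)/[\mu(G_i)]^2\le 1$ for every intermediate fatgraph (this is exactly where the quadratic inequality defining ${\mathcal H}$ enters), then bounds each consecutive difference $|\nu(G_i)/\mu(G_i)-\nu(G_{i+1})/\mu(G_{i+1})|$ by $\kappa+k$ via a three-case estimate, and concludes by the triangle inequality. You instead compare only the two endpoints, using the decomposition $\nu'/\mu'-\nu/\mu=(\nu'-\nu)/\mu'+\nu(\mu-\mu')/(\mu\mu')$ together with the aggregate bounds $|\nu-\nu'|\le q\kappa$ and $|\mu-\mu'|\le qk$, and then feed the defining inequality of ${\mathcal H}$ directly into the resulting algebraic inequality $\kappa\mu+k\nu\le(\kappa+k)\mu(\mu-kQ)$; your factorization through $(\mu-kQ)\bigl(\kappa+k(\mu-kQ)\bigr)$ and the final reduction to $\kappa(1-\mu)\le k^2Q$ are correct, using $\mu\ge 1$. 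Your endpoint-only version has a technical advantage: it invokes the $\kappa$-robustness of $\nu$ only at the basepoint $G\in{\mathcal H}$, whereas the paper's single-step bound $|\nu(G_i)-\nu(G_{i+1})|\le\kappa$ implicitly appeals to robustness based at intermediate graphs $G_i$, which need not lie in ${\mathcal H}$. What the paper's route buys is a uniform per-mutation Lipschitz bound of $\kappa+k$, from which the radius-$Q$ statement is immediate by telescoping, and which makes visible how the condition on ${\mathcal H}$ propagates along the path. Finally, both arguments rest on a tacit sign condition on $\nu$: you flag $\nu\ge 0$ explicitly when taking absolute values, and the paper needs the same to pass from its one-sided bound $\nu/\mu^2\le 1$ to the $|\nu|/\mu^2\le 1$ used in its case analysis; since $\nu$ is nonnegative in the applications (Proposition \ref{means}), this is a shared convention rather than a defect of your proof.
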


\begin{proof}Suppose that $G\in{\mathcal H}$ and that $G=G_0-G_1-\cdots -G_q=G'$ is a sequence as before, with $q\leq Q$.  First note that
$$\aligned
\nu(G_{i+1})&\leq \nu(G_0)+i\kappa~~{\rm and}~~
\mu(G_{i+1})&\hskip -2ex \geq\mu(G_0)-ki,
\endaligned$$
by hypothesis, and so
$${{\nu(G_{i+1})}\over{[\mu(G_{i+1})]^2}}\leq
{{\nu(G_0)+i\kappa}\over{[\mu(G_0)-ki]^2}}\leq
{{\nu(G_0)+Q\kappa }\over{[\mu(G_0)-kQ]^2}}\leq1$$
since $G_0\in{\mathcal H}$, for $i=0,\ldots ,p$.
Furthermore, we have that $|\nu(G_i)-\nu(G_{i+1})|\leq \kappa$ and $|\mu(G_i)-\mu(G_{i+1})|\leq k$, for each
$i=0,\ldots ,q-1$, and hence
$$\aligned
\biggl | {{\nu(G_{i})}\over{\mu(G_{i})}}- {{\nu(G_{i+1})}\over{\mu(G_{i+1})}}\biggr |&=
\biggl |
{{\mu(G_{i+1})\nu(G_i)-\mu(G_i)\nu(G_{i+1})}\over{\mu(G_i)\mu(G_{i+1})}}
\biggr |\\\\
&\leq\begin{cases}
{\kappa\over{|\mu(G_i)|}},&~{\rm if}~\mu(G_{i+1})=\mu(G_{i})\\
{\kappa\over{|\mu(G_i)|}}+k{{|\nu(G_{i+1})|}\over{[\mu(G_{i+1})]^2}},&~{\rm if}~\mu(G_{i+1})<\mu(G_{i})\\
{\kappa\over{|\mu(G_i)|}}+k{{|\nu(G_{i})|}\over{[\mu(G_{i})]^2}},&~{\rm if}~\mu(G_{i+1})>\mu(G_{i})\\
\end{cases}\\\\
&\leq \kappa+k.\\
\endaligned$$
The triangle inequality then gives
$$\biggl |
{{\nu(G)}\over{\mu(G)}}-{{\nu(\tilde G)}\over{\mu(\tilde G)}}
\biggr |\leq q(\kappa+k)$$
as required.
\end{proof}

\begin{proposition}\label{means}
The mean $\bar {\mathbb P}(G)$ of the peptide-length spectrum is $3$-robust of radius $Q$ on
$$\{ G\in{\mathcal G}:r(G)>Q~{\rm and}~L(G)+Q-1\leq {1\over 2}[r(G)-Q]^2\},$$
and the mean $\bar{\mathbb E}(G)$ of the edge-length spectrum is $7$-robust of radius $Q$ on
$$\{ G\in{\mathcal G}:r(G)>Q~{\rm and}~B(G)+2L(G)-3+6Q\leq [r(G)-Q]^2\}.$$
\end{proposition}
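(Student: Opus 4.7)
The plan is to deduce both statements from Lemma~\ref{2robust}, in each case taking $\mu(G)=r(G)$, which is $1$-robust on all of $\mathcal G$ by Proposition~\ref{countem}, so $k=1$. For the first statement I set
$$\nu_P(G) \;:=\; \sum_\gamma \mathrm{peptide}\text{-}\mathrm{length}(\gamma),$$
the sum running over boundary components of $F(G)$, so that $\bar{\mathbb P}(G)=\nu_P(G)/r(G)$; for the second statement I analogously set
$$\nu_E(G) \;:=\; \sum_\gamma \mathrm{edge}\text{-}\mathrm{length}(\gamma),$$
so that $\bar{\mathbb E}(G)=\nu_E(G)/r(G)$. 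Once the robustness of $\nu_P$ and $\nu_E$ and the natural upper bounds on them are in hand, the two statements follow by direct substitution into the hypothesis of Lemma~\ref{2robust}.

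Concretely I would establish three things. First, that $\nu_P$ is $2$-robust and $\nu_E$ is $6$-robust on all of $\mathcal G$. Both quantities admit natural decompositions by exchanging the order of summation: $\nu_P(G)=\sum_p c_p$, with $p$ ranging over peptide units and $c_p$ counting boundary incidences at $p$, and $\nu_E(G)=\sum_e t_e$, with $e$ ranging over edges and $t_e$ counting boundary traversals of $e$. A mutation of type i)--iii) modifies a single edge and hence can alter only a bounded number of the summands $c_p$ or $t_e$ near that edge, while mutation iv) introduces one additional peptide unit and a bounded number of new edges; a case analysis over i)--iv) produces the constants $2$ and $6$. Second, that the universal upper bounds $\nu_P(G)\le 2(L(G)-1)$ and $\nu_E(G)\le B(G)+2L(G)-3=e(G)$ hold, where the first is twice the count $L(G)-1$ of peptide units and the second reflects the paper's edge-counting convention in which $e(G)=B(G)+2L(G)-3$. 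Third, substituting these bounds into the size hypothesis $\mu(G)>kQ$ and $\nu(G)+Q\kappa\le[\mu(G)-kQ]^2$ of Lemma~\ref{2robust} with $k=1$, $\kappa=2$, respectively $k=1$, $\kappa=6$, yields exactly the two stated domains
$$2(L-1)+2Q \;\le\; (r-Q)^2 \iff L+Q-1\le \tfrac12(r-Q)^2$$
and $B+2L-3+6Q\le (r-Q)^2$. Lemma~\ref{2robust} then delivers the robustness constants $\kappa+k=3$ and $\kappa+k=7$ claimed in the proposition.

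The hard part will be the mutation analysis for $\nu_P$ and $\nu_E$. Although each mutation is local, the boundary components of $F(G)$ can merge or split under Proposition~\ref{changetwists}, so a priori even a single mutation could change many of the summands $c_p$ or $t_e$. The reason the net effect is nevertheless bounded is that the rearrangement of boundary components away from the mutated edge preserves the \emph{total} number of traversals and incidences at each location: only at the two sides of the mutated edge (and, for mutation iv), the newly inserted building block) does the count genuinely change. Carrying out this bookkeeping cleanly — in particular tracking the contributions of the alpha carbon linkage, peptide bond, and vertical edges added in mutation iv), and handling the hydrogen bond edge added or removed in mutation iii) — is the technical crux, and it is here that the precise constants $2$ and $6$ in the robustness of $\nu_P$ and $\nu_E$ are pinned down.
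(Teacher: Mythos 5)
Your skeleton is the same as the paper's---apply Lemma~\ref{2robust} with $\mu=r$ (so $k=1$ by Proposition~\ref{countem}) and with $\nu$ the total of the length spectrum---but the step you yourself identify as the technical crux is not carried out, and it is precisely there that you miss the observation that makes the proof essentially trivial. Each peptide unit and each edge of $G$ occurs \emph{exactly twice} in the union of the boundary components of $F(G)$, so your $\nu_P$ and $\nu_E$ are not quantities needing a mutation-by-mutation bookkeeping at all: they are identically $\nu_P(G)=2[L(G)-1]$ and $\nu_E(G)=2e(G)=2[B(G)+2L(G)-3]$. Their robustness is then immediate from facts already recorded earlier in Section~\ref{protinvariants} ($L$ is $1$-robust and $e$ is $3$-robust), giving the constants $2$ and $6$ with no case analysis over mutations i)--iv) and no worry about boundary components merging or splitting. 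As written, your proposal only asserts that ``a case analysis over i)--iv) produces the constants $2$ and $6$'' without performing it, so the argument is incomplete; the gap is fillable, and the clean fill is the doubling identity above, which is exactly the paper's proof.

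There is also a concrete error: your claimed bound $\nu_E(G)\leq e(G)=B(G)+2L(G)-3$ is false, since each edge is traversed twice (with multiplicity) by the boundary, so $\nu_E(G)=2e(G)$. This is not cosmetic, because your verification of the second domain substitutes $\nu_E\leq B+2L-3$ into the hypothesis $\nu(G)+Q\kappa\leq[\mu(G)-kQ]^2$ of Lemma~\ref{2robust}; with the correct value the direct application of the lemma (with $\kappa=6$, $k=1$) requires $2[B(G)+2L(G)-3]+6Q\leq[r(G)-Q]^2$, a smaller set than the one appearing in the statement, so your route reaches the stated second domain only through the false inequality. (The paper's own proof is terse here as well---it computes the sum as $2[B+2L-3]$ and invokes the lemma ``likewise,'' so the factor of two in the stated second domain merits scrutiny in any case---but that does not repair your derivation.) Your arithmetic for the peptide-length domain is fine only because the bound you quote there, $2[L(G)-1]$, happens to be the exact value.
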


\begin{proof} Since each peptide unit occurs exactly twice in the union of all the boundary components, the sum of all the elements in ${\mathbb P}(G)$ is 
constant equal to $2[L(G)-1]$, which is $2$-robust according to earlier comments.  Since $\bar{\mathbb P}(G)=2[L(G)-1]/r(G)$ and $r(G)$ is 1-robust by 
Lemma~\ref{countem}, the first assertion follows from
Lemma~\ref{2robust}. Similarly, each edge occurs exactly twice in the union of all boundary components, so the sum of all the elements in ${\mathbb E}(G)$ is 
equal to
$2e(G)=2[B(G)+2L(G)-3]$, which is $6$-robust according to earlier comments.  The second assertion therefore likewise follows from Lemma~\ref{2robust}.
\end{proof}

Other notions of lengths of closed edge-paths in $G$ may also be useful.
For example, for each amino acid type, each boundary component of $F(G)$ visits a certain number of alpha carbon linkages labeled by amino acids of this type, 
and alternative notions of length arise by assigning weights to the various amino acids and taking the weighted sum over amino acids visited.  The robustness of 
these sorts of invariants seems difficult to analyze.

It is also worth pointing out that the underlying graph of the fatgraph $G(P)$ has its own related characteristics for any polypeptide structure $P$.  For 
example, there is an associated notion of length spectrum, namely, one or another of the notions of generalized length discussed before of the closed edge-paths 
or simple closed edge-paths on the graph.  Invariants of this type, which can be derived from the graph underlying the fatgraph, may also be of importance  in 
practice, and their robustness is based on the invariance of the underlying graph under the modifications i-ii).

The fatgraph $G$ is of a special type in that it has a ``spine'' arising from the backbone, namely, the long horizontal segment arising from the concatenation 
of horizontal segments in the fatgraph building blocks which was discussed in Section~\ref{backbonesec}.
This ``spined fatgraph'' admits a canonical ``reduction'' by serially removing each edge incident on a univalent vertex and
amalgamating the pair of edges incident on the resulting bivalent vertex into a single edge.  The graph underlying this reduced spined fatgraph is a ``chord 
diagram'', and there are countless  ``finite-type invariants associated with weight systems'' \cite{QuantumInvariants}, which could provide useful protein 
invariants whose robustness depends upon the choice of weight system.  See Section~\ref{closing} for a further discussion of related quantum invariants.

\section{First results}\label{globules}

\subsection{Aspects of implementation}\label{implementation}

In this section, we shall first make several practical remarks about the implementation in this paper of our methods for a protein from its PDB and DSSP files, cf.\ 
Section~\ref{polypeptides},
where we shall consider here only the model with simple hydrogen bonds, i.e., $\beta =1$, which depends upon energy thresholds
$E_{-}<E_{+}<0$ as follows.  In effect, we employ the standard methods of DSSP described in Section \ref{polypeptides}
to estimate electrostatic potentials of possible hydrogen bonds,
and we tabulate to hundredths of kcal/mole the two strongest such potentials in which each hydrogen or oxygen atom in a polypeptide unit participates.   Any 
such energies beyond our energy thresholds are then discarded.  Displacements of corresponding backbone atoms are used to discriminate between equal tabulated 
electrostatic potentials in order to derive a strict linear ordering on them: a hydrogen bond with energy $E$ between atoms at distance $\delta$ precedes a 
hydrogen bond with energy $E'$ between atoms at distance $\delta'$
if $E< E'$ or if $E=E'$ and $\delta\leq \delta'$, where $E=E'$ to hundredths of kcal/mole and $\delta=\delta'$ to thousandths of Angstroms never occurs in 
practice.  We finally greedily add to ${\mathcal B}$ in input ii) the hydrogen bonds in this linear ordering provided they do not violate the {\sl a priori} 
simple hydrogen bond assumption $\beta=1$.

Minor technical comments are that unspecified or missing residue types
are assumed not to be Proline for input i), atomic locations in the PDB with highest occupancy numbers are those used for determining input iii), and we take 
only
the first model in case there are several models in a PDB file.

Whenever there is a missing datum, for example the atomic location of a backbone atom in a PDB file, that is required for the algorithmic construction of the 
3-frame corresponding to its peptide unit,
we concatenate an associated fatgraph building block without twisting the alpha carbon linkage, and we prohibit any hydrogen bonding to its constituent edges.  
Such ``gap frames'' are included
for each problematic peptide unit.  A number of such gap frames may occur between two fatgraph building blocks that {\sl can} consistently be assigned 3-frames, 
and the last alpha carbon linkage
connecting a gap frame to a non-gap frame is twisted or untwisted based upon the usual criteria for the two adjacent well-defined non-gap frames.  In 
particular, the fatgraph constructed is always connected.
Other examples of gap frames arise from breaks along
the backbone as detected by a separation of more than 2.0 Angstroms between atoms $C_i$ and $N_{i+1}$,
for any $i$.

\subsection{Injectivity results}\label{results}

The database CATH version 3.2.0 \cite{CATH} is a collection ${\mathcal P}_{\rm CATH}$ of 114,215 protein domains, which are uniquely catalogued
by a nine-tuple of natural numbers; this is a hierarchical classification with a ``standard'' representative domain chosen in each class.
Our methods have been applied to the associated PDB and DSSP files so as to produce
corresponding connected fatgraphs $G_{-\infty ,E}(P)$ for each $P\in{\mathcal P}_{\rm CATH}$ and various energy thresholds $E<0$.  We have concentrated here just on the question of finding tuples of robust invariants
that uniquely determine the domain $P$  among all the domains in ${\mathcal P}_{\rm CATH}$, or the standard representatives of all the classes at some level, and this section simply presents these empirical ``injectivity''  results.

{\tiny
\begin{table}[ht] \label{tableone}
\caption{Exceptions to injectivity in Result~\ref{results1}}
\centering
\begin{tabular}{c c }
\hline\hline
{\rm Invariants} & {\rm CATH~domains} \\ [0.5ex]
\hline
(26.5,80,23.5,66,21.5,58,16.5,44,11.5,18,5.0,4,3.0,2)&2.60.120.20.4.3.1.2.2 and 2.60.120.20.4.3.1.1.$n$,\\
& for $2\leq n\leq 24$ and $n\neq 3,4,10,11,12,14$\\
(36.5,81,32.5,72,31.5,66,29.0,56,23.5,34,14.0,12,2.0,2)&2.70.98.10.2.1.1.$n$.1,~for~$3\leq n\leq 17$~and~$n\neq 9$\\ 
(34.5,84,32.5,71,31.5,69,29.5,56,22.5,41,14.0,20,5.0,9)&2.70.98.10.2.1.1.$n$.1,~for~$19\leq n\leq 33$~and~$n\neq 26,30$\\
(20.5,89,17.5,82,14.0,66,8.5,48,6.5,25,3.0,12,1.0,3)&3.20.20.70.69.3.1.$n$.1,~for~$4\leq n\leq 10$~or~$n= 12,15,17$\\
(41.0,99,30.5,76,25.5,51,14.0,31,8.0,19,5.5,9,0.5,3)&3.75.10.10.1.2.2.$n$.1,~for~$1\leq n\leq 6$~or~$n=8,11$\\
(20.5,89,17.5,82,14.0,67,8.5,48,6.5,25,3.0,12,1.0,3)&3.20.20.70.69.3.1.$n$.1,~for~$n=1,2,3,13,14,16$\\
(8.0,71,6.0,63,5.5,55,5.0,43,3.0,17,0.0,4,0.0,1)&3.40.50.510.1.1.1.1.$m.n$,~for~$m.n=1.1,1.3,2.3,3.1$\\
(19.5,68,16.5,54,12.5,48,12.5,28,7.5,18,1.0,11,1.0,4)&3.90.70.10.3.2.1.$m.n$,~for $m.n=2.15,4.1,5.1,8.1,9.1$\\
(4.0,96,4.0,91,2.5,86,1.0,64,0.0,18,0.0,1,0.0,1)&1.10.490.10.5.1.1.$m.n$,~for~$m.n=1.52,1.53,28.1,28.2$\\
(4.0,102,3.0,93,2.5,84,1.0,58,0.0,22,0.0,2,0.0,1)&1.10.490.10.4.1.1.$m.n$,~for~$m.n=1.54,1.55,2.17,2.18$\\
(7.5,38,7.0,33,5.0,32,2.5,20,1.5,10,1.0,5,0.5,4)&2.60.40.10.2.1.1.$m.n$,~for~$m.n=1.258,1.259,7.23,7.24$\\
(1.0,29,0.5,29,0.5,27,0.0,20,0.0,11,0.0,5,0.0,2)&4.10.220.20.1.1.2.$n$.1,~for~$n=1,2,3$\\
(4.5,169,4.0,157,2.5,145,2.0,113,1.0,59,0.5,10,0.5,1)&1.20.1070.10.1.1.1.$m.n$,~for~$m.n=1.12,1.21,9.1$\\
(34.0,83,32.0,76,31.5,71,29.0,55,20.0,41,16.0,26,5.0,7)&2.70.98.10.2.1.1.$n$.1,~for~$n=42,44,46$\\
(36.0,136,34.0,123,32.0,114,23.5,85,8.0,40,2.0,15,0.0,5)&3.20.20.140.22.1.1.$n$.1,~for~$n=2,3,4$\\
(0.0,11,0.0,6,0.0,4,0.0,2,0.0,2,0.0,1,0.0,1)&2.10.210.10.1.1.1.1.1~and~1.10.8.10.13.1.1.1.2\\ 
(0.5,32,0.0,30,0.0,29,0.0,20,0.0,8,0.0,2,0.0,1)&4.10.220.20.1.1.1.$n$.1,~for~$n=13,15$\\
(0.5,97,0.5,94,0.5,85,0.5,65,0.5,25,0.5,5,0.0,1)&1.20.1500.10.3.1.1.$n$.1,~for~$n=1,2$\\
(1.0,21,1.0,17,0.5,15,0.5,14,0.5,9,0.0,3,0.0,2)&2.10.69.10.3.2.2.1.1~and~2.10.69.10.3.2.5.1.1\\
(1.5,42,1.5,42,1.5,39,0.5,32,0.0,16,0.0,5,0.0,1)&1.20.1280.10.1.1.1.$m.n$,~for~$m.n=1.1,2.47$\\
(1.5,43,1.5,42,1.5,38,0.5,32,0.0,17,0.0,5,0.0,1)&1.20.1280.10.1.1.1.$m.n$,~for~$m.n=1.2,2.48$\\
(3.0,21,3.0,18,3.0,15,3.0,13,1.5,9,0.5,3,0.5,1)&4.10.410.10.1.1.3.$n$.2, for $n=4,7$\\
(3.0,21,3.0,18,3.0,16,3.0,13,2.0,8,0.5,3,0.0,1)&4.10.410.10.1.1.3.$n$.1, for $n=5,8$\\
(4.5,8,4.5,8,3.5,7,2.5,5,1.0,4,0.0,2,0.0,1)&2.10.25.10.20.2.1.$n$.1, for $n=1,2$\\
(4.5,35,3.5,34,3.0,29,1.5,23,1.0,14,0.0,6,0.0,1)&1.10.1200.30.1.1.2.$m.n$, for $m.n=1.3,4.1$\\
(4.5,51,4.5,42,4.5,32,4.0,20,3.5,15,2.5,5,1.0,4)&3.30.70.270.4.1.1.$m.n$, for $m.n=1.185,2.1$\\
(5.5,48,4.5,40,4.0,32,3.0,18,1.0,12,0.0,10,0.0,4)&1.10.238.10.3.1.2.$n$.1, for $n=5,6$\\
(6.0,42,5.5,36,5.5,31,5.0,29,4.5,15,2.0,1,0.0,1)&2.40.70.10.3.1.1.$m.n$, for $m.n=5.6,6.10$\\
(6.0,44,5.5,39,4.5,30,4.5,23,3.5,14,1.0,6,1.0,2)&1.10.760.10.6.1.1.$n$.1, for $n=1,25$\\
(6.5, 32,6.0,30,5.5,27,3.5,28,2.5,16,1.5,4,0.0,1)&2.30.30.140.3.1.1.$m.n$, for $m.n=1.3,2.1$\\ 
(6.5,44,4.5,41,4.5,35,4.0,25,3.5,13,2.5,7,0.5,4)&3.30.70.270.7.1.2.1.1 and 3.30.70.270.2.1.5.5.2\\
(6.5,57,6.0,52,6.0,52,5.5,42,3.5,25,2.5,7,0.5,1)&3.30.365.10.4.1.1.$m.n$, for $m.n=1.1,2.2$\\
(7.0,65,7.0,64,6.5,60,3.0,54,2.0,28,0.5,5,0.0,1)&1.10.1040.10.4.1.1.$n$.1, for $n=1,2$\\
(7.5,71,6.5,63,4.5,57,4.5,41,2.0,19,2.0,6,0.0,3)&3.30.1330.10.1.1.1.$n$.1, for $n=2,4$\\
(7.5,72,5.5,64,5.0,56,5.0,43,3.0,17,0.0,4,0.0,1)&3.40.50.510.1.1.1.$m.n$, for $m.n=2.4,3.2$\\
(8.0,65,8.0,57,7.5,50,6.0,35,3.5,24,1.0,8,0.0,1)&3.30.1330.10.1.1.1.$n$.1, for $n=3,5$\\
(8.5,35,8.0,33,7.5,31,6.0,26,4.0,17,3.5,4,0.5,2)&2.30.30.140.3.1.1.$m.n$, for $m.n=1.4,2.2$\\ 
(8.5,69,7.5,62,6.5,56,5.5,45,5.5,24,3.5,3,0.0,1)&3.40.47.10.8.1.1.$n$.4, for $n=2,6$\\
(8.5,70,7.5,62,7.0,56,6.0,40,4.5,20,2.5,2,0.0,1)&3.40.47.10.8.1.1.$n$.1, for $n=2,6$\\
(9.0,68,8.0,60,6.5,53,6.0,40,5.0,12,1.5,1,0.5,1)&3.40.47.10.8.1.1.$n$.8, for $n=2,6$\\
(9.0,69,7.5,63,6.5,54,5.5,43,4.5,14,1.0,2,0.0,1)&3.40.47.10.8.1.1.$n$.6, for $n=2,6$\\
(9.0,70,7.5,63,6.5,55,6.0,43,5.0,19,2.0,1,0.0,1)&3.40.47.10.8.1.1.$n$.2, for $n=2,6$\\
(9.5,67,7.5,60,5.5,52,5.0,41,4.0,12,2.0,3,0.0,1)&3.40.47.10.8.1.1.$n$.7, for $n=2,6$\\
(9.5,67,8.0,61,6.0,54,5.0,43,5.0,19,2.0,3,0.0,1)&3.40.47.10.8.1.1.$n$.3, for $n=2,6$\\
(9.5,68,8.0,62,7.5,52,6.0,37,4.0,16,1.5,2,0.0,1)&3.40.47.10.8.1.1.$n$.5, for $n=2,6$\\
(9.5,71,6.5,62,6.0,52,3.5,43,2.5,27,2.0,8,1.5,5)&3.40.420.10.2.2.4.$n$.1, for $n=1,2$\\
(10.5,36,10.5,32,9.0,28,7.5,24,4.5,14,0.0,7,0.0,3)&3.10.20.30.6.1.1.$n$.1, for $n=2,4$\\
(10.5,58,10.0,49,10.0,47,8.5,33,7.0,15,3.5,4,0.5,2)&3.10.310.10.6.1.2.$n$.1, for $n=1,2$\\
(13.5,73,13.5,65,11.5,60,10.5,45,7.5,22,2.5,7,0.0,2)&3.40.50.720.82.1.1.$n$.1, for $n=4,9$\\
(13.5,74,13.5,67,11.5,64,10.5,37,6.5,14,1.5,7,0.0,2)&3.40.50.720.82.1.1.$n$.1, for $n=2,6$\\
(14.0,49,14.0,44,13.0,43,13.0,39,9.5,17,3.0,5,0.0,1)&3.30.1330.40.2.1.1.$n$.1, for $n=1,3$\\
(14.0,58,12.0,52,11.5,47,10.5,33,7.5,14,3.0,8,0.0,5)&3.10.310.10.8.1.1.$n$.1, for $n=6,7$\\
(17.5,79,15.0,64,12.5,54,9.5,38,6.5,21,4.0,6,1.5,2)&2.60.120.20.9.3.1.$m.n$, for $m.n=1.15, 6.1$\\ 
(18.5,81,14.5,65,13.5,55,10.0,40,7.0,24,3.5,8,1.5,2)&2.60.120.20.9.3.1.$m.n$, for $m.n=1.18,6.2$\\
(19.0,20,18.0,21,16.0,17,9.5,18,7.0,10,2.0,4,0.5,2)&2.60.30.10.2.1.1.$n$.1, for $n=7,9$\\
(19.0,55,18.0,50,17.0,45,14.0,34,8.0,18,2.0,5,0.0,1)&3.40.50.720.63.1.$n$.1.1, for $n=1,2$\\
(19.5,149,19.5,137,18.0,124,12.5,97,7.5,50,1.5,7,0.0,1)&3.20.20.110.1.1.3.$n$.1, for $n=11,13$\\
(19.5,180,18.5,161,16.0,135,14.0,77,10.0,28,1.0,8,0.0,2)&3.20.20.70.55.2.1.$m.n$, for $m.n=5.8,7.4$\\
(19.5,185,15.5,163,11.5,130,11.5,82,6.0,42,3.5,10,0.0,1)&3.20.20.70.55.2.1.$m.n$, for $m.n=5.5,7.1$\\
(20.0,43,18.5,38,15.5,31,13.5,22,9.5,14,7.0,6,2.0,4)&3.90.650.10.1.1.1.$n$.1, for $n=3,5$\\
(20.5,61,18.5,51,16.5,47,15.5,31,10.5,20,5.5,5,0.0,4)&2.60.90.10.1.3.1.$n$.1, for $n=1,3$\\ 
(21.5,46,17.0,38,15.0,33,13.5,23,9.5,14,4.5,5,2.0,2)&3.90.650.10.1.1.1.$n$.1, for $n=2,4$\\
(22.0,178,19.0,157,18.0,129,15.0,86,9.5,30,2.0,6,0.0,1)&3.20.20.70.55.2.1.$m.n$, for $m.n=5.6,7.2$\\
(23.0,178,20.0,160,18.0,134,14.5,82,11.0,34,2.0,9,0.0,2)&3.20.20.70.55.2.1.$m.n$, for $m.n=5.7,7.3$\\
(24.0,274,19.5,257,16.0,228,13.0,176,10.0,90,1.0,22,0.0,2)&1.10.620.20.6.1.1.$m.n$, for $m.n=1.2,2.48$\\
(26.5,171,24.0,151,20.5,134,16.5,105,12.5,52,3.0,16,1.0,1)&3.40.718.10.4.6.1.$m.n$, for $m.n=1.4, 3.2$\\
(27.5,180,22.0,160,19.5,141,16.5,105,10.5,51,6.0,12,0.5,3)&3.40.718.10.4.6.1.$m.n$, for $m.n=1.3, 3.1$\\
(36.0,102,28.5,94,26.0,81,20.0,58,12.5,27,6.5,9,2.0,2)&3.50.50.60.55.1.1.$n$.1, for $n=7,9$\\ 
(36.5,81,32.5,72,31.5,66,29.0,56,24.5,33,14.0,12,2.0,2)&2.70.98.10.2.1.1.$n$.1, for $n=9,18$\\  
(36.5,145,34.0,130,27.5,124,25.0,92,15.5,37,3.5,6,0.5,1)&3.20.20.70.72.1.1.$m.n$, for $m.n=3.8,5.4$\\
(36.5,145,34.0,131,28.5,123,25.5,96,17.0,41,5.0,6,0.5,1)&3.20.20.70.72.1.1.$m.n$, for $m.n=3.6,5.2$\\
(38.5,141,36.0,126,30.5,117,27.0,90,19.0,39,4.5,6,0.5,1)&3.20.20.70.72.1.1.$m.n$, for $m.n=3.7,5.3$\\
(39.0,142,35.5,127,30.0,119,26.5,92,16.5,37,5.5,5,1.0,1)&3.20.20.70.72.1.1.$m.n$, for $m.n=3.5,5.1$\\
(41.0,99,30.5,76,25.5,51,14.0,30,8.0,19,5.5,9,0.5,3)&3.75.10.10.1.2.2.$n$.1, for $n=7,10$\\ 

\hline
\end{tabular}
%\label{table:nonlin}
\end{table}
}

Our first results rely only on the most basic of robust invariants which depend only on the topological type of the surface, namely, the modified genus 
$g^*_E(P)$ and the number $r_E(P)$ of boundary components of $F(G_{-\infty,E}(P))$.

\begin{result} \label{results1}
The 14 numbers
$(g^*_E(P),r_E(P))$, with $E=-0.5(1+t)$, for integral $0\leq t\leq 6$, 
uniquely determine the primary structure of each $P\in{\mathcal P}_{\rm CATH}$
except for the special cases given in Table 5.1. %\ref{tableone}.  
In particular, these 14 numbers uniquely determine the depth 7 classes {\rm (CATHSOL)} except for the four following
special cases: {\rm 3.40.50.720.63.1.1.1.1 {\it and } 3.40.50.720.63.1.2.1.1;
 3.30.70.270.7.1.2.1.1 {\it and} 3.30.70.270.2.1.5.5.2; 
2.10.210.10.1.1.1.1.1 {\it and} 1.10.8.10.13.1.1.1.2;
2.10.69.10.3.2.2.1.1 {\it and} 2.10.69.10.3.2.5.1.1.}.
\end{result}

{\tiny
\begin{table}[ht] \label{tabletwo}
\caption{Exceptions to injectivity in Result~\ref{results2}}
\centering
\begin{tabular}{c c }
\hline\hline
{\rm Invariants} & {\rm CATH~domains} \\ [0.5ex]
\hline
(49,45,46,0.0,4.0,0,0,0,0,46)&1.20.5.190.1.1.2.1.4, 1.20.5.530.1.1.1.1.2, 1.20.5.170.1.1.2.1.1\\
(56,51,52,0.0,4.0,0,0,0,0,53)&1.20.5.190.1.1.3.1.1, 1.20.5.500.1.1.1.1.3, 1.20.5.170.9.1.1.1.1\\
(42,38,39,0.0,4.0,0,0,0,0,39)&1.20.5.190.1.1.3.2.1, 1.20.5.170.3.1.1.1.12\\
(46,31,30,1.0,5.0,2,3,0,2,39)&1.10.60.10.3.1.1.1.2, 1.10.287.680.1.1.1.1.16\\
(49,43,44,0.0,4.1,0,0,0,0,46)&1.20.5.300.2.1.1.1.7, 1.20.5.170.2.2.1.1.6\\
(49,25,24,1.0,6.0,6,3,1,5,35)&1.10.10.60.32.1.1.1.42, 4.10.51.10.1.1.1.1.25\\
(50,45,46,0.0,4.0,0,0,0,0,47)&1.20.5.80.2.1.1.2.2, 1.20.5.170.2.2.1.1.2\\
(52,48,49,0.0,4.0,0,0,0,0,49)&1.20.5.530.1.1.1.1.1, 1.20.5.170.2.1.1.1.2\\
(52,32,33,0.0,5.0,6,1,2,3,40)&4.10.220.20.1.1.1.1.1, 1.20.5.810.3.1.1.7.1\\
(53,30,27,2.0,6.0,5,6,1,4,41)&1.10.1220.10.3.1.3.1.3, 1.10.890.20.1.1.1.1.3\\
(59,55,56,0.0,4.0,0,0,0,0,56)&1.20.5.500.1.1.1.1.2, 1.20.5.170.10.1.1.3.1\\
(60,56,57,0.0,4.0,0,0,0,0,57)&1.20.5.500.1.1.1.1.1, 1.20.5.170.10.1.1.3.2\\
(62,58,59,0.0,4.0,0,0,0,0,59)&1.20.5.170.6.1.1.2.1, 1.20.5.110.6.1.1.2.3\\
(64,58,59,0.0,4.1,0,0,0,0,61)&1.20.5.300.1.1.1.1.2, 1.20.5.170.6.1.1.1.8\\
(65,37,35,1.5,5.7,9,5,2,7,46)&1.10.8.200.1.1.1.2.1, 1.10.2030.10.1.1.1.1.8\\
(72,48,46,1.5,5.1,7,3,2,5,57)&1.10.40.30.1.1.2.1.6, 1.10.220.10.8.1.1.1.2\\
(79,75,76,0.0,4.0,0,0,0,0,76)&1.20.5.170.16.1.1.1.5, 1.20.5.110.7.1.1.2.1\\
(88,60,53,4.0,5.5,10,11,4,6,69)&1.10.238.10.9.2.1.1.10, 1.10.288.10.2.1.1.1.1\\
(95,54,42,6.5,7.0,38,23,26,11,43)&3.30.1050.10.5.1.1.1.6, 3.30.1490.70.4.1.1.1.2\\
\hline
\end{tabular}
%\label{table:nonlin}
\end{table}
}

{\tiny
\begin{table}[ht] \label{tablethree}
\caption{Exceptions to injectivity in Result~\ref{results3}, where ${\rm N}^k$ denotes $k\geq 1$ consecutive N}
\centering
\begin{tabular}{c c }
\hline\hline
{\rm Flip Sequence} & {\rm CATH~domains} \\ [0.5ex]
\hline
$N^{19}$&1.20.5.460.1.1.1.6.1,~1.20.5.110.15.1.1.1.1\\
$N^{27}$&1.20.5.800.1.1.2.1.1,~1.10.10.380.1.1.1.1.1\\
$N^{29}$&1.20.5.140.3.1.1.1.1,~1.20.5.420.5.1.1.1.1,~1.20.5.170.18.1.1.1.1\\
$N^{30}$&1.20.5.700.1.1.1.1.1,~1.20.5.100.2.1.1.1.1\\
$N^{32}$&1.20.5.770.1.1.1.1.1,~1.20.5.700.1.1.1.1.3\\
$N^{37}$&1.20.5.40.1.1.2.1.6,~1.20.5.80.2.1.1.2.5\\
$N^{38}$&1.20.5.440.1.1.1.1.1,~4.10.810.10.1.1.1.1.1,~1.20.5.170.8.1.1.1.5\\
$N^{40}$&1.20.5.190.1.1.3.2.1,~1.20.5.170.3.1.1.1.12\\
$N^{42}$&1.20.5.430.1.1.2.1.3,~1.20.5.80.2.1.1.1.3,~1.20.5.490.1.1.1.1.1\\
$N^{43}$&1.20.5.240.1.2.1.1.1,~1.10.930.10.1.1.2.1.2,~1.20.5.170.3.1.1.1.1\\
$N^{44}$&1.20.5.230.1.1.1.1.1,~1.20.5.80.1.1.1.1.2\\
$N^{45}$&1.20.5.190.1.1.2.1.5,~1.20.5.300.2.1.1.1.12,~1.20.5.170.14.1.1.1.1\\
$N^{46}$&1.20.5.300.2.1.1.1.9,~1.10.287.300.1.1.1.1.1\\
$N^{47}$&1.20.5.190.1.1.2.1.4,~1.20.5.530.1.1.1.1.2,~1.20.5.300.2.1.1.1.7,~1.20.5.170.1.1.2.1.1\\
$N^{48}$&1.20.5.190.1.1.1.1.2,~1.20.5.80.2.1.1.2.1,~1.20.5.300.2.1.1.1.1,~1.20.5.170.2.2.1.1.1\\
$N^{49}$&1.20.5.190.1.1.2.1.1,~1.20.5.170.2.2.1.1.11,~1.20.5.110.2.1.1.1.3\\
$N^{50}$&1.20.5.290.1.1.1.1.1,~1.20.5.530.1.1.1.1.1,~1.20.5.170.2.1.1.1.2,~1.20.5.110.14.1.1.1.1\\
$N^{51}$&1.20.5.190.1.1.5.1.1,~1.20.5.370.2.1.2.1.1,~1.20.5.170.10.1.1.1.1\\
$N^{52}$&1.10.287.750.1.1.8.1.1,~1.20.5.170.2.2.1.2.2,~1.20.5.110.11.1.1.1.1\\
$N^{53}$&1.20.5.170.2.2.1.2.1,~1.20.5.110.10.1.1.1.1\\
$N^{54}$&1.20.5.190.1.1.3.1.1,~1.20.5.500.1.1.1.1.3,~1.20.5.170.4.1.1.1.1\\
$N^{56}$&1.20.5.300.1.2.1.1.2,~1.20.5.110.5.1.1.1.2\\
$N^{57}$&1.20.5.500.1.1.1.1.2,~1.20.5.170.10.1.1.3.1,~1.10.287.130.2.1.1.1.6\\
$N^{58}$&1.20.5.390.1.1.1.1.1,~1.20.5.500.1.1.1.1.1,~1.20.5.170.10.1.1.3.2,~1.20.5.110.8.1.1.1.1\\
$N^{59}$&1.20.5.620.1.1.1.1.1,~1.10.287.230.1.1.1.1.2,~1.20.5.170.4.2.1.1.1,~1.20.5.110.5.1.1.1.1\\
$N^{60}$&1.20.5.300.1.1.1.1.1,~1.20.5.170.4.1.1.2.2,~1.20.5.110.6.1.1.2.3\\
$N^{61}$&1.10.287.210.2.2.1.8.1,~1.20.5.170.6.1.1.1.11,~1.20.5.110.3.1.1.1.1\\
$N^{62}$&1.20.5.300.1.1.1.1.2,~1.20.5.170.6.1.1.1.8,~1.20.5.110.4.1.1.1.1\\
$N^{63}$&1.20.5.500.1.1.1.1.4,~1.20.5.170.5.1.1.1.1\\
$N^{65}$&1.10.1440.10.1.1.1.1.1,~1.20.5.170.5.1.1.1.2,~1.2.5.110.6.1.1.1.1\\
$N^{66}$&1.20.5.730.1.1.1.1.1,~1.20.5.170.6.1.1.1.3,~1.20.5.110.2.1.1.1.1\\
$N^{71}$&1.20.5.400.1.1.1.1.1,~1.10.287.210.2.2.1.4.4,~1.20.5.110.6.1.2.2.2\\
$N^{72}$&1.10.287.210.2.2.1.4.3,~1.20.5.170.16.1.1.1.3,~1.20.5.110.6.1.2.2.3\\
$N^{75}$&1.20.5.340.1.1.1.1.4,~1.20.5.110.7.1.1.4.3\\
$N^{76}$&1.10.287.210.7.1.1.1.1,~1.20.5.170.16.1.1.1.4\\
$N^{77}$&1.20.20.10.1.1.1.1.3,~1.20.5.340.1.1.1.1.3,~1.20.5.170.16.1.1.1.5,~1.20.5.110.7.1.1.2.1\\
$N^{28}F N^{25}$&1.10.287.660.1.1.1.2.1,~1.10.287.230.1.1.2.1.5,~1.10.287.750.1.1.6.1.1\\
$N^2F N^{61}$&1.20.5.170.5.1.1.2.1,~1.20.5.110.6.1.1.2.1\\
$N^{27}F N^{26}$&1.10.287.230.1.1.1.4.1,~1.10.287.210.2.1.2.1.3\\
$N^{29}FN^{24}$&1.10.287.230.1.1.2.1.4,~1.10.287.750.1.1.5.1.1\\
$N^{31}FN^{26} $&1.10.287.750.1.1.3.1.1,~1.10.287.210.2.2.1.7.1\\
$N^{34}FNF^2N$&4.10.81.10.2.1.1.1.1,~1.20.5.50.9.1.1.1.8\\
$N^{41}F$&1.20.5.490.1.1.1.1.3,~1.20.1070.10.7.1.1.1.2\\
$N^{43}F$&1.10.10.200.2.2.1.1.1,~1.20.5.170.15.1.1.1.1\\
$N^{50}F$&1.20.5.170.10.1.1.2.1,~1.10.287.190.1.1.1.1.2\\
\hline
\end{tabular}
\label{table,nonlin}
\end{table}
}

The next injectivity result relies upon several robust invariants of the fatgraph.

\begin{result} \label{results2}
For any polypeptide structure $P$ and energy threshold $E<0$, consider the 10 numbers given by: the number of residues of $P$,
the number of hydrogen bonds of $P$ with energy at most $E$, $r_E(P)$, $g^*_E(P)$, the mean of the peptide length spectrum to one significant digit,
the number of twisted alpha carbon linkages of $G_{-\infty ,E}(P)$, the number of twisted edges of $G_{-\infty ,E}(P)$ corresponding to hydrogen bonds, the 
respective number
of pairs {\rm FF},  {\rm FN}, and {\rm NN} occurring in the flip sequence.  These numbers for the single energy level $E=-0.5$ uniquely determine
the standard representatives of ${\mathcal P}_{\rm CATH}$ classes at depth four {\rm (CATH)}  except for the
19 exceptions enumerated in Table \ref{tabletwo}
\end{result}

Our final injectivity result relies only on the model of the backbone, namely, on the flip sequence.

\begin{result}\label{results3} The flip sequence nearly uniquely determines elements of ${\mathcal P}_{\rm CATH}$ with the 45 exceptions
enumerated in Table 5.3. %\ref{tablethree}
\end{result}

We regard Results \ref{results1} to \ref{results3} as  topological classifications of protein domains in the spirit of  topology determining geometry as is familiar from rigidity results for three-dimensional manifolds for example.

\section{Closing remarks}\label{closing}

The fatgraph corresponding to a polypeptide structure defined here, and its generalizations discussed in Section~\ref{generalmodel}, is based on the intrinsic 
geometry of a protein at equilibrium.
We believe that we have just scratched the surface of defining meaningful protein descriptors derived from robust invariants of these fatgraphs
in this paper, whose primary intent is simply to introduce these methods.  Further applications are either ongoing or anticipated, and we briefly discuss
aspects of these various projects in this closing section.

Recall from Section \ref{generalmodel} that rotamer fatgraphs arise from our basic fatgraph model of a polypeptide structure by refining the simplest 
discretization of the backbone graph connection.
Such a rotamer fatgraph or invariants of it may be assigned to the subsequence of a protein corresponding to a turn or coil in order to give a new 
classification of these structural elements.
Construction \ref{constructfatgraph} associates matrices to hydrogen bonds thus providing new tools for their analysis, for example, discretizations likewise 
providing new classifications of hydrogen bonds.

More generally, the fatgraph or rotamer fatgraph of a protein or protein domain and robust invariants of it provide new descriptors
which can be used to refine existing structural classifications.  A key attribute of these new descriptors, as
exemplified by the injectivity results in Section~\ref{results}, is that they are automatically computable from PDB files without the need
for human interpretation into the usual architectural motifs.
In a similar vein, \cite{Roegen} associates protein descriptors inspired by quantum invariants of links, which are different from the quantum invariants
proposed in Section \ref{generalmodel}, and proves injectivity results analogous to those in
Section \ref{results}.  In contrast to \cite{Roegen} where the geometric or topological meaning
of the descriptors is unclear, the significance of our descriptors such as those considered in Section \ref{results} is manifest.

The recent paper \cite{Boom}
studies probability densities on the space of conformational angles with applications to structure prediction, and
densities on the Lie group $SO(3)$ can be computed and applied to structure prediction in an analogous manner.
Furthermore, the prediction of corresponding discretizations such as the flip sequence and its rotamer analogues from
protein primary structure has already proved interesting.

\appendix

\section{Alternative description of the model}\label{alternative}

There is another representative $K(P)$ of the equivalence class of the fatgraph $G(P)$ associated to a polypeptide structure $P$ which we shall
describe in this appendix.  In some ways, the alternative description is more natural though Corollary~\ref{risrobust} is true but not obvious in this 
formulation.

\begin{figure}[!h]
\begin{center}
\epsffile{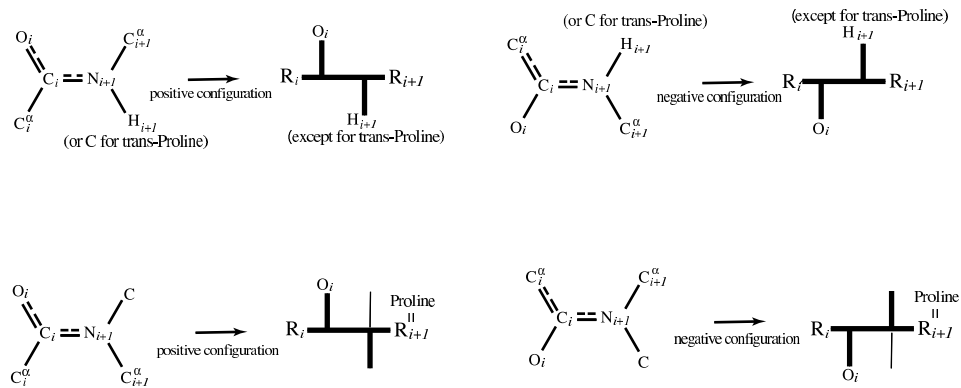}
\caption{Fatgraph building blocks for the alternative model}
\label{fig:buildingblocks}
\end{center}
\end{figure}

The backbone is modeled as the concatenation of fatgraph building blocks, one such building block for each peptide unit.
The two possible building blocks for the $i$th peptide unit are illustrated in Figure~\ref{fig:buildingblocks} and are called the {\it positive} and {\it 
negative configurations} corresponding to whether the oxygen atom
$O_i$ lies to the left or right of the backbone, respectively, when traversed from $N$ to $C$ termini.
The model of the backbone is determined by the sequence of configurations, positive or negative, assigned to the consecutive peptide units and is thus described 
by a word of length $L-1$ in the alphabet $\{ +,-\}$, which is called the {\it plus/minus sequence} of the polypeptide structure.  The untwisted fatgraph $Y(P)$ 
which is an alternative model of the backbone is  constructed from this data by identifying endpoints of the consecutive horizontal segments of the fatgraph 
building blocks in the natural way as before.
There is an arbitrary choice of configuration $c_1=+$ for the first building block as positive.

Suppose recursively that configurations $c_\ell\in\{{+,-}\}$ have been determined for $\ell <i < L$.
The configuration $c_i$ is calculated from the configuration $c_{i-1}$
as follows:
$$c_i =\begin{cases}
+c_{i -1}, &{\rm if}~\vec v_{i-1}\cdot\vec v_i~+~\vec w_{i-1}\cdot \vec w_i >0~{\rm and}~R_i~{\rm is~not~cis-Proline};\\
-c_{i -1}, &{\rm if}~\vec v_{i-1}\cdot\vec v_i~+~\vec w_{i-1}\cdot \vec w_i \leq 0 ~{\rm and}~R_i~{\rm is~not~cis-Proline};\\
-c_{i -1}, &{\rm if}~\vec v_{i-1}\cdot\vec v_i~+~\vec w_{i-1}\cdot \vec w_i \geq 0, ~{\rm and}~R_i~{\rm is~cis-Proline};\\
+c_{i -1}, &{\rm if}~\vec v_{i-1}\cdot\vec v_i~+~\vec w_{i-1}\cdot \vec w_i <0,~{\rm and}~R_i~{\rm is~cis-Proline},\\
\end{cases}$$
completing the construction of the alternative backbone model $Y(P)$.  Notice that the flip sequence uniquely determines the plus/minus sequence and conversely.

As in Construction~\ref{constructfatgraph'}, if $(i,j)\in{\mathcal B}$ in input ii), then we add an edge to $Y(P)$ connecting the short vertical segments 
corresponding to the atoms
$H_{i}$ and $O_{j}$.
To complete the construction of $K(P)$, it remains only to specify which edges of the resulting fatgraph are twisted.
To this end, suppose that  $(i,j)\in{\mathcal B}$ in input ii).  There are corresponding 3-frames
$$\aligned
{\mathcal F}_{i-1}&=(\vec u_{i-1},\vec v_{i-1},\vec w_{i-1}),\\
{\mathcal F}_{j}&=(\vec u_j,\vec v_j,\vec w_j),\\
\endaligned$$
from Construction \ref{3-frames}
and corresponding configurations $c_{i-1}$ and $c_j$ defined above.
An edge corresponding to the hydrogen bond $(i,j)\in{\mathcal B}$ is taken to be
twisted in $K(P)$ if and only if
$c_{i-1}c_j~{\rm sign}(\vec v_{i-1}\cdot \vec v_j + \vec w_{i-1}\cdot \vec w_j)$ is negative.

\begin{figure}[!h]
\begin{center}
\epsffile{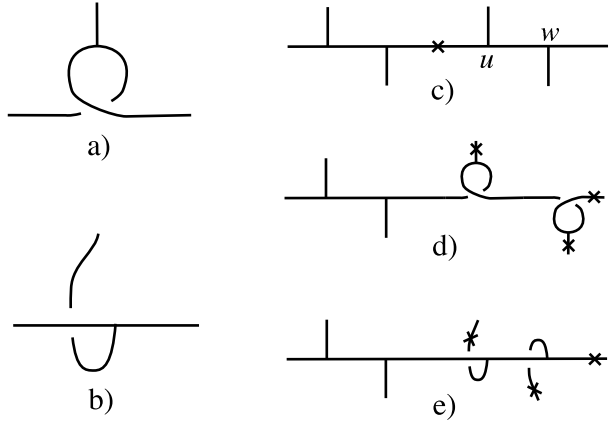}
\caption{Elementary equivalences of fatgraphs}
\label{fig:elementarystrongequivalences}
\end{center}
\end{figure}

The proof that $K(P)$ and $G(P)$ are equivalent depends upon the following simple diagrammatic result.

\begin{lemma}\label{elementarystrongequivalences}
The fatgraphs depicted in Figures~\ref{fig:elementarystrongequivalences}a and \ref{fig:elementarystrongequivalences}b are strongly
equivalent, and the fatgraphs depicted in Figures~\ref{fig:elementarystrongequivalences}c, \ref{fig:elementarystrongequivalences}d, and 
\ref{fig:elementarystrongequivalences}e
are pairwise equivalent.
\end{lemma}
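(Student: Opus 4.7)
The proof plan has two distinct flavors, one per half of the statement, and both ultimately reduce to applying the machinery assembled in Section~\ref{fatgraphs}.

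For the strong equivalence between the fatgraphs in (a) and (b), my plan is to verify the combinatorial data directly. By Proposition~\ref{strongclasses}, two fatgraphs on the same number $N$ of stubs are strongly equivalent if and only if their associated triples $(\sigma,\tau_u,\tau_t)$ are conjugate as permutation triples on $N$ letters. I would label the stubs on each side of the figure, read off the three permutations via Construction~\ref{constructperms}, and exhibit the relabeling bijection that conjugates one triple to the other. Because strong equivalence is a purely combinatorial relation (no twisting tricks are allowed), this step is essentially bookkeeping; the only care needed is to respect the cyclic order at each vertex and to keep track of which half-edges lie in twisted versus untwisted edges as drawn.

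For the pairwise equivalence of the fatgraphs in (c), (d), and (e), my plan is to invoke Proposition~\ref{strong}, which reduces equivalence to strong equivalence after a finite sequence of vertex flips. Recall that a vertex flip at $u$ reverses the cyclic order of the half-edges incident on $u$ and toggles the twist on each such half-edge (modulo~$2$ on each edge). I would identify a single vertex (or a specific pair of vertices) in the relevant picture whose flip converts the diagram of (c) into something strongly equivalent to (d), and a second flip converting (d) into (e). In each case I would check the outcome by re-reading the permutation triple after the flip and comparing it with the triple of the target diagram, once more invoking Proposition~\ref{strongclasses} for the final step. Transitivity of equivalence then closes the loop and yields all three pairs.

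The main obstacle, and the one genuine piece of bookkeeping, will be the twist accounting: a flip at $u$ changes the color of every half-edge at $u$, so on an edge joining $u$ to a vertex $w$ that is \emph{not} flipped, the parity of the twist toggles, while on an edge both of whose endpoints are flipped (for instance, a loop at $u$, or an edge joining two simultaneously flipped vertices), the parity is unchanged. Similarly on an edge incident to a univalent vertex, a flip at the non-univalent endpoint simply toggles the twist on that edge. I would therefore carry out the flips one at a time and explicitly record the resulting twist configuration at each stage, so that the strong-equivalence check with the target diagram becomes a direct side-by-side comparison. Once this twist-parity ledger is in hand, matching the cyclic orders at the vertices reduces to reading off the picture.
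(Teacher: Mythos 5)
Your proposal follows essentially the same route as the paper: the (a)--(b) strong equivalence is checked directly (the paper simply asserts it is ``proved directly'', and your permutation-triple verification via Proposition~\ref{strongclasses} is a legitimate way to do this), and the (c)--(d)--(e) equivalences are obtained from vertex flips together with Proposition~\ref{strong}, with exactly the twist-parity bookkeeping you describe. The only minor difference is that the paper flips the two vertices $u,w$ of (c) simultaneously and cancels paired icons $\times$ to reach (d), and then identifies (d) with (e) by the already-established (a)--(b) strong equivalence rather than by a further flip; this does not affect the correctness of your plan.
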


\begin{proof}
The strong equivalence of \ref{fig:elementarystrongequivalences}a and \ref{fig:elementarystrongequivalences}b is proved directly.
Perform vertex flips on the vertices labeled $u,w$ in \ref{fig:elementarystrongequivalences}c and erase pairs of icons $\times$ on common edges to produce 
\ref{fig:elementarystrongequivalences}d, which is strongly equivalent to \ref{fig:elementarystrongequivalences}e
according to the first assertion.
\end{proof}

\begin{proposition} The fatgraphs $G(P)$ and $K(P)$ are equivalent.
\end{proposition}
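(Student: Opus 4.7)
The plan is to transform $K(P)$ into $G(P)$ by a finite sequence of vertex flips followed by strong equivalence, so that equivalence follows from Proposition~\ref{strong}. The key observation is that the basic model $G(P)$ uses a single fatgraph building block (Figure~\ref{fig:buildingblock}) and encodes geometric information as twists on alpha carbon linkages and hydrogen bond edges, whereas the alternative $K(P)$ spreads the same information between a choice of local configuration $c_i\in\{+,-\}$ for each peptide unit and the twisting rule on hydrogen bond edges. The two descriptions should match after flipping all negatively configured building blocks into positive ones.

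First I would fix the \emph{orbit of flips} as follows: for each $i$ with $c_i=-$, simultaneously apply a vertex flip at each of the two trivalent vertices inside the $i$-th building block of $K(P)$. By Lemma~\ref{elementarystrongequivalences}(a,b) (or parts (c,d,e) depending on whether the local picture is oriented as in $+$ or $-$), this local operation is strongly equivalent to replacing the negative configuration building block by the positive one used in $G(P)$, at the cost of placing an icon~$\times$ on each external half-edge emanating from the flipped vertices. The two half-edges interior to the block appear twice each in this process and so cancel, leaving twists only on the four edges leaving the block: the two alpha carbon linkages on either side and the two short vertical edges going to $O_i$ and $H_{i+1}$.

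Second, I would collate twists edge by edge. An alpha carbon linkage between peptide units $i$ and $i+1$ sits in two blocks simultaneously; after the flip procedure it carries a twist icon from block $i$ iff $c_i=-$ and a twist icon from block $i+1$ iff $c_{i+1}=-$. Working modulo 2, the linkage is twisted in the resulting fatgraph iff $c_i\neq c_{i+1}$. The recursive definition of $c_i$ in the appendix shows $c_{i+1}\neq c_i$ exactly when $\vec v_i\cdot\vec v_{i+1}+\vec w_i\cdot\vec w_{i+1}\leq 0$ in the non-cis-Proline case and exactly when $\vec v_i\cdot\vec v_{i+1}+\vec w_i\cdot\vec w_{i+1}\geq 0$ in the cis-Proline case; this is precisely the twisting rule of Construction~\ref{backbone} (as spelled out in its Corollary). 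Similarly, an edge corresponding to a hydrogen bond $(i,j)\in\mathcal{B}$ is incident on the vertical segments of blocks $i-1$ and $j$, so the flip procedure augments its $K(P)$-twist by $c_{i-1}c_j$ (read multiplicatively in $\{\pm 1\}\cong\mathbb{Z}/2$). Combined with the $K(P)$-rule that the edge was twisted iff $c_{i-1}c_j\cdot\mathrm{sign}(\vec v_{i-1}\cdot\vec v_j+\vec w_{i-1}\cdot\vec w_j)$ is negative, the $c_{i-1}c_j$ factor cancels and the edge is twisted in the resulting fatgraph iff $\vec v_{i-1}\cdot\vec v_j+\vec w_{i-1}\cdot\vec w_j\leq 0$. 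This matches the rule of Construction~\ref{constructfatgraph} and its Corollary.

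Having matched the underlying graph, the cyclic orderings at vertices (both are the natural ones coming from the positive building block), and the coloring on every edge, the resulting fatgraph is strongly equivalent to $G(P)$. Since it arose from $K(P)$ by a composition of vertex flips, Proposition~\ref{strong} yields that $K(P)$ and $G(P)$ are equivalent. The main bookkeeping obstacle is verifying that the twists placed on the two short vertical segments internal to a block do not spuriously twist anything else: they contribute only to hydrogen bond edges actually attached at those univalent or bivalent endpoints, which is exactly the calculation carried out in the third paragraph; edges attached to a univalent vertex where no hydrogen bond is present can absorb an extra twist without changing the strong equivalence class, as noted in the proof of Proposition~\ref{changetwists}.
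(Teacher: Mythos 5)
Your proof is correct and follows essentially the same route as the paper: both relate the two backbone models by vertex flips (the paper via recursive application of Lemma~\ref{elementarystrongequivalences}, you by explicitly flipping the negatively configured blocks) and then verify that the hydrogen-bond twisting rules agree because the icons introduced by the flips are exactly accounted for by the factor $c_{i-1}c_j$ in the definition of twisting in $K(P)$, equivalently by the parity of the number of twisted alpha carbon linkages between the endpoints. The only loose end---that the flip icons initially land on the short vertical segments and must be pushed onto the hydrogen-bond edges (or absorbed) by additional flips at their univalent or bivalent endpoints---is a minor bookkeeping point which you acknowledge and which does not affect the argument.
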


\begin{proof} The underlying graphs of $G(P)$ and $K(P)$ are isomorphic by construction.  Furthermore, recursive application of 
Lemma~\ref{elementarystrongequivalences} shows that
there is a sequence of vertex flips starting at $T(P)$ and ending at $Y(P)$, so the two backbone models are equivalent by Proposition~\ref{strong}.  We claim 
that
an edge of $G(P)$ representing a hydrogen bond is twisted if and only if it the corresponding edge of
$K(P)$ is twisted, and there are two cases depending upon the parity of the number of twisted alpha carbon linkages of $G(P)$ between the endpoints of such an 
edge.  This number is even, and hence
so too is the number of icons $\times$ on the edge,  if and only if the configurations of fatgraph building blocks in $K(P)$ at these endpoints agree, and the 
claim therefore follows by definition
of twisting in $K(P)$.
\end{proof}

\begin{figure}[!h]
\begin{center}
\epsffile{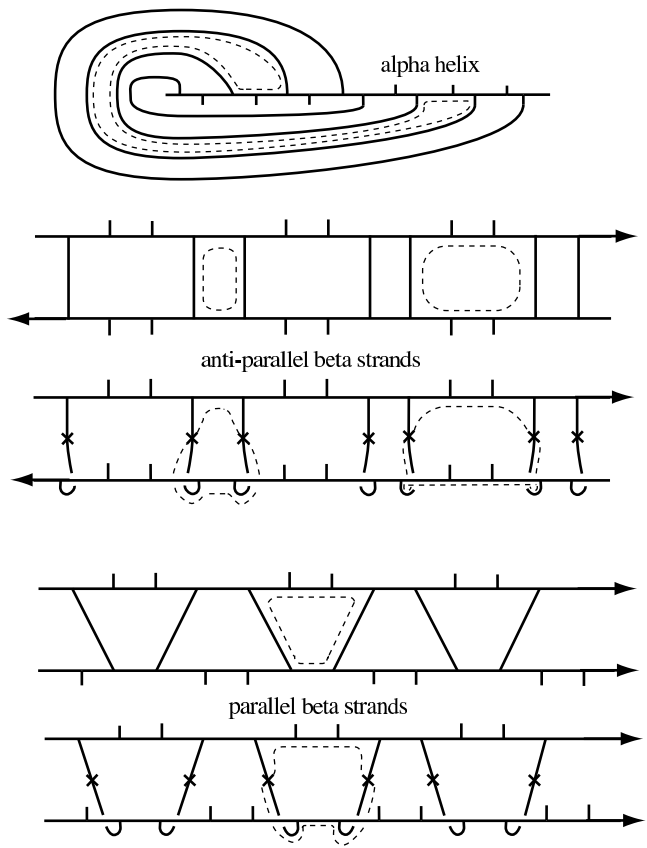}
\caption{Alpha helices and beta strands}
\label{fig:alphanbeta}
\end{center}
\end{figure}

We finally  consider how the standard motifs of protein secondary structure are manifest in our alternative model $K(P)$.
The illustration on the top of Figure~\ref{fig:alphanbeta} depicts our fatgraph model of an {\it alpha helix}, which is defined by the indicated pattern of 
hydrogen bonding.
It is well-known for proteins \cite{Finkelstein} that the plus/minus sequence of an alpha helix is given by
a constant\footnote{This can be seen, for example, from the Ramachandran plot Figure~\ref{fig:rama} or from the direct consideration of 3-frames according to 
Construction \ref{backbone}.} plus/minus sequence
 $+~+ ~+ ~+ ~+$ or $-~-~-~-~-$. Indeed, this is the standard graphical depiction of an alpha helix in the protein literature, but for us, there is the deeper 
 meaning of the figure as a fatgraph rather than simply as a graph in its usual interpretation.  The dotted line indicates a typical boundary component of the 
 corresponding surface.

The second  and fourth illustrations from the top in Figure~\ref{fig:alphanbeta} depict our fatgraph models of an {\it anti-parallel beta strand} and a {\it 
parallel beta strand}, respectively, which are again defined by the indicated pattern of hydrogen bonding and the orientations along the backbone from the $N$
to $C$ termini indicated by the arrows in the figure.  Again, it is well-known for proteins \cite{Finkelstein} that a beta strand, whether parallel or 
anti-parallel, has an alternating\footnotemark[\value{footnote}] plus/minus sequence $+~-~+~-~+$ or $-~+~-~+~-$.
Again, these are the standard graphical depictions of beta strands but now with our enhanced fatgraph interpretation, and the dotted lines indicate typical 
boundary components of the corresponding surface.

Consider the effect of a change of single configuration type in the plus/minus sequence, from $+$ to $-$ or $-$ to $+$, on the backbone between these two 
backbone snippets as depicted in the third and fifth illustrations from the top in Figure~\ref{fig:alphanbeta}.  It follows from the definition of twisting in 
$K(P)$ that the vertical edges corresponding to hydrogen bonds will now be twisted.   The boundary components in the second and fourth illustrations from the 
top persist in the third and fifth illustrations, respectively, in accordance with Corollary~\ref{risrobust}.  Indeed, an odd number of  changes of 
configuration types in the backbone between the two backbone snippets will produce the analogous result, and an even number leaves the figure unchanged.

Let us also clarify a point about anti-parallel beta strands.  It is {\sl not necessarily the case} that the second and third illustrations from the top in 
Figure~\ref{fig:alphanbeta}  accurately depict our fatgraph model of an anti-parallel beta strand: it may happen that our model produces the second figure but 
with twisted edges representing the hydrogen bonds in the strand or the third figure without such twisting.  This is because the determination of twisting in 
$K(P)$ depends upon the sign of  $c c'(\vec v\cdot \vec v' +\vec w\cdot \vec w')$, where $(\vec u,\vec v,\vec w)$ and $(\vec u',\vec v',\vec w')$ are the 
3-frames of the peptide units with configurations $c$ and $c'$ corresponding to the endpoints of the edge. Though the oxygen and hydrogen atoms involved in the 
hydrogen bond are within a few Angstroms, the configurations $c,c'$ may not reflect this, and furthermore, the sign of $c c'(\vec v\cdot \vec v' +\vec w\cdot 
\vec w')$ depends not only on $c$ and $c'$, but also on {\sl both} of $\vec v\cdot \vec v' $ and $\vec w\cdot \vec w'$.  This leads naturally to the notion of
``{untwisted anti-parallel beta strands}'', namely, those for which Figure~\ref{fig:alphanbeta}  is accurate, and ``{twisted anti-parallel beta strands}'', 
those for which it is not.
In contrast, alpha helices and parallel beta strands {\sl are always} represented as in Figure~\ref{fig:alphanbeta}.

In short, the passage from graph to fatgraph enhances the usual graphical depiction of alpha helices and beta strands.  Changes of configuration type away from 
the alpha helices and beta strands leaves undisturbed the boundary components of the surface associated to the fatgraphs which model them.  Furthermore, the 
distinction between twisted and untwisted anti-parallel beta strands is new and depends upon modeling the backbone as a fatgraph rather than merely as a graph.

\bibliographystyle{amsplain}

\begin{thebibliography}{999}

\bibitem{Folding}
B.\ Alberts, A.\ Johnson, J.\ Lewis, M.\ Raff, K.\ Roberts, P.\ Walters,
{\it The Shape and Structure of Proteins, Molecular Biology of the Cell},
Garland Science (2002).

\bibitem{Swiss-Prot}
A.\ Bairoch, ``Serendipity in bioinformatics, the tribulations of a Swiss bioinformatician through exciting times'', {\it Bioinformatics} {\bf 16} (2000),
48-64.

\bibitem{PDB}
H.\ M.\ Berman, J.\ Westbrook, Z.\ Feng, G.\ Gilland, T.\ N.\ Bhat, H.\ Weissig,
I.\ N.\ Shindyalov, P.\ E.\ Bourne, ``The Protein Data Bank'', {\it Nucl.\ Acids Res.}
{\bf 28} (2000), 235-242.

\bibitem{BIZ}
Bessis D., Itzykson C., Zuber J.-B.,
"Quantum field theory techniques in graphical enumeration''
{\it Adv.\ Appl.\ Math.} {\bf  1} (1980), 109-157.

\bibitem{Boom}
W.\ Boomsma, K.\ V.\ Mardia, C.\ C.\ Taylor, J.\ Ferkinghoff-Borg, A.\ Krogh, T.\ Hamelryck,
``A generative, probabilistic model of local
protein structure'',
{\it Proc.\ Natl.\ Acad.\ Sci.\ USA} {\bf 105} (2008), 8932-8937.

\bibitem{Liegroups}
N.\ Bourbaki,
{\it Elements of mathematics. Lie groups and Lie algebras},
Addison-Wesley  (1975).

\bibitem{Brezinetal}
{\it Applications of Random Matrices in Physics},
eds. \'E.\ Br\'ezin, V.\ Kazakov, D.\ Serban, P.\ Wiegmann, A.\ Zabrodin.
{NATO Science Series} {\bf 221} (2006).

\bibitem{Connections}
R.\ W.\ R.\ Darling, {\it Differential Forms and Connections},
Cambridge University Press (1994).

\bibitem{Finkelstein}
A.\ V.\ Finkelstein and O.\ B.\ Ptitsyn
{\it Protein Physics-a course of lectures},
Academic Press (2002).

\bibitem{Finkelstein-personal}
A.\ V.\ Finkelstein, private communication (2008).

\bibitem{PFAM}
R.\ D.\ Finn, J.\ Tate, J.\ Mistry, P.\ C.\  Coggill, J.\ S.\  Sammut, H.\ R.\  Hotz, G.\  Ceric, K.\  Forslund, S.\ R.\ Eddy, E.\ L.\  Sonnhammer, A.\ Bateman,
``The Pfam protein families database'',
{\it Nucl.\ Acids Res.} {\bf 36} (2008),  281-288.

\bibitem{Gusfield}
D.Gusfield, {\it Algorithms on strings, trees, and sequences: computer science and computational biology}, Cambridge University Press (1997).

\bibitem{Harer-Zagier}
J.~L.\ Harer and D.\ Zagier,
{``The Euler characteristic of the moduli space of curves''},
{\it Invent.\ Math.\  } {\bf 85} (1986), 457-485.

\bibitem{DALI}
L.\ Holm, s.\ Kaariainen, P.\ Rosenstrom, A.\ Schenkel,
``Searching protein structure databases with DaliLite v.3'',
{\it Bioinformatics} {\bf 24} (2008), 2780-2781.

\bibitem{Igusa}
 K.\ Igusa,
{``Combinatorial Miller-Morita-Mumford classes and Witten cycles''},
{\it Algebr.\ Geom.\ Topol.} {\bf 4}  (2004), 473-520.

\bibitem{DSSP}
W.\ Kabsch and C.\ Sander, ``Dictionary of protein secondary structure: pattern recognition of hydrogen-bonded and geometrical features'',
{\it Biopolymers} {\bf 12} (1983), 2577-637.

\bibitem{Kontsevich92}
 M.\ Kontsevich,
{``Intersection theory on the moduli space of curves and the
matrix Airy function''},
{\it Comm.\ Math.\ Phys.} {\bf 147}  (1992), 1-23.

\bibitem{Kortemmeetal}
T.\ Kortemme, A.\ V.\ Morozov , D.\ Baker,
``An orientation-dependent hydrogen bonding potential improves prediction of specificity and structure for proteins and protein-protein complexes'',
{\it J.\ Mol.\ Biol.} {\bf 326} (2003), 1239-59.

\bibitem{Lindaueretal}
K.\ Lindauer, C.\ Bendic, J.\ SŸhnel, ``HBexplore - a new tool for identifying hydrogen bonding patterns in biological macromolecules'', {\it Comput.\ Appl.\
Biosci.} {\bf  12} (1996),  281-289.



\bibitem{Massey}
W.\ S.\ Massey,
{\it Algebraic Topology: An Introduction},
Springer-Verlag (1977).

\bibitem{Mondello}
G.\ Mondello,
``Combinatorial classes
on $\overline{M}\sb {g,n}$ are tautological'',
{Int.\ Math.\ Res.\ Not.}
(2004), 2329-2390.

\bibitem{SCOP}
A.\ G.\ Murzin, S.\ E.\ Brenner, T.\ Hubbard, C.\ Chothia, ``SCOP: a structural classification of proteins database for the investigation of sequences and
structures'', {\it  J.\ Mol.\ Biol.} {\bf 247} (1995), 536-540.

\bibitem{QuantumInvariants}
T.\ Ohtsuki,
{\it Quantum Invariants-
A study of knots, 3-manifolds, and their sets},
World Scientific (2001).

\bibitem{CATH}
C.\ A.\ Orengo, A.\ D.\ Michie, D.\ T.\ Jones, M.\ B.\ Swindells, J.\ M.\ Thornton,
``CATH--A hierarchic classification of protein domain structures'',
{\it Structure} {\bf 5} (1997), 1093-1108.

\bibitem{Penner88}
R.\ C.\ Penner,
 ``Perturbative series and the moduli space of Riemann surfaces",
{\it Jour.\ Diff.\ Geom.} {\bf  27}  (1988),  35-53.

\bibitem{Penner-Waterman}
R.\ C.\ Penner, M.\ S.\ Waterman,
 ``Spaces of RNA secondary structures",
{\it Adv.\ Math.}  {\bf 101}  (1993), 31-49.

\bibitem{Roegen}
P.\ R{\o}gen and B.\ Fain,
``Automatic classification of protein structure
by using Gauss integrals'',
{\it Proc.\ Natl.\ Acad.\ Sci.\ USA} {\bf 100} (2003), 119-124.


\bibitem{Strebel} K. Strebel, {\it Quadratic Differentials}, {Ergebnisse
der Math.\ und ihrer Grenzgebiete}, Springer-Verlag, Berlin (1984).

\bibitem{Hooft}
G. 't Hooft,
``A planar diagram theory for strong interactions''
{\it Nucl.\ Phys.\ B} {\bf 72} (1974),  461-473.

\bibitem{Uni-Prot}
C.\ H.\ Wu, et al., ``The Universal Protein Resource (UniProt): an expanding universe of protein information'', {\it Nucl.\ Acids Res.} {\bf 34} (2006), 1-34.











\end{thebibliography}

\end{document}